\documentclass[10pt]{article}
\usepackage{jheppub}
\usepackage[utf8]{inputenc}
\usepackage[english]{babel}
\usepackage[T1]{fontenc}
\usepackage{amsthm}
\usepackage{amsmath}
\usepackage{amssymb}
\usepackage{hyperref}
\usepackage{appendix}
\usepackage{bbm}
\usepackage{bm} 
\usepackage{comment}
\usepackage{xcolor} 
\usepackage{graphicx}
\usepackage{scalerel}
\usepackage[normalem]{ulem}

\newtheorem{Theorem}{Theorem}
\newtheorem{Definition}{Definition}
\newtheorem{Proposition}{Proposition}

\newtheorem{Remark}{Remark}

\newcommand{\red}[1]{\textcolor{red}{#1}}

\newcommand{\orange}[1]{\textcolor{orange}{#1}}
\newcommand{\magenta}[1]{\textcolor{magenta}{#1}}
\newcommand{\bb}[1]{\boldsymbol{#1}}
\newcommand{\mc}{\mathcal}
\newcommand*{\paral}{\stretchrel*{\parallel}{\perp}}

\title{The importance of being empty: Hopfield neural networks with diluted examples}
\author[a,1]{Elena Agliari,}
\author[a,1]{Alberto Fachechi,}
\author[b]{Domenico Luongo}
\affiliation[a]{Department of Mathematics G. Castelnuovo, Sapienza Università di Roma, Italy}
\affiliation[b]{Scuola Normale Superiore, Pisa, Italy}
\affiliation[1]{GNFM-INdAM, Gruppo Nazionale di Fisica Matematica (Istituto Nazionale di Alta Matematica), Italy}
\emailAdd{agliari@mat.uniroma1.it}

\abstract{We consider Hopfield networks, where neurons interact pair-wise by Hebbian couplings built over $i$. a set of definite patterns (ground truths), $ii$. a sample of labeled examples (supervised setting), $iii$. a sample of unlabeled examples (unsupervised setting). 
We focus on the case where ground-truths are Rademacher vectors and examples are noisy versions of these ground-truths, possibly displaying some blank entries (e.g., mimicking missing or dropped data), and we determine the spectral distribution of the coupling matrices in the three scenarios, by exploiting and extending the Marchenko-Pastur theorem.
By leveraging this knowledge, we are able to analytically inspect the stability of the ground truths, as well as the generalization capabilities of the networks. 
In particular, as corroborated by long-running Monte Carlo simulations, the presence of blank entries can have benefits in some conditions, suggesting strategies based on data sparsification; the robustness of these results in structured datasets is confirmed numerically. Finally, we demonstrate that the Hebbian matrix, built on sparse examples, can be recovered as the fixed point of a gradient descent algorithm with dropout, over a suitable loss function.
}


\begin{document}
\maketitle


\section{Introduction}

In the last decades, science and technology have been revolutionized by the advent of modern Artificial Intelligence (AI), whose success is mainly due to the availability of big data at small cost as well as the high computational capabilities of parallel processing systems. This very success, however, poses significant challenges in implementing Machine Learning models able to accomplish the desired tasks, especially given the typical size of benchmark datasets and the extreme complexity of modern AI solutions.
In particular, several strategies have been investigated in the literature in order to reduce the sensitivity of the generalization power of data-driven decision-making systems and high-complexity neural networks on the quality of training data. These strategies aim to optimize the performances of such models through approaches such as data cleaning and augmentation \cite{little2019statistical,shorten2019survey,feng2021survey}, the usage of specific cost-functions \cite{huber1992robust,barron2019general} or algorithms \cite{breiman2001random}, regularization, dropout and noise injection \cite{prechelt2002early,srivastava2014dropout,bishop1995training}, or more sophisticated methods and training procedures \cite{goodfellow2014explaining,chen2020simple,devlin2019bert}. 
Nevertheless, while investigations in this direction are undoubtedly crucial, they are often rather empirical in nature, and in practical applications an extensive fine-tuning of parameters is often required. In fact, a comprehensive and well-controllable theoretical framework, equipped with rigorous mathematical techniques, for interpreting their working principle is still under construction.
\par\medskip
Recently, the Hopfield model -- a prototypical energy-based neural network --  has been revised \cite{EmergencySN,Leuzzi_2022,alemanno2023supervised,negri2024memorization,negri2024random,catania2025theoretical} and consistently analyzed to address the emergence of generalization in a simplified but solid-grounded setting (see also \cite{Fontanari-1990,Litinskii,Theumann}). In particular, in \cite{regularizationdreaming} an unsupervised reconstruction task of hidden patterns with a $L_2$-regularized loss function has been mapped to Hopfield-like neural networks, with the regularization parameter shaping the attraction basins of stored vectors \cite{fachechi2019dreaming}. Consequently, the emergence of the generalization phase can be interpreted as the result of the coalescence of attraction basins associated to training examples, while overfitting occurs for choices of the regularization parameters resolving them. 
In this paper, 
we advance this investigation, focusing on the effect of missing data on the retrieval and generalization capabilities of such networks. The absence of some data entries is assumed as completely random, that is, the probability of missingness is independent of both observed and unobserved variables (this might occur, for instance, when a sensor fails randomly during data collection, or can be made {\it a posteriori} as in dropout-like techniques \cite{srivastava2014dropout}). 
Specifically, we consider the following framework: we have a set of $K$ \textit{patterns}, each corresponding to a different ground-truth feature, represented by a binary vector of length $N$ and denoted as $\boldsymbol \xi^{\mu} \in  \{ -1, +1\}^N$, for $\mu=1,...,K$. The available dataset comprises $M$ different measurements of these features, denoted as $\boldsymbol \xi^{\mu,A}$, for $\mu=1, ...,K$ and $A=1, ...,M$. 
These measurements are affected by a random error, meaning that some entries can be flipped, namely $\xi_i^{\mu,A} = -\xi_i^{\mu}$ or can be missing, namely $\xi_i^{\mu,A} =0$, for some $i, \mu, A$, and are combined into the Hebbian matrix. 
Following \cite{leonelli2021effective,AABD-NN2022,AAKBA-EPL2023}, we envisage two ways to build up the Hebbian matrix, that mirror, respectively, a supervised and unsupervised protocol yielding to the coupling matrices $\bm \Gamma^{s}$ and $\bm \Gamma^{u}$; in the first setting, knowledge of class partitioning of the examples is used to approximate the unknown $\mu$-th pattern through the empirical average of the examples belonging to the $\mu$-th class, and these averages $\bar{\bb \xi}^\mu$ are then stored in the Hebbian matrix: $ \Gamma^s _{i,j}\propto \sum_\mu \bar {\xi}^\mu_i\bar { \xi}^\mu_j$; conversely, in the unsupervised scenario, such knowledge is unavailable, thus single entries are all combined in the coupling matrix as $\Gamma^u_{i,j} \propto \sum_{\mu,A} \xi^{\mu,A}_i \xi^{\mu,A}_j$, so that each training example enters in the energetic landscape with its own attraction basin. As a first result, we analytically inspect the algebraic properties of these matrices, obtaining their spectral distribution. Next, we are able to assess the generalization capabilities of the models in terms of the first and second momenta of these spectral distributions; this approach is also shown to encompass the popular signal-to-noise analysis. In this way, it is possible to analytically highlight that, in the supervised case, dilution always impairs the generalization capabilities, with a stronger effect when the initial datum is farther from the target; conversely, in the unsupervised case, our theoretical results suggest that dilution can enhance the performance as long as the quality of data points is relatively high and the load $K/N$ is relatively large. Remarkably, these predictions are confirmed by Monte Carlo simulations in the random setting and for structured datasets. This supports the claim that the beneficial effect of dilution is a general phenomenon occurring in attractor neural networks with Hebbian couplings built on empirical data, leading to superior generalization capabilities and suggesting the development of targeted dropout strategies.
\par\medskip
The paper is organized as follows. 
In Sec.~\ref{sec:model} we present the model and the datasets, distinguishing between basic storing, supervised and unsupervised protocols. In Sec.~\ref{sec:1step}, we introduce key quantities for assessing the model performance and we provide preliminary insights on its generalization capabilities. Next, in Sec.~\ref{sec: spectra} we apply the Marchenko-Pastur theorem to determine the asymptotic spectral distribution of the coupling matrices in all the three protocols, namely $\bm \Gamma$, $\bm \Gamma^s$, and $\bm \Gamma^u$, and relate the momenta of these distributions to pattern stability and generalization capabilities, thus providing quantitative predictions. 
Our non-trivial findings concerning the generalization capabilities in the unsupervised setting are further inspected numerically in Sec.~\ref{sec:numerics}, where the benefits of sparsification in the training data is deepened. Finally, Sec.~\ref{sec:final} contains our conclusions and outlooks. 
Lengthy proofs of theorems, technical details and further insights on the mechanism underlying dataset dilution are collected in the Appendices.

\section{Setting and definitions} \label{sec:model}
\subsection{Attractor neural networks}
We consider a fully-connected associative neural network (ANN) made of $N$ binary neurons whose state is denoted by a binary variable $\sigma_i \in \{ -1, +1\}$ for $i=1, ..., N$; the overall neural configuration is represented by the vector $\boldsymbol \sigma = (\sigma_1, ..., \sigma_N) \in \{-1, +1\}^N$. In general, $\boldsymbol \sigma$ is a dynamic variable and it is iteratively updated according to the \textit{parallel} evolution rule
\begin{equation} \label{eq:dynamics}
    \boldsymbol \sigma^{(n+1)} = \textrm{sgn} (\boldsymbol J \cdot\boldsymbol \sigma^{(n)} ),
\end{equation}
where the superscript $n$ counts the number of updating steps and $\boldsymbol J$ is a $N \times N$ real matrix encoding the synaptic efficiencies.
The previous equation is equivalent to say that each single neuron at the $(n+1)$-th step takes the value $\sigma^{(n+1)}_i = \textrm{sgn} (\sum_j J_{i,j} \sigma^{(n)}_j )$. The matrix entry $J_{i,j}$ therefore represents the strength by which the $j$-th neuron influences the state of the $i$-th neuron. 
This iterative updating can be interpreted in terms of a (infinitely deep) multi-layer neural network, where the $n$-th layer encodes for the configuration $\boldsymbol \sigma^{(n)}$ and the information propagates from one layer to the next according to the rule \eqref{eq:dynamics}. 
Specifically, $\boldsymbol{\sigma}^{(0)}$ is interpreted as the \textit{input} of our neural network, and we are interested in the related output $\boldsymbol{\sigma}^{(T)}$ for some large $T$ (\emph{vide infra}) or, more precisely, in the input-output mapping performed by the machine.
%
To this purpose, it is worth recalling that (see e.g., \cite{Picco,Coolen}), if $\boldsymbol{J}$ is a symmetric matrix with non-negative diagonal entries 
, the dynamics described in eq.~\eqref{eq:dynamics} admits the bounded Lyapunov function $L(\boldsymbol{\sigma}) = -\sum_{i} | \sum_j J_{i,j} \sigma_j |$, which implies that the dynamics converges to a {fixed point}, i.e. a configuration invariant under the dynamics \eqref{eq:dynamics}, or to a period-2 {limit cycle}.
%
%
%
As the dynamics converges to a fixed point $\boldsymbol{\sigma}^{(\infty)} := \lim_{n \to \infty} \boldsymbol\sigma^{(n)}$, we will interpret it as the \emph{output} of the ANN.
 
\subsection{Synthetic dataset, ground-truths and dilution}\label{sec:2.2}
Let us suppose to have $K$ patterns of information, also referred to as \textit{archetypes} or \textit{ground-truths}, codified by binary vectors of length $N$ and denoted by $\boldsymbol{\xi}^\mu =(\xi_1^{\mu}, \dots, \xi_N^{\mu})  \in \{-1,+1\}^N$, $\mu = 1,\dots, K$. This is the information that we want to retrieve and we state that a pattern, say $\boldsymbol \xi^{\mu}$, is retrievable by the network if the configuration $\boldsymbol \sigma = \boldsymbol \xi^{\mu}$ is a fixed point for the dynamics \eqref{eq:dynamics}, with a non-vanishing attraction basin $\mathcal B_{\mu}$. The configurations belonging to the attraction basin $\mathcal B_{\mu}$ are interpreted as input configurations that prompt the retrieval of the pattern $\boldsymbol \xi^{\mu}$. Thus, the larger the cardinality $|\mathcal B_{\mu}|$ and the more attractive the pattern $\boldsymbol \xi^{\mu}$ is. On the other hand, if $|\mathcal B_{\mu}| =1$, $\boldsymbol \xi^{\mu}$ is an unstable fixed point, which is useless for retrieval purposes. We also introduce the \textit{load}, defined as $\alpha_N := K/N$, representing the amount of information that we aim to store in the network, given a certain amount of resources (i.e., of neurons).
In the limit of large $N$, we will focus on the so-called ``high-storage'' regime, where $(0, \infty) \ni \alpha: = \lim_{N \to \infty} \alpha_N$.
\par\medskip
As far as the statistical properties of patterns are concerned, we will conduct our analysis under a controllable setting, that is, we will assume that $\{\xi^\mu_i\}_{\mu, i}$ is a class of i.i.d. random variables, with symmetric Rademacher distribution $Rad\left(0\right)$, thus:
\begin{equation} \label{eq:P_xi}
    \mathcal{P}(\xi^{\mu}_{i})=\frac{1}{2} (\delta_{\xi^{\mu}_{i},+1}+\delta_{\xi^{\mu}_{i}, + 1}).
\end{equation}

Recalling \eqref{eq:dynamics},  
we notice that, in order for each of the patterns $\boldsymbol \xi$ to be a fixed point, the interaction matrix $\bb J$ must be a function of the whole set of patterns. As long as the patterns $\boldsymbol \xi$ are directly accessible, a standard choice of the coupling matrix is provided by the following\footnote{We use the symbol $\bb J$ to denote a generic coupling matrix and the symbol $\bb \Gamma$ for coupling matrices exhibiting a Hebbian structure. Different prescriptions for the coupling matrix (leading to different ANN models) can be considered, possibly yielding to larger affordable loads, such as Kohonen's projection matrix \cite{Kohonen-1972} $\boldsymbol{X} (\boldsymbol{X}^T \boldsymbol{X})^{-1} \boldsymbol X^T$ which reaches the capacity upper bound $\alpha =1$ \cite{Gardner} and the more general dreaming kernel $\boldsymbol{X} (\textbf I + t) (\textbf I + t \boldsymbol{X}^T \boldsymbol{X})^{-1} \boldsymbol{X}^T$, where $t \in \mathbb R^+$ is a tuneable parameter and which saturates to $\alpha=1$ as $t \to \infty$ \cite{FAB-NN2019, 
AABF-JStat2019,LAD,regularizationdreaming}.}
\begin{equation} \label{basicstoringmatrix}
    \Gamma_{i,j} = \frac{1}{N} \sum^K_{\mu=1} \xi^\mu_i \xi^\mu_j, \quad \forall i,j \in \{1,\dots,N\},
\end{equation}
which can be rewritten in matrix form as $\boldsymbol \Gamma =N^{-1} \boldsymbol{X}  \boldsymbol{X}^T,$ where $\boldsymbol{X}$ is a $N \times K$ matrix, with entries $  X_{i,\mu} = \xi^\mu_i$, $\forall i\in\{1,\dots,N\}, \ \forall \mu\in\{1,\dots,K\}$. This choice for $\boldsymbol J$ is inspired by the empirical Hebb's rule \cite{Hebb-1949, hopfield1982neural}, and ensures the retrieval up to a load $\alpha \approx 0.138$ at zero temperature \cite{AGS}. This framework represents an ideal setting, which we refer to as {\it basic storing}. 
\par\medskip
In real applications, we expect to be able to rely only on experimental, noisy realizations of archetypes, that we call \emph{examples}.
In order to have mathematical control, we define $M\in\mathbb{N}^+$ synthetic examples $\{\boldsymbol{\xi}^{\mu,A}\}_{A=1,\dots,M}$ for each archetype $\boldsymbol{\xi}^{\mu}$, as $\xi^{\mu,A}_i = \xi^\mu_i \chi^{\mu,A}_i,$ with $i\in \{1,\dots,N\}$, $\mu\in\{1,\dots,K\}$, and $A\in\{1,\dots,M\}$, and where $\chi^{\mu,A}_i$ encodes the \textit{error} in the $i$-th bit of the $A-$th example related to the $\mu-$th class. The $\{\chi^{\mu,A}_i\}_{\mu, A, i}$ are modeled as i.i.d. random variables, independent of the archetypes, with probability
\begin{equation} \label{eq:P_chi}
    \mathcal{P}(\chi^{\mu, A}_{i})=(1-d)\frac{1+r}{2}\delta_{\chi^{\mu, A}_{i},+1}+(1-d)\frac{1-r}{2}\delta_{\chi^{\mu, A}_{i},-1}+d\delta_{\chi^{\mu, A}_{i},0}.
\end{equation}
Thus, $d \in (0,1)$ is the probability that the entry of any given example is $0$ and this is interpreted as a missing datum; in other words, $d$ represents the expected fraction of blank entries and will be referred to as \emph{dilution} or \emph{sparseness} (e.g., see \cite{agliari2012multitasking, agliari2013parallel, AABCT-JPA2013a, AABCT-JPA2013b,AABR-JStat2023}). 
Hence, with probability $1-d$ the example entry is non-zero, but still it can be affected by an error (i.e., $\xi^{\mu,A}_i \neq \xi^\mu_i$) with probability $\frac{1-r}{2}$. For this reason, $r$ is named \textit{quality} of the dataset \cite{Fontanari-1990,AABD-NN2022,alemanno2023supervised}. With this definition, the two-entry correlation in each example is $\mathbb E[ \xi^{\mu,A}_i \xi^{\mu,A}_j]=\delta_{ij}$, and, given two examples $\boldsymbol \xi^{\mu,A}$ and $\boldsymbol{\xi}^{\nu,B}$, they are independent if $\mu \neq \nu$ and correlated if $\mu = \nu$, due to the fact that they stem from the same ground-truth $\boldsymbol \xi^\mu$. 
%
%
\par\medskip
Given a sample of examples, there may be a teacher who classifies examples according to the corresponding archetype and we refer to this setting as {\it supervised}. Exploiting such information, we can determine class-wise means as
\begin{equation*}
   \boldsymbol{\Bar \xi}^\mu := \frac{1}{M} \sum^M_{A=1} \boldsymbol{\xi}^{\mu,A} \in \left[-1, + 1\right]^N, ~ \mu=1, ...,K
\end{equation*}
and identify these as empirical representations of the ground-truth $\boldsymbol \xi^\mu$. Hence, in this case, we define the supervised Hebbian matrix as
\begin{equation} \label{supervisedhebbmatrix}
    \Gamma^s_{i,j} = \frac{1}{N} \sum^K_{\mu=1} \Bar\xi^\mu_i \Bar\xi^\mu_j,\quad \forall i,j \in \{1,\dots,N\},
\end{equation}
 or, in matrix form, as $\boldsymbol{\Gamma}^s =N^{-1} \Bar{\bb{X}} \Bar{\bb{X}}^T$, where $\Bar{\bb{X}}$ is a $N \times K$ matrix, storing all the means of the examples, thus $\Bar X_{i,\mu} = \Bar\xi^\mu_i$ for all $i\in\{1,\dots,N\}$ and $\mu\in\{1,\dots,K\}$.\\

In the end, we deal with the {\it unsupervised} setting, where class labels remain disclosed and we cannot cluster examples. Thus, the least biased interaction matrix (see App.~\ref{app:dropout} and \cite{regularizationdreaming}) is given by the following unsupervised Hebbian matrix that reads as
\begin{equation} \label{unsupervisedhebbmatrix}
    \Gamma^u_{i,j} = \frac{1}{P} \sum_{\mu, A} \xi^{\mu,A}_i \xi^{\mu,A}_j,\quad \forall i,j \in \{1,\dots,N\},
\end{equation}
where $P=MN$, namely $\boldsymbol{\Gamma}^u = P^{-1} \bb{X}^\text{ex} \bb{X}^{\text{ex}^T}$, where $\bb{X}^\text{ex}$ is a $N \times KM$ matrix, storing all the single examples $X^\text{ex}_{i,(\mu, A)} = \xi^{\mu,A}_i$, for $i=1,\dots,N$, $\mu =1,\dots,K$, and $A=1,\dots,M$.
\par\medskip
Before concluding this section, we emphasize that the setting considered here is slightly different than that explored in \cite{AABR-JStat2023}, where ground-truth patterns themselves were affected by dilution and the noise in examples could preserve the arrangement of blank entries or possibly replace them with random binary entries. In that scenario, the dilution in ground-truth patterns was shown to give rise to ``parallel learning'', namely the simultaneous learning of multiple patterns. 

Further, it is worth specifying that the sparsity in the examples can be an intrinsic feature of the empirical realizations of the ground patterns, or it can derive from external operation, e.g., mimicking the dropout technique, as discussed in the App. \ref{app:dropout}. In the former, we expect that dilution affects both training and validation examples, while, in the latter, we expect that it acts on the level of training examples only, and thus it is only involved in the construction of the coupling matrix, while validation examples will only be characterized by the quality parameter $r$. In this paper, we will pursue the second picture. 


\section{1-step stability, generalization and signal-to-noise} \label{sec:1step}

In this section, we inspect the generalization capabilities of a neural network endowed with the supervised and unsupervised Hebbian interaction matrix. 
Let us consider a certain target pattern $\boldsymbol{\xi}^\mu$ and another configuration denoted as $\boldsymbol x$, lying on the boundary of the Hamming ball $\mathcal B_R (\boldsymbol \xi^{\mu})$ centered in $\boldsymbol{\xi}^\mu$ and with (normalized) radius $R$. These boundaries can be realized by perturbing $\boldsymbol{\xi}^\mu$ as $\boldsymbol x = \boldsymbol \eta \odot \boldsymbol{\xi}^\mu$, 
with $\eta_i\underset{i.i.d.}\sim Rad(p), ~\forall i \in\{1,...,N\}$. Indeed, choosing $p=1-2R$,
\begin{equation*}
    d_H({\boldsymbol x},\boldsymbol{\xi}^\mu) = \frac{1}{2N}\sum_i (1-\eta_i)\underset{N \gg 1}{\approx}\frac12 (1-p) = R.
\end{equation*} 
Then, we can take $\boldsymbol x$ as initial configuration $\boldsymbol \sigma^{(0)} = \boldsymbol x$ and apply \eqref{eq:dynamics} to get $\boldsymbol \sigma^{(1)}$. Next, multiplying both sides of the evolution relation by $\xi_i^{\mu}$, we obtain 
\begin{equation*}
    \sigma_i^{(1)} \xi_i^{\mu} = \textrm{sgn} (\sum_{j=1}^N   J_{i,j}  \xi_j^{\mu} \eta_j \xi_i^{\mu}), ~ i=1,..., N.
\end{equation*}
A positive argument of the sign function means that, in a single step, the state of the $i$-th neuron has remained (or has flipped to get) equal to the target $\xi_i^{\mu}$. 
It is therefore convenient to introduce the random variable
\begin{equation}
		\label{eq:pattern_attract_mod}
		\Delta_i (\boldsymbol \xi^{\mu}, \boldsymbol \eta)=\sum_{j=1}^N J_{i,j} \xi_j^{\mu} \eta_j \xi_i^{\mu},
	\end{equation}
    which shall be specialized as follows: 
\begin{itemize}
    \item $p=1$. In this case, the initial state coincides with $\boldsymbol \xi^\mu$, and $\Delta_i (\boldsymbol \xi^\mu, \boldsymbol \eta)\equiv \Delta_i (\bb \xi^\mu)$ tells us whether the state of the neuron $i$ is invariant $(\Delta_i (\boldsymbol \xi^\mu)>0)$ or not $(\Delta_i (\boldsymbol \xi^\mu)<0)$. We will refer to $\Delta_i(\bb \xi^\mu) $ as the {\it local stability} of the pattern $\bb \xi^\mu$.
    \item $p=r$. In this case, the initial state $\bb x=\bb \eta \odot \bb \xi^\mu$ shares the same statistics (apart from dilution) as the training points $\bb\xi^{\mu,A}$, so that we refer to it as a test (or validation) example. Then, $\Delta_i (\boldsymbol \xi^\mu, \boldsymbol \eta)$ tells us whether, starting from a test example, the neuron $i$-th gets aligned ($\Delta_i (\boldsymbol \xi^\mu, \boldsymbol \eta)>0$) or not ($\Delta_i (\boldsymbol \xi^\mu, \boldsymbol \eta)<0$) with the corresponding entry of the pattern $\bb\xi^\mu$. The former means that the model is exactly reconstructing the pattern at site $i$, and we accordingly refer to $\Delta_i (\bb\xi^\mu,\bb\eta)$ as the {\it local generalization}.
\end{itemize}

Operatively, it is more convenient to work with global and bounded observables and, to this aim, we introduce the configuration overlaps, also known as Mattis magnetizations (related to the pattern $\boldsymbol \xi^{\mu}$), evaluated at time steps $n=0,1$: 
\begin{eqnarray}
\label{eq:m1}
    m^{(0)}(\boldsymbol 
    \xi^{\mu},\boldsymbol \eta) &=&  \frac{1}{N}\sum_{i=1}^N \xi_i^{\mu} \sigma_i^{(0)} = \frac{1}{N} \sum_{i=1}^N \eta_i,\\
    \label{eq:m2}
m^{(1)}(\boldsymbol \xi^{\mu},\boldsymbol \eta) &=&\frac{1}{N} \sum_{i=1}^N \xi_i^{\mu} \sigma_i ^{(1)} = \frac{1}{N} \sum_{i=1}^N \textrm{sgn} [\Delta_i(\boldsymbol \xi^{\mu},\boldsymbol \eta)].
\end{eqnarray}
If $p=1$, the initial condition lies in the pattern itself (e.g. $m^{(0)}=1$), and $m^{(1)}$ corresponds to the overlap of the network configuration w.r.t. $\bb\xi^\mu$ after a single update: if $m^{(1)}<1$, the neural dynamics drives the system away from the initial condition, and thus the pattern is not stable. For this reason, at $p=1$ we will refer to $m^{(1)}(\bb \xi^\mu)$ as (global) stability of $\bb\xi^\mu$. Conversely, if $p=r$ the initial condition is a test examples, and we say that $\boldsymbol \xi^\mu$ is attracting ${\boldsymbol\sigma}^{(0)}$ if $m^{(1)}>m^{(0)}$, thus justifying the term (global) generalization. 
\par\medskip
At the analytical level, linking local and global observables and obtain closed-form expressions for the 1-step Mattis magnetizations is hard, unless working assumptions are introduced. In the following, we will carry out our computations under the definition below:
\begin{Definition}[Gaussian approximation] \label{def:GA}
    Within the Gaussian approximation (GA), we assume that the $\Delta_i$'s are i.i.d. Gaussian random variables, i.e. $\Delta_i\sim_{i.i.d.} \mathcal N(\mu_1,\mu_2-\mu_1^2)$. A discussion on the validity of this approximation is provided in \cite{agliari2024spectral}. 
\end{Definition}
\begin{Remark}
    The quantities $\mu_1$ and $\mu_2$ in the previous definition clearly stands for the first two non-centered momenta of the $\Delta_i$ distribution. A crucial point is that, within the GA and in the thermodynamic limit, these two momenta can be analytically estimated and turn out to be functions of the model parameters (that is, $\alpha$, $M$, $r$ and $d$) and of the probability $p$ only.
\end{Remark}
Within the GA and in the thermodynamic limit, the 1-step overlap w.r.t. the reference configuration $\boldsymbol \xi^{\mu}$ does not depend on $\boldsymbol{\eta}$ but directly on the parameter $p$; then we can define it as
\begin{equation}
    m^{(1)}\left(\boldsymbol \xi^{\mu}, p\right) := \lim_{N \to \infty} m^{(1)}(\boldsymbol x, \boldsymbol \eta).
\end{equation}
Using again the GA and in the thermodynamic limit, we have
\begin{align} \label{eq:general_m1_bis}
    m^{(1)}\left(\boldsymbol \xi^{\mu}, p\right) &= \lim_{N \to \infty}\frac1N \sum_{i=1}^N \textrm{sgn} [ \Delta_i(\boldsymbol \xi^{\mu},\boldsymbol \eta)] = 
2\mathcal{P}(\Delta_i \geq 0) - 1 = \textrm{erf} \Bigg(\frac{\mu_1}{\sqrt{2(\mu_2 - \mu_1^2)}}\Bigg).
\end{align}

\begin{Remark}
    This approach is closely related to the signal-to-noise technique \cite{Amit}. In particular, the requirement that the local pattern stability in the basic storing setting is positive for all $i=1,\dots,N$ precisely reduces to the stability criterion $h_i(\boldsymbol \xi^\mu) \xi^\mu_i >0$, again for all $i=1,\dots,N$, with $h_i(\boldsymbol \xi^\mu)$ being the internal field acting on the $i$-th neuron when the system is prepared in the configuration $\bb\xi^\mu$. In the signal-to-noise method, the left-hand-side of the previous inequality is split into a \emph{signal} and \emph{noise} contributions which tend, respectively, to stabilize and perturb the neuron configuration, and one aims to find the condition under which the former prevails over the latter. This procedure is perfectly equivalent to compare the first and second momenta of the local stability. In fact, requiring $m^{(1)}(\boldsymbol \xi)$ to be larger than a certain threshold, specifically $m^{(1)}(\boldsymbol \xi) >  \textrm{sign}( \frac{1}{\sqrt{2}}) \approx 0.68$, we get $\mathbb E(\Delta_i )/\sqrt{\text{Var}(\Delta_i)}>1$, which is nothing but the requirement $\mu_1 > |(\mu_2 - \mu_1^2)|$,
    with the numerator and denominator in the last inequality interpreted as, respectively, signal and noise. We will deepen this point in App. \ref{app:snr}.
\end{Remark}
\begin{figure}[tb]
    \centering
    \includegraphics[width=\linewidth]{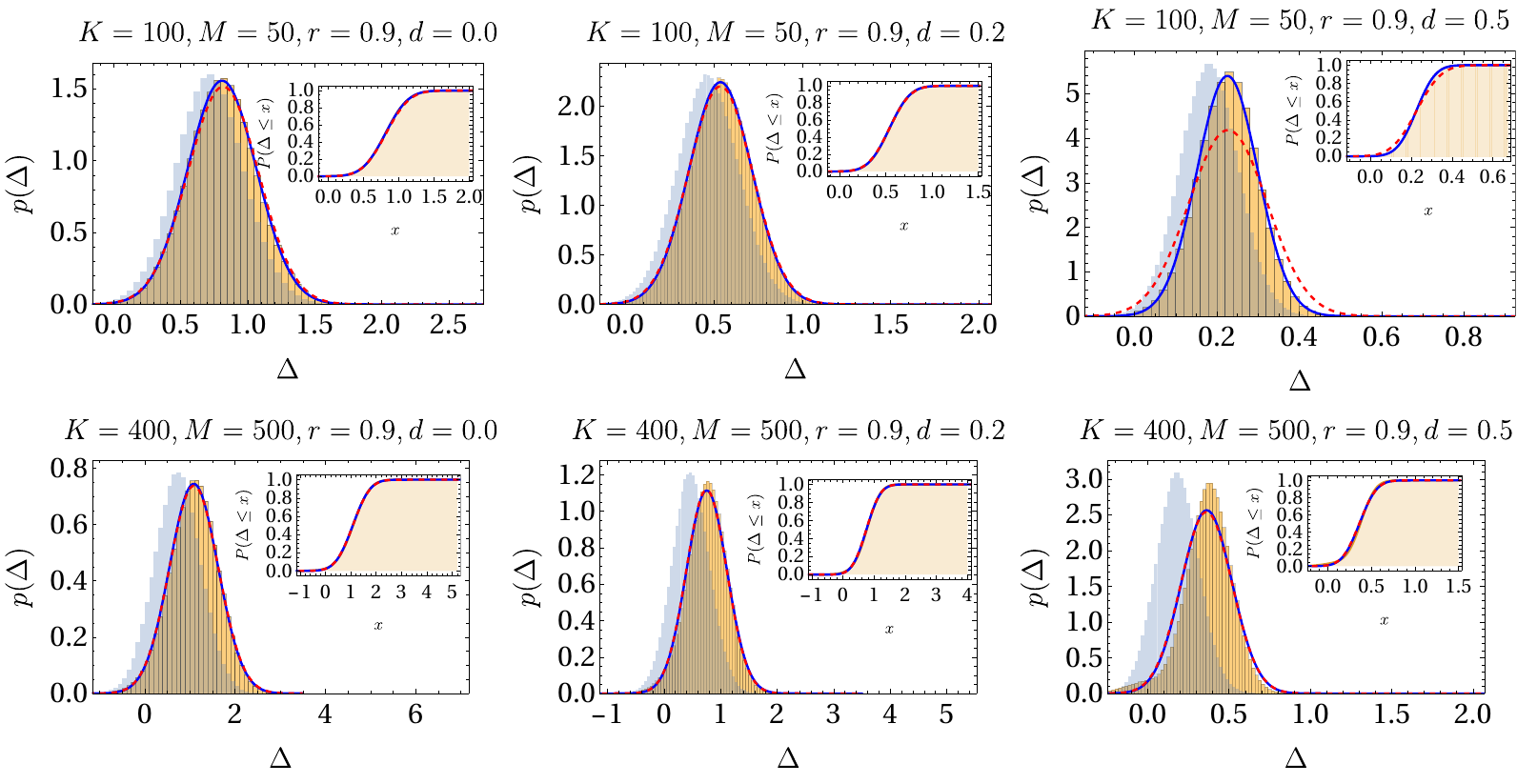}
    \caption{{\bfseries Empirical distributions for the local generalization in the unsupervised setting.} 
    The plots show the empirical distributions for the 1-step local generalization in the unsupervised setting for $K=100$ and $M=50$ (first row) and $K=400$ and $M=500$ (second row), with the quality fixed to $r=0.9$. The values of the dilutions are $d=0$ (first column), $d=0.2$ (second column) and $d=0.5$ (third column). Specifically, the main plots report the histograms of the 1-step local generalization, including or removing the diagonal from the coupling matrix (resp. the yellow and blue histograms). The blue and red dotted curves are Gaussian distributions whose parameters are resp. fitted to the data (blue curve) or estimated by the theoretical predictions with the spectral theory as given by Prop. \ref{attractiveness_prop}. The insets finally report a similar comparison performed over the (again empirical, fitted and theoretical) cumulative distribution functions. In this case, the setting without the diagonal is not reported.
    }
    \label{fig:empirical}
\end{figure}
Before proceeding further, we present some numerical results on the empirical distributions of the 1-step local generalization within the unsupervised scheme\footnote{ Here we focus on the unsupervised setting as this is expected to be (and turns out to be, see Sec.~\ref{sec: spectra}) the most intriguing one among those studied in this work.},
see Fig. \ref{fig:empirical}. First, we notice that the presence of the diagonal (yellow histograms) in the coupling matrix shifts the local activity towards positive values, thereby increasing the probability $P(\Delta \ge0)$ and, consequently, enhancing global generalization.\footnote{This is actually not a surprise since even in the usual Hopfield model the presence of the diagonal enhance, for instance, the local stability of the patterns, and the global stability is $m=\text{erf}((1+ \alpha)/\sqrt{2 \alpha}))$ rather than $m=\text{erf}(1/\sqrt{2 \alpha}))$ without the diagonal, see also \cite{Vinci-2025}.} The plots also highlight the range of validity of the GA (blue curves in the plots representing Gaussian fits to the data): for sufficiently small $K$ the approximation works well, however, at large $K$ and high degree of dilution in the dataset, the local generalization deviates for the Gaussian behavior. This is especially evident by looking at the left tail of the empirical distribution, reflecting an underestimation of the frequency of negative occurrences. As a result, the global generalization tends to be overestimated. Despite this mismatch, the GA remains useful as a qualitative understanding of the retrieval capabilities. As we will discuss later, the theoretical predictions based on this approximation still successfully capture the key phenomenology associated with dataset dilution.
\par\medskip
Further, recalling that we operate at zero temperature and even a low but non-zero value for $\Delta$ can allow reconstruction, 
these plots evidence how dilution affects the network performance (see also the discussion in App. \ref{app:snr}). In fact, when $\alpha$ is small, the broadness of the distribution — caused by interference from stored patterns other than the target — is relatively limited, such that $\Delta$ remains positive and the target reconstruction is ensured. In this regime, increasing $d$ can have detrimental effects: it shifts the peak of the distribution linearly w.r.t. $1-d$ to the left, while the standard deviation scales only with the square root of $1-d$, potentially leading to reconstruction failure. When $\alpha$ is large, the situation becomes more complex, with $d$ and $r$ interplaying. Specifically, as $d$ increases, both the mean and standard deviation of $\Delta$ decrease, but at different rates that depend on $r$, leading to distinct regimes. When $r$ is small, increasing $d$ causes the peak to move leftward faster than the spread shrinks, thereby hindering reconstruction. Conversely, when $r$ is large, increasing $d$ leads to a decrease in the coefficient of variation, which facilitates retrieval. These remarks are confirmed by the analysis carried on in the next section.

%

\section{Spectra and momenta of coupling matrices} \label{sec: spectra}
Having expressed the global stability and generalization as a nonlinear function of the first two momenta $\mu_{1,2}$ of the local quantities (see eq.~\ref{eq:general_m1_bis}), we now proceed to link the latter to the spectral properties of the coupling matrices $\mathbf \Gamma$, $\mathbf \Gamma^s$ and $\mathbf \Gamma^u$ (see eqs.~\eqref{basicstoringmatrix},\eqref{supervisedhebbmatrix}-\eqref{unsupervisedhebbmatrix}), which turn out to be captured by a modified Marchenko-Pastur law.
In particular, in the basic storing and supervised settings, we can directly apply the Marchenko-Pastur theorem as the coupling matrices are of Wishart form,\footnote{A Wishart matrix is a random matrix of the form $\frac{1}{K}\bb Y \bb Y^T$, where $\bb Y$ has i.i.d. entries. In this paper, we will often refer to matrices with Gram's form as Wishart-like matrices to stress their structure of covariance matrices, even though the entries are not independent.} while in the unsupervised setting we will rely on some approximations. 
Specifically, the supervised setting constitutes a relatively simple generalization of the basic storing setting, as the empirical means $\bar{\bb \xi}^\mu$ are promoted to representatives of the class associated to $\bb \xi^\mu$, and
\begin{equation*}
    \mathbb{E}[\sqrt\alpha \Bar\xi^\mu_i] = 0, \quad \mathbb{E}[(\sqrt\alpha \Bar\xi^\mu_i)^2] = \alpha (1-d) \Big((1-d) r^2 + \frac{1-(1-d)r^2}{M}\Big).
\end{equation*}
Conversely, in the unsupervised setting, the coupling matrix is not of Wishart form, due to the presence of the kernel $\frac1M \sum_{A} \chi^{\mu,A}_i \chi^{\mu,A}_j$. This implies that the factors $\bb X^{\textrm {ex}}$ display identically distributed but not independent entries, as the examples related to the same archetype are correlated. For this reason, in the unsupervised setting we introduce the Approximate Factorization Method (AFM), to compute \textit{approximately} the spectral distribution of this coupling matrix and capture the crucial scalings between the relevant parameters, which we define as following
\begin{Definition}[Approximate Factorization Method] \label{def:afm}
    Let $\{z^\mu_{i,j}\}_{\mu,i,j}$ be a class of random variables, where each couple $z^\mu_{i,j}$ and $z^\nu_{k,l}$ are independent if $\mu \neq \nu$ or $i\neq k$, $i \neq l$, $j \neq k$ and $j \neq l$. Under the AFM we assume that there exists a class of i.i.d random variables $\{\phi^\mu_i\}_{\mu,i}$, such that
    \begin{equation}
        z^\mu_{i,j} \sim \phi^\mu_i \phi^\mu_j, \quad \forall \mu, \forall i \neq j.
    \end{equation}
\end{Definition}
Clearly, such a claim does not hold in general, yet in the unsupervised setting, where we are interested in the large (but finite) $M$ behavior, this can be a meaningful approximation. Specifically, in our context applying the AFM to the Hebbian kernel means that we assume the existence of a family of i.i.d. random variables $\{\phi^\mu_i \}$ (with suitable first and second momenta) such that, for all $i\neq j$, $\frac1{NM} \sum_{\mu,A} \xi^\mu_i \xi^\mu_j \chi^{\mu,A}_i \chi^{\mu,A}_j\to \frac1N \sum_\mu \xi^\mu_i \xi^\mu_j \phi^\mu_i \phi^\mu_j $ a.s as $M\to\infty$. A more detailed discussion about the validity and application of this approximation is provided in App.~\ref{AFMapp}. With these ingredients, we can state the following:
\begin{Proposition}\label{prop:allspectra}
    Let $\bb J$ be a coupling matrix, and $\mu_{\bb J}$ the empirical spectral distribution, namely
    \begin{equation} 
        \mu_{\boldsymbol{J}}(x) = \frac{1}{N} \# \{j \leq N : \lambda^{\boldsymbol{J}}_j < x \}.
    \end{equation}
    Let $MP(\alpha,\alpha\sigma,s)$ being the modified Marchenko-Pastur distribution with associated probability measure
    $$
    \mathrm d \mu (\lambda) = (1-\alpha) \delta (\lambda-s)\mathrm d \lambda+ \alpha \mathrm d\mu _{bulk}(\lambda),
    $$
    with
    $$
   \mathrm d\mu_{bulk}(\lambda)=\frac1{2\pi}\frac{\sqrt{(\lambda_+ -\lambda)(\lambda-\lambda_-)}}{\alpha \sigma (\lambda-s)}\bb 1_{\lambda \in [\lambda_-,\lambda_+]}\mathrm d \lambda
    $$
    and $\lambda_\pm = \sigma (1\pm \sqrt\alpha)^2+s$
    Then, in the thermodynamic limit $N \to \infty$:
    \begin{itemize}
        \item in the basic storing setting ($\bb J=\bb \Gamma$), $\mu _{\bb \Gamma}$ converges to a modified Marchenko-Pastur distribution $MP(\alpha,\alpha,0)$, that is $\sigma=1$ and $s=0$;
        \item in the supervised setting ($\bb J = \bb \Gamma ^s$), $\mu _{\bb \Gamma^s}$ converges to a modified Marchenko-Pastur distribution $MP(\alpha,\alpha \sigma^s,0)$, that is $s=0$ and
        \begin{equation}\label{sigmasup}
           \sigma = \sigma^s(r,d,M) =(1-d) \Big((1-d) r^2 + \frac{1-(1-d)r^2}{M}\Big); 
        \end{equation}
        \item in the unsupervised setting ($\bb J = \bb \Gamma^u$) and within validity of AFM, $\mu _{\bb \Gamma^u}$ converges to a modified Marchenko-Pastur distribution $MP (\alpha,\alpha \sigma^u, \alpha(1-d-\sigma^u))$, with 
        \begin{equation}\label{sigmau}
            \sigma=\sigma^u (r,d,M)=  \sqrt{(1-d)^4r^4+(1-d)^2\frac{1-(1-d)^2r^4}{M}}.
        \end{equation}
    \end{itemize}
\end{Proposition}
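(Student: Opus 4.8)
The plan is to treat the three settings on a common footing: each coupling matrix is (or, after the AFM, becomes) a Wishart-type matrix $\tfrac1N\boldsymbol Y\boldsymbol Y^T$ with $\boldsymbol Y$ of size $N\times K$ possessing i.i.d. entries, to which the Marchenko--Pastur theorem applies once the entrywise variance $v$ is identified. The single fact I would invoke repeatedly is that, if $\boldsymbol Y$ is $N\times K$ with i.i.d. centred entries of variance $v$ and $K/N\to\alpha$, then the empirical spectral distribution of $\tfrac1N\boldsymbol Y\boldsymbol Y^T$ converges (a.s.) to $MP(\alpha,\alpha v,0)$: a mass $(1-\alpha)$ at the origin (for $\alpha<1$) plus a bulk on $[v(1-\sqrt\alpha)^2,v(1+\sqrt\alpha)^2]$. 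This follows from the usual normalization $\tfrac1N=\alpha\cdot\tfrac1K$ together with $\alpha(1\pm1/\sqrt\alpha)^2=(1\pm\sqrt\alpha)^2$. The whole proof then reduces to (i) verifying the i.i.d. structure of the relevant factor matrix, (ii) computing its entrywise variance, and, in the unsupervised case only, (iii) reducing a genuinely non-Wishart matrix to a Wishart one via the AFM and tracking a diagonal shift.

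For the basic storing matrix $\boldsymbol\Gamma=\tfrac1N\boldsymbol X\boldsymbol X^T$ the factor $\boldsymbol X$ already has i.i.d. Rademacher entries of unit variance, so $MP(\alpha,\alpha,0)$ (that is $\sigma=1,\ s=0$) is immediate. For the supervised matrix the key observation is that $\bar\xi^\mu_i=\xi^\mu_i\,\tfrac1M\sum_A\chi^{\mu,A}_i$ is built from variables that are independent across both $i$ and $\mu$; hence $\bar{\boldsymbol X}$ again has genuinely i.i.d. entries (merely no longer $\pm1$-valued), and the universality of MP lets me apply it verbatim. It only remains to compute $v=\mathbb E[(\bar\xi^\mu_i)^2]$: using $(\xi^\mu_i)^2=1$, $\mathbb E[\chi]=(1-d)r$ and $\mathbb E[\chi^2]=(1-d)$, expanding the square and separating the $M$ diagonal from the $M(M-1)$ off-diagonal terms in the $A$-sum yields exactly $\sigma^s$ of \eqref{sigmasup}, giving $MP(\alpha,\alpha\sigma^s,0)$.

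The unsupervised matrix is the substantive case. Here I would first write $\Gamma^u_{i,j}=\tfrac1N\sum_\mu\xi^\mu_i\xi^\mu_j\,z^\mu_{i,j}$ with kernel $z^\mu_{i,j}=\tfrac1M\sum_A\chi^{\mu,A}_i\chi^{\mu,A}_j$, and note that the columns of $\boldsymbol X^{\mathrm{ex}}$ sharing the same $\mu$ are correlated, so MP cannot be applied as is. For $i\neq j$ I invoke the AFM (Def.~\ref{def:afm}) to replace $z^\mu_{i,j}$ by $\phi^\mu_i\phi^\mu_j$ with $\{\phi^\mu_i\}$ i.i.d., the free moments of $\phi$ being fixed by matching the first two moments of the kernel: $(\mathbb E\phi)^2=\mathbb E[z^\mu_{i,j}]=(1-d)^2r^2$ and $(\mathbb E\phi^2)^2=\mathbb E[(z^\mu_{i,j})^2]$. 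Computing the latter by splitting the double $A,B$-sum into $A=B$ and $A\neq B$ contributions gives precisely $(\sigma^u)^2$ of \eqref{sigmau}, so $\mathbb E[\phi^2]=\sigma^u$. Setting $\psi^\mu_i:=\xi^\mu_i\phi^\mu_i$ (i.i.d., mean $0$, variance $\sigma^u$ since $\mathbb E[\xi]=0$ and $(\xi^\mu_i)^2=1$), the off-diagonal block of $\boldsymbol\Gamma^u$ now coincides with that of the bona fide Wishart matrix $\boldsymbol W=\tfrac1N\boldsymbol\Psi\boldsymbol\Psi^T$, where $\Psi_{i,\mu}=\psi^\mu_i$.

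Finally I would reinstate the diagonal. The true diagonal $\Gamma^u_{i,i}=\tfrac1{MN}\sum_{\mu,A}(\chi^{\mu,A}_i)^2$ concentrates, by the law of large numbers, on $\alpha(1-d)$, whereas the Wishart diagonal $W_{i,i}=\tfrac1N\sum_\mu(\phi^\mu_i)^2$ concentrates on $\alpha\mathbb E[\phi^2]=\alpha\sigma^u$. Hence $\boldsymbol\Gamma^u=\boldsymbol W+\alpha(1-d-\sigma^u)\boldsymbol I+o(1)$, and since adding $s\boldsymbol I$ rigidly translates the spectrum by $s$, applying MP to $\boldsymbol W$ (which gives $MP(\alpha,\alpha\sigma^u,0)$) and shifting by $s=\alpha(1-d-\sigma^u)$ produces $MP(\alpha,\alpha\sigma^u,\alpha(1-d-\sigma^u))$, as claimed. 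I expect the genuine difficulty to reside entirely in step (iii): the AFM is only an approximation, so the delicate point is arguing that replacing the correlated kernel by the rank-one factorized form $\phi^\mu_i\phi^\mu_j$ leaves the limiting ESD unchanged, and likewise that the subleading diagonal fluctuations around $\alpha(1-d)$ and $\alpha\sigma^u$ are spectrally negligible (e.g. via a small-operator-norm or trace/rank perturbation bound). The basic-storing and supervised computations are, by contrast, routine moment bookkeeping.
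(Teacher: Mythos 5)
Your proposal is correct and follows essentially the same route as the paper's proof: apply the Marchenko--Pastur theorem directly to the Wishart-form matrices in the basic storing and supervised cases (after computing the entrywise variance, which you do identically), and in the unsupervised case use the AFM to split $\boldsymbol\Gamma^u$ into a genuine Wishart matrix $\tfrac1N\boldsymbol\Psi\boldsymbol\Psi^T$ plus a diagonal correction that concentrates, by the law of large numbers, on the shift $\alpha(1-d-\sigma^u)$. Your closing caveat about the non-rigorous nature of the AFM replacement matches the paper's own stance, which likewise assumes the AFM and defers its justification to a separate heuristic discussion.
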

\begin{Remark}
The result provided in Prop.~\ref{prop:allspectra} for the supervised setting is an improvement compared to \cite{agliari2024spectral}, as it takes into account the number of examples $M$ without assuming $M\to\infty$.
\end{Remark}
The proof of Prop. \ref{prop:allspectra} 
can be found in App. \ref{app:proof_1}, along with numerical checks to corroborate the validity of these results even at relatively small network and dataset sizes (i.e., $N \sim 10^2$, $M \sim 10$). 
\par\medskip
Once the spectra of the Hebbian matrices are fully characterized, the last theoretical point is to bridge between these spectra (and in particular the momenta of Marchenko-Pastur distribution) and the first two momenta of the 1-step stability and generalization which, under the GA, captures the (short-term, i.e. 1-step) response of the network under neural dynamics, see eq.~\eqref{eq:m1}-\eqref{eq:m2}. This is the content of the following proposition.

\begin{Proposition}\label{attractiveness_prop}
    In the thermodynamic limit and under the GA, the first and second momenta of the local generalization related to the target pattern $\bb\xi^\mu$ in the supervised and unsupervised cases read
        \begin{eqnarray}
            \mu^{s/u}_1 &=& \frac{p}{2\alpha} \big(\kappa^2(\alpha, \sigma^{s/u}, s^{s/u}) + \kappa^2(\alpha, 1, 0) - \kappa^2(\alpha, \sigma^{s/u}_\pm, s^{s/u}_-)\big), \label{eq:mu1su} \\
            \mu^{s/u}_2 &=& (1-p^2) \kappa^2(\alpha, \sigma^{s/u}, s^{s/u})\nonumber \\&+& \frac{p^2}{6\alpha} \big( \kappa^3(\alpha, \sigma^{s/u}_+, s^{s/u}_+) + \kappa^3(\alpha, \sigma^{s/u}_-, s^{s/u}_-) - 3 \kappa^3(\alpha, 1, 0)\big),\label{eq:mu2su}
        \end{eqnarray}
    where $\kappa^n\left( \alpha,\sigma, s\right)$ is the $n-$th moment of a $MP\left(\sigma, \alpha\sigma, s\right)$, $\sigma^{s/u}_\pm$ and $s^{s/u}_\pm$ 
    are defined in App.~\ref{app: proof_4}.\footnote{We recall that the asymptotic distribution is derived by means of the AFM for the unsupervised setting. In addition, we set $s^s=0$ to avoid redundancies in the statement of the proposition. Finally, notice that $\mu^{s/u}_i$ do not depend on the specific concept $\bb\xi^\mu$, because of the symmetry induced by Eq. \eqref{eq:P_xi}.}
\end{Proposition}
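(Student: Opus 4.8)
The plan is to reduce the two momenta to spectral objects associated with $\boldsymbol J\in\{\boldsymbol\Gamma^s,\boldsymbol\Gamma^u\}$ and then read them off from Prop.~\ref{prop:allspectra}. Writing $\Delta_i(\boldsymbol\xi^\mu,\boldsymbol\eta)=\xi_i^\mu\sum_j J_{ij}\xi_j^\mu\eta_j=\xi_i^\mu\big(\boldsymbol J(\boldsymbol\eta\odot\boldsymbol\xi^\mu)\big)_i$ and interpreting, under the GA and by self-averaging, the momenta as the concentrating empirical averages $\mu_1=\tfrac1N\sum_i\Delta_i$ and $\mu_2=\tfrac1N\sum_i\Delta_i^2$, I would first average over the i.i.d. $\eta_j\sim Rad(p)$. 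Using $\eta_j^2=1$, $\mathbb E[\eta_j]=p$ and $\mathbb E[\eta_j\eta_k]=p^2$ for $j\neq k$, and discarding $O(N^{-1/2})$ fluctuation terms (legitimate since $\boldsymbol\eta$ is independent of $\boldsymbol J$ and $\boldsymbol\xi^\mu$), this collapses the two momenta onto three quantities,
\[
\mu_1=\frac{p}{N}(\boldsymbol\xi^\mu)^T\boldsymbol J\,\boldsymbol\xi^\mu,\qquad \mu_2=\frac{p^2}{N}(\boldsymbol\xi^\mu)^T\boldsymbol J^2\boldsymbol\xi^\mu+(1-p^2)\frac1N\mathrm{tr}(\boldsymbol J^2),
\]
namely the two signal-direction Rayleigh quotients of $\boldsymbol J$ and $\boldsymbol J^2$ and the trace $\tfrac1N\mathrm{tr}(\boldsymbol J^2)$.

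The trace term is immediate: $\tfrac1N\mathrm{tr}(\boldsymbol J^2)$ is exactly the second moment of the empirical spectral distribution, so Prop.~\ref{prop:allspectra} yields $\tfrac1N\mathrm{tr}(\boldsymbol J^2)\to\kappa^2(\alpha,\sigma^{s/u},s^{s/u})$, producing the first summand of $\mu_2^{s/u}$ in \eqref{eq:mu2su}. The real work lies in the two Rayleigh quotients: since $\boldsymbol\xi^\mu$ enters the construction of $\boldsymbol J$, it is correlated with it and $\tfrac1N(\boldsymbol\xi^\mu)^T\boldsymbol J^k\boldsymbol\xi^\mu$ is \emph{not} a spectral moment. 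My plan is to peel off the $\mu$-th (signal) contribution by the rank-one splitting $\boldsymbol J=\boldsymbol J_{\setminus\mu}+\tfrac1N\boldsymbol y\boldsymbol y^T$, with $\boldsymbol y=\bar{\boldsymbol\xi}^\mu$ in the supervised case (and the analogous rank-$M$ block, treated within the AFM, in the unsupervised one), where $\boldsymbol J_{\setminus\mu}$ is independent of $\boldsymbol\xi^\mu$ and carries the bulk spectrum. Expanding $\boldsymbol J^k$ and invoking the a.s. limits $\tfrac1N\boldsymbol\xi^\mu\!\cdot\!\boldsymbol y\to(1-d)r$, $\tfrac1N\|\boldsymbol y\|^2\to\sigma^{s/u}$ and $\tfrac1N(\boldsymbol\xi^\mu)^T\boldsymbol J_{\setminus\mu}^{\,j}\boldsymbol\xi^\mu\to\kappa^j$ (the last holding because $\boldsymbol\xi^\mu$ is independent of $\boldsymbol J_{\setminus\mu}$ and has unit-variance entries), each Rayleigh quotient becomes a fixed polynomial in the bulk moments and the signal overlaps.

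The final step is to recognise these polynomials as finite-difference combinations of moments of \emph{deformed} Marchenko--Pastur laws. The mechanism is the companion identity $\bar{\boldsymbol X}^T(\boldsymbol\Gamma)^k\bar{\boldsymbol X}=N\big(\tfrac1N\bar{\boldsymbol X}^T\bar{\boldsymbol X}\big)^{k+1}$, which trades a Rayleigh quotient of $\boldsymbol J^k$ along a data column for a diagonal entry of the $(k+1)$-th power of the $K\times K$ companion Gram matrix; this is exactly why $\mu_1$ is governed by second moments and $\mu_2$ by third moments. Introducing the matrices whose signal block is suitably rescaled — these have spectra of the modified form $MP(\alpha,\sigma_\pm^{s/u},s_\pm^{s/u})$ defined in App.~\ref{app: proof_4} — I expect to write the two quotients as $\tfrac1N(\boldsymbol\xi^\mu)^T\boldsymbol J\boldsymbol\xi^\mu=\tfrac1{2\alpha}\big(\kappa^2(\alpha,\sigma^{s/u},s^{s/u})+\kappa^2(\alpha,1,0)-\kappa^2(\alpha,\sigma_-^{s/u},s_-^{s/u})\big)$ and $\tfrac1N(\boldsymbol\xi^\mu)^T\boldsymbol J^2\boldsymbol\xi^\mu=\tfrac1{6\alpha}\big(\kappa^3(\alpha,\sigma_+^{s/u},s_+^{s/u})+\kappa^3(\alpha,\sigma_-^{s/u},s_-^{s/u})-3\kappa^3(\alpha,1,0)\big)$, which inserted above reproduce \eqref{eq:mu1su}–\eqref{eq:mu2su}.

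\textbf{Main obstacle.} I expect the delicate point to be the signal-direction Rayleigh quotients at \emph{finite} $M$. The tempting shortcut $\boldsymbol\xi^\mu\approx m_\chi^{-1}\bar{\boldsymbol\xi}^\mu$ (with $m_\chi=(1-d)r$) is exact only as $M\to\infty$, and the leading finite-$M$ corrections — precisely the gap between $\sigma^{s/u}$ and $m_\chi^2$ — enter the rank-one expansion and must be tracked consistently; getting the coefficients of the $\kappa^3(\alpha,\sigma_\pm,s_\pm)$ triad right hinges on this bookkeeping. In the unsupervised setting there is the additional difficulty that the factors of $\boldsymbol\Gamma^u$ are correlated, so both the bulk moments and the rank-one splitting are only available through the AFM (Def.~\ref{def:afm}); one must check that the AFM reproduces the correct second and third spectral moments and that the nonzero shift $s^u=\alpha(1-d-\sigma^u)$ is faithfully propagated, which is the reason the fully general three-argument $\kappa^n(\alpha,\sigma,s)$ is needed rather than the unshifted law of the supervised case.
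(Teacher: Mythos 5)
Your reduction of the two momenta is sound and coincides with the paper's own starting point: averaging over $\boldsymbol\eta$ gives exactly $\mu_1=\frac{p}{N}(\boldsymbol\xi^\mu)^T\boldsymbol J\boldsymbol\xi^\mu$ and $\mu_2=(1-p^2)\frac1N\mathrm{Tr}\,\boldsymbol J^2+\frac{p^2}{N}(\boldsymbol\xi^\mu)^T\boldsymbol J^2\boldsymbol\xi^\mu$, and the trace term is indeed the second spectral moment $\kappa^2(\alpha,\sigma^{s/u},s^{s/u})$. The gap lies in how you evaluate the two signal-direction Rayleigh quotients. The paper never peels off a rank-one signal block: it uses the permutation symmetry among patterns (the footnote of the Proposition) to replace the single-pattern quotient by the average over all $K$ patterns,
\begin{equation*}
\frac1N(\boldsymbol\xi^\mu)^T\boldsymbol J^k\boldsymbol\xi^\mu=\frac1{NK}\sum_{\nu=1}^K(\boldsymbol\xi^\nu)^T\boldsymbol J^k\boldsymbol\xi^\nu=\frac{1}{\alpha_N N}\,\mathrm{Tr}\big(\boldsymbol J^k\boldsymbol\Gamma\big),\qquad k=1,2,
\end{equation*}
where $\boldsymbol\Gamma=\frac1N\boldsymbol X\boldsymbol X^T$ is the basic-storing matrix. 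This single line dissolves the correlation between $\boldsymbol\xi^\mu$ and $\boldsymbol J$ that your rank-one expansion is designed to fight. The mixed traces are then reduced to pure spectral moments by polarization, $2\,\mathrm{Tr}(\boldsymbol J\boldsymbol\Gamma)=\mathrm{Tr}\,\boldsymbol J^2+\mathrm{Tr}\,\boldsymbol\Gamma^2-\mathrm{Tr}(\boldsymbol J-\boldsymbol\Gamma)^2$ and $6\,\mathrm{Tr}(\boldsymbol J^2\boldsymbol\Gamma)=\mathrm{Tr}(\boldsymbol J+\boldsymbol\Gamma)^3+\mathrm{Tr}(\boldsymbol\Gamma-\boldsymbol J)^3-2\,\mathrm{Tr}\,\boldsymbol\Gamma^3$, and the limiting spectra of $\boldsymbol\Gamma\pm\boldsymbol J$, computed via the AFM in App.~\ref{sec:Frobenius}, are precisely the $MP(\alpha,\alpha\sigma^{s/u}_\pm,s^{s/u}_\pm)$ laws. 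That is where $\sigma^{s/u}_\pm$ and $s^{s/u}_\pm$ come from, and why $\mu_1$ is expressed through second moments while $\mu_2$ needs third moments.

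Your proposed mechanism cannot deliver this form. The parameters $\sigma^{s/u}_\pm,s^{s/u}_\pm$ are, by definition, the spectral data of the sum/difference matrices $\boldsymbol\Gamma\pm\boldsymbol\Gamma^{s/u}$, which your expansion never introduces, so nothing in a rank-one (or rank-$M$) peeling forces them to appear. Moreover, the ``companion identity'' you invoke, $\bar{\boldsymbol X}^T\big(\tfrac1N\bar{\boldsymbol X}\bar{\boldsymbol X}^T\big)^k\bar{\boldsymbol X}=N\big(\tfrac1N\bar{\boldsymbol X}^T\bar{\boldsymbol X}\big)^{k+1}$, holds only when the sandwiching vector is a column of the same factor matrix that builds $\boldsymbol J$; it says nothing about the ground-truth direction $\boldsymbol\xi^\mu$, which is exactly the obstacle you flag but do not overcome. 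To be fair, your peeling does produce correct numbers: in the supervised case it yields $\mu_1^s\to p\,[\alpha\sigma^s+(1-d)^2r^2]$, which one can check agrees with \eqref{eq:mu1su} after expanding the $\kappa^2$'s with $(\sigma^s_-)^2=1-2(1-d)^2r^2+(\sigma^s)^2$ and $s^s_-=\alpha(1-\sigma^s-\sigma^s_-)$. So your plan could be completed by computing the $\pm$ spectra independently and verifying the algebraic identity, but that verification is the paper's proof run backwards, with the extra burden of rigorously controlling the correlated rank-$M$ block in the unsupervised case, a difficulty the symmetrization-plus-polarization route avoids entirely.
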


The proof of the Proposition is detailed in App. \ref{app: proof_4}, while here we directly show how $m^{(1)}$, determined by plugging Eqs. (\ref{eq:mu1su}-\ref{eq:mu2su}) into Eq. \eqref{eq:general_m1_bis}, varies w.r.t. the system parameters, and we compare these analytical estimates (subjected to the GA, the large $N$ limit and, for the unsupervised setting, the AFM) with numerical estimates. 
\begin{figure}
\centering   \includegraphics[width=0.8\textwidth]{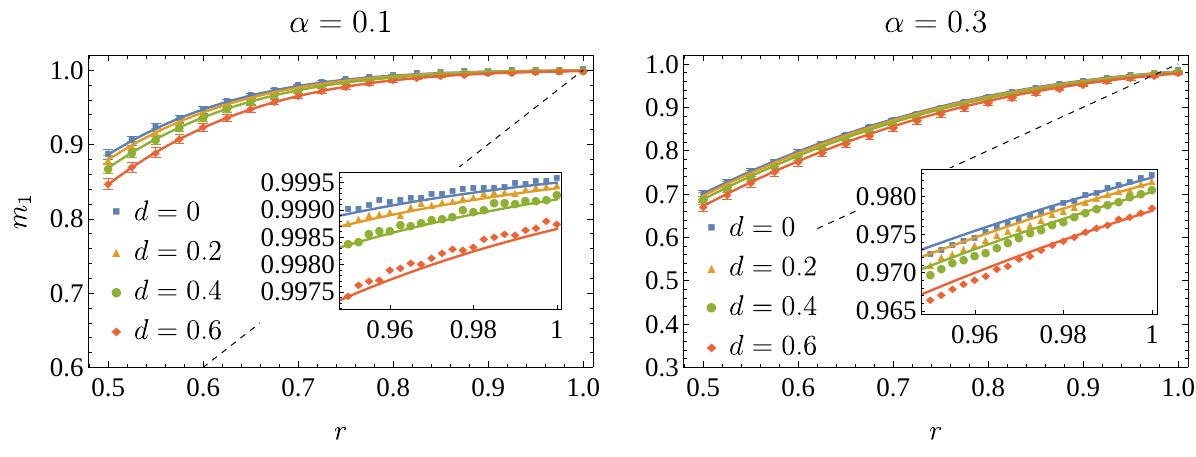}\\
    \centering  \includegraphics[width=0.8\textwidth]{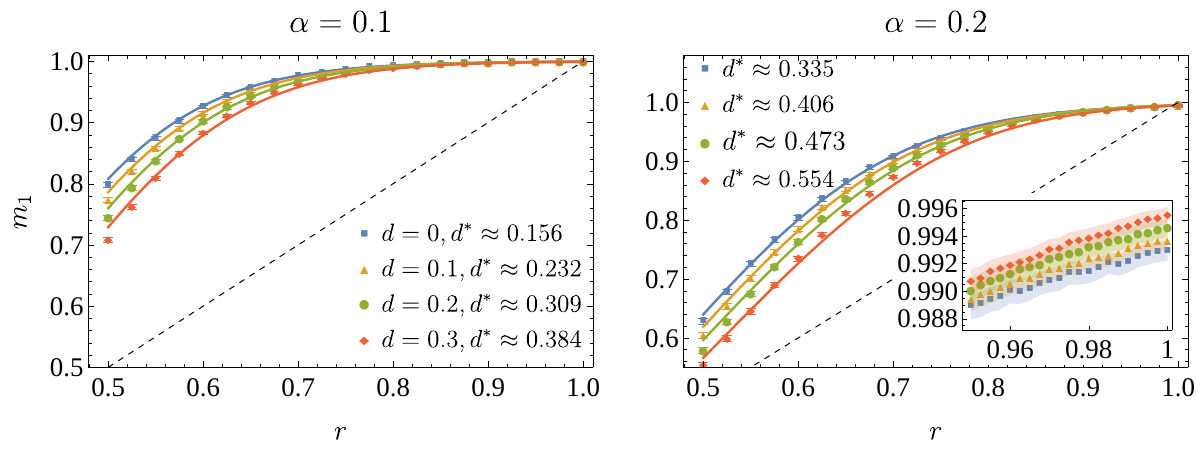}
    \caption{ {\bfseries Generalization in the supervised and unsupervised setting.} 
    The plots show a comparison between the numerical estimates (dots) for the quantity $m^{(1)}(\boldsymbol \xi, r)$ at a dilution level $d$ and the theoretical expressions (solid curves) for the best-fitting value $d^*$ -- obtained comparing the predictions in Prop. \ref{attractiveness_prop} and the numerical findings -- in the supervised (first row) and unsupervised (second row) setting. For the sake of readability, in the lower right plot we reported only the best-fitting ($R^2\approx 1$) values $d^*$ of the dilution parameters, which always refer (from top to bottom) to $d=0,0.1,0.2,0.3$ in the numerical simulations. Numerical results are averaged over $100$ different realizations fixing $N=500, M=100$. We compare results for $\alpha=0.1$ (left) and $\alpha=0.2$ (right). For the latter, we also report a zoom on high values of the dataset quality ($r\simeq 1$, inset plot). Again, error bars are not reported, due to the low magnitude of the relative errors of the numerical simulations.}
    \label{GEN}
\end{figure}
Again, we find substantial agreement between the theoretical predictions and the numerical data in the supervised setting, with the dilution having mild (but negative) effects on the generalization capabilities of the model, as reported in Fig. \ref{GEN}, first row. As for the unsupervised setting (second row, same figure), the theoretical predictions exhibit a deviation from the data due to the non-Gaussian behavior of the local generalization. This implies an underestimation of the effects of dilution. However, we can fit the numerical data with the analytical formulas and find a best-fitting value $d^*$, as reported in Fig. \ref{GEN}. 
Remarkably, our findings suggest that diluting the training dataset at high load $\alpha$ can yield a positive effect, as highlighted in the inset in Fig. \ref{GEN}, lower right plot.
\begin{figure}[tb]
    \centering
    \includegraphics[width=0.9\textwidth]{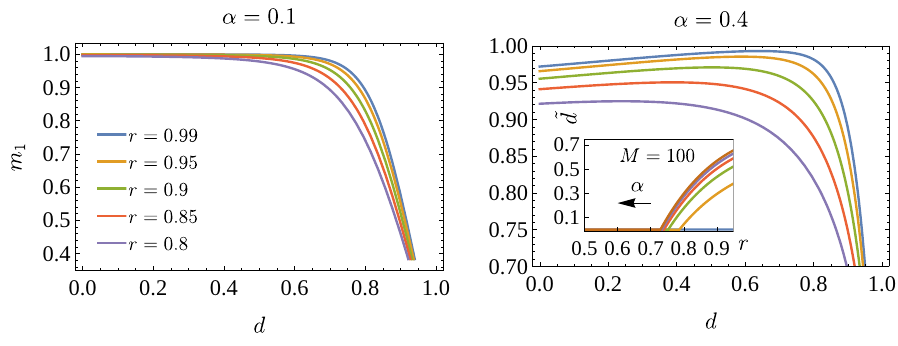}
    \includegraphics[width=0.9\textwidth]{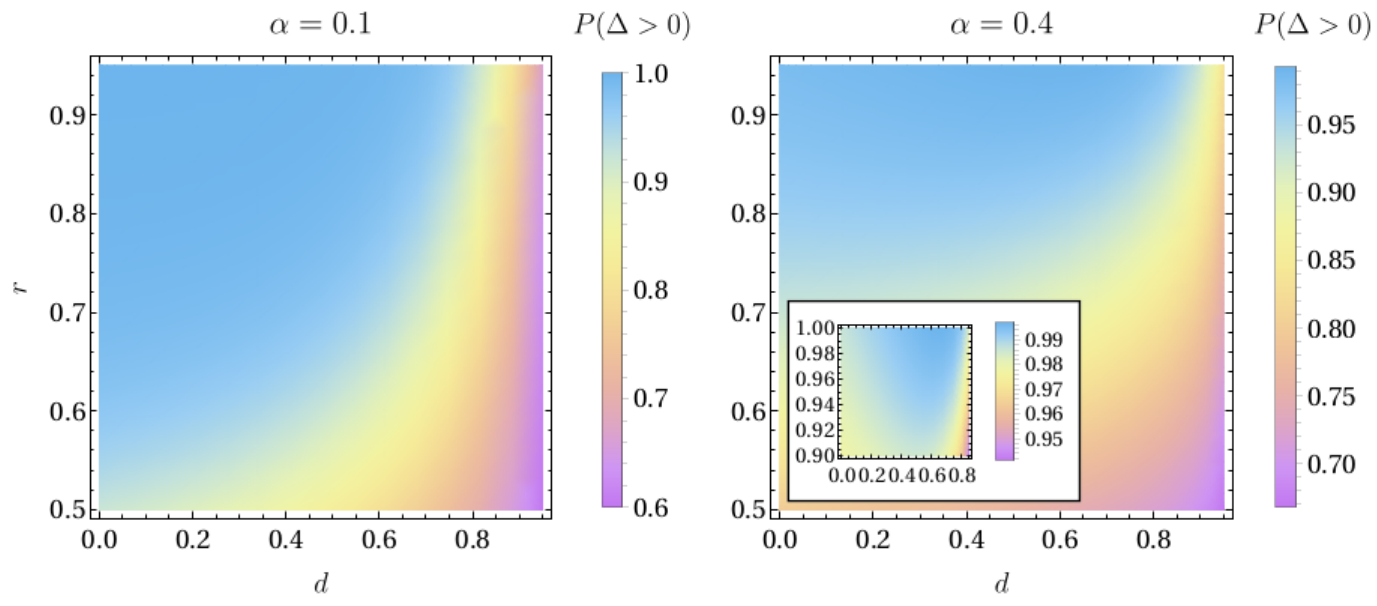}
    \caption{{\bfseries A resuming picture of the capacity of generalization in unsupervised setting.} 
    In the first row, we report the behavior of the 1-step Mattis magnetization as a function of the dilution parameter $d$ for various values of $r$ and $\alpha=0.1$ (left) and $\alpha=0.4$ (right). The inset in the upper right plot highlights the behavior of $\tilde d$ (i.e., the values of the dilution parameter at which, for given $r$, the 1-step magnetization develops a maximum) as a function of the dataset quality, for different values of the load $\alpha$ (going from right side to the left side, $\alpha=0.1,0.2,0.3,0.4,0.5,0.6$). In the second row, we report the heat maps for the probability that the attractiveness of the ground-truth is positive, namely $\mathcal P(\Delta\ge 0) = \frac12 (1+m^{(1)})$ -- see also App. \ref{app:snr}, again for $\alpha=0.1$ (left) and $\alpha=0.4$. In the lower right plot, the inset reports a zoom to the portion of parameters plane $(r,d)\in (0.9\div 1,0\div 1)$ for highlighting the non-monotonous behavior of the 1-step magnetization as a function of $d$. In all the plots, the number of examples per class is fixed to $M=100$.}
    \label{fig:theo_cmaps}
\end{figure}
This effect was already highlighted in the previous section, when examining the distribution of the local generalization, and will be further emphasized hereafter through the analysis of the behavior of the theoretical 1-step magnetization $m^{(1)}$ (exploiting Eq.~\eqref{eq:general_m1_bis} and Prop.
\ref{attractiveness_prop}) as a function of $d$, see Fig. \ref{fig:theo_cmaps}. 
In the upper left plot ($\alpha=0.1$), the magnetization is a non-increasing function of $d$ regardless of the dataset quality $r$ (while only results for high quality are reported, the same situation holds for lower quality). Indeed, for a load below the Hopfield model's critical storage capacity ($\alpha_c\approx0.14$), the unsupervised setting is expected to effectively reconstruct the ground-truths without diluting the dataset, hence no beneficial effects are observed. On the other hand, increasing $\alpha$, some non-trivial effects may emerge, depending on the interplay between dilution and noise due to pattern interference, as we show hereafter.
In the upper right plot ($\alpha=0.4$), for high values of quality, the 1-step magnetization turns out to be a non-monotonous function of $d$, exhibiting a maximum at some nonzero value of the dilution parameter (say $\tilde d$), with reconstruction capabilities progressively lost at higher $d$. In the inset of the same plot, we report the dependence of $\tilde d$ on $r$ for $M=100$ and for various values of the load $\alpha$. 
In the lower plots of Fig. \ref{fig:theo_cmaps}, we report the color maps for the probability $\mathcal P(\Delta\ge0)$, which -- under the GA (Def. \ref{def:GA}) and in the thermodynamic limit -- is exactly $\frac12 (1+m^{(1)})$, see also App. \ref{app:snr}. In the figure, we explore the plane $(d,r)$ for all possible values of the dilution parameter and for reasonable high dataset quality. Again, at low load ($\alpha=0.1$, lower left plot) and $r$ high enough, mild dilution has no effects on the generalization capabilities of the model, conversely, high $d$ results in a reduction of the probability $\mathcal P(\Delta \ge0)$, leading to a breakdown of reconstruction capabilities. As for the high load ($\alpha=0.4$, lower right plot), dilution is beneficial for sufficiently high $r$, as can be also seen by inspecting the zoom reported in the inset: 
 sparsing the training dataset here increases the probability of $\Delta\ge0$, yielding better generalization capabilities w.r.t. non-diluted case. 
\par\medskip
So far, we used random matrix theory tools to outline a qualitative picture for the effects of dilution strategies on the training dataset, both in the supervised and unsupervised settings. We emphasize that, in developing the theoretical framework for the latter scenario, we $i)$ derived the spectral distribution by means of the AFM \eqref{def:afm}; $ii)$ analytically estimated the generalization capabilities from 1-step dynamics under the GA \eqref{def:GA} and in the thermodynamic limit. Even though the qualitative validity of our theoretical framework is evident, a quantitative check based on long-run Monte Carlo simulations is in order and will be addressed in the next section.
%

\section{Numerical simulations for the unsupervised setting} \label{sec:numerics}

In this section, we further investigate the role of the dilution in the unsupervised setting by performing extensive numerical simulations, allowing the system to relax toward a fixed point under neural dynamics \eqref{eq:dynamics}. We will focus on the stability and  the generalization capabilities from unseen data in a reconstruction task. In doing this, dilution will only be inserted in the training data (that is, the examples from which we construct the coupling matrix $\bb \Gamma^u$): as already stated, such a procedure can be interpreted as a dropout-like technique which is implemented by making ``silent'' a fraction $d$ of the bits in the examples. This means that, when analyzing the generalization capabilities of the model, validation data are {\it not} diluted. 
\par\medskip
The numerical experiments are performed in this way. Depending on the feature under consideration (stability or generalization), we extract an initial condition $\bb \sigma^{(0)}$ associated to a given pattern $\bb \xi^\mu$ for some $\mu = 1,\dots ,K$, and perform the dynamics \eqref{eq:dynamics} until the stability criterion $h_i^{(n)} \sigma_i^{(n)} >0$ for all $i=1,\dots,N$ is met, where $h_i^{(n)} = \sum_j J_{i,j}\sigma_j^{(n)}$ is the local internal field acting on the $i-$th spin. The related configuration is therefore a fixed point, denoted as $\bm \sigma^{(\infty)}$. Then, we compute the Mattis magnetization $m_f:= m_{\mu}^{(\infty)} =\frac 1N \sum_{i=1}^N \xi^\mu _i \sigma_i ^{(\infty)}$. As for the initial conditions $\bb \sigma^{(0)}$, we have:
\begin{itemize}
    \item {\bfseries Stability analysis: } the initial condition $\bb \sigma^{(0)}$ is perfectly aligned with the pattern $\bb \xi^\mu$. For a given realization of the training set $\{\bb \xi^{\mu,A}\}_{\mu=1,...,K}^{A=1,...,M}$, we perform the stability analysis for each of the $K$ ground-truth $\bb \xi^\mu$, and in the end we average over 50 different realizations of the training set. The results, reported in Fig. \ref{fig:num_stability}, show the dependence on $d$ of the magnetization $m_f$ for various values of $K$, $M$ and $r$;
    \item {\bfseries Generalization analysis: } the initial condition is a validation example, that is, a configuration $\bb x=\bb \eta \odot\bb \xi^\mu $ with $\eta_i\sim_{i.i.d.} Rad(r)$. The number of different validation examples for each pattern is 500, then we perform the same analysis for all of the $K$ ground-truths, and we further average over $50$ different samples of these ground-truths. The results, reported in Fig. \ref{fig:num_generalization}, show the final magnetization $m_f$ as a function of $m_0 = \frac1N \sum_{i=1}^N x_i \xi^\mu_i$ for different values of $K$, $M$ and $d$.
\end{itemize}
\par\medskip
Let us start with the pattern stability, whose results are organized in Fig. \ref{fig:num_stability} as follows: the columns refer to different values of the number of classes in the training set ($K=50,250,500$, corresponding to $\alpha_N = 0.05,0.25,0.5$), while the rows refer to different values of the dataset quality ($r=0.6,0.9$). Each plot reports the results for $M=10,20,200$. By inspecting the plots, we see that, at low load ($\alpha=0.05$) the effect of the dilution is harmful at relatively low number of training points ($M=10,20$), as the final magnetization decays with $d$, but does not affect the stability of the patterns in the presence of large redundancy in the dataset ($M=200$). The situation is robust w.r.t. the quality parameter $r$, with the only quantitative differences in the final magnetization $m_f$. Increasing the load to $\alpha =0.25$ (above the Hopfield model critical threshold), the situation slightly changes, as -- especially for high-quality and large-size training datasets -- the stability of the ground-truths is enhanced by supplying the network with pretty diluted (that is, $d\gtrsim 0.4$) examples. Take for instance the case $M=200$: increasing the quality from $r=0.6$ (where ground-truths are always fixed point for the neural dynamics) to $r=0.9$, we see that the stability of the hidden patterns decreases (in average) for non-diluted training samples; in this case, introducing dilution is crucial to regain stability of the ground-truths. This is more evident by further increasing the load to $\alpha=0.5$: dealing with high-quality examples would require both large $M$ and dilution $d$ in order to ensure stability of the hidden patterns. Thus, these findings corroborate our expectations about the positive role of dilution at high $\alpha$ (and high $M$).

\begin{figure}[tb]
    \centering
   \includegraphics[width=\linewidth]{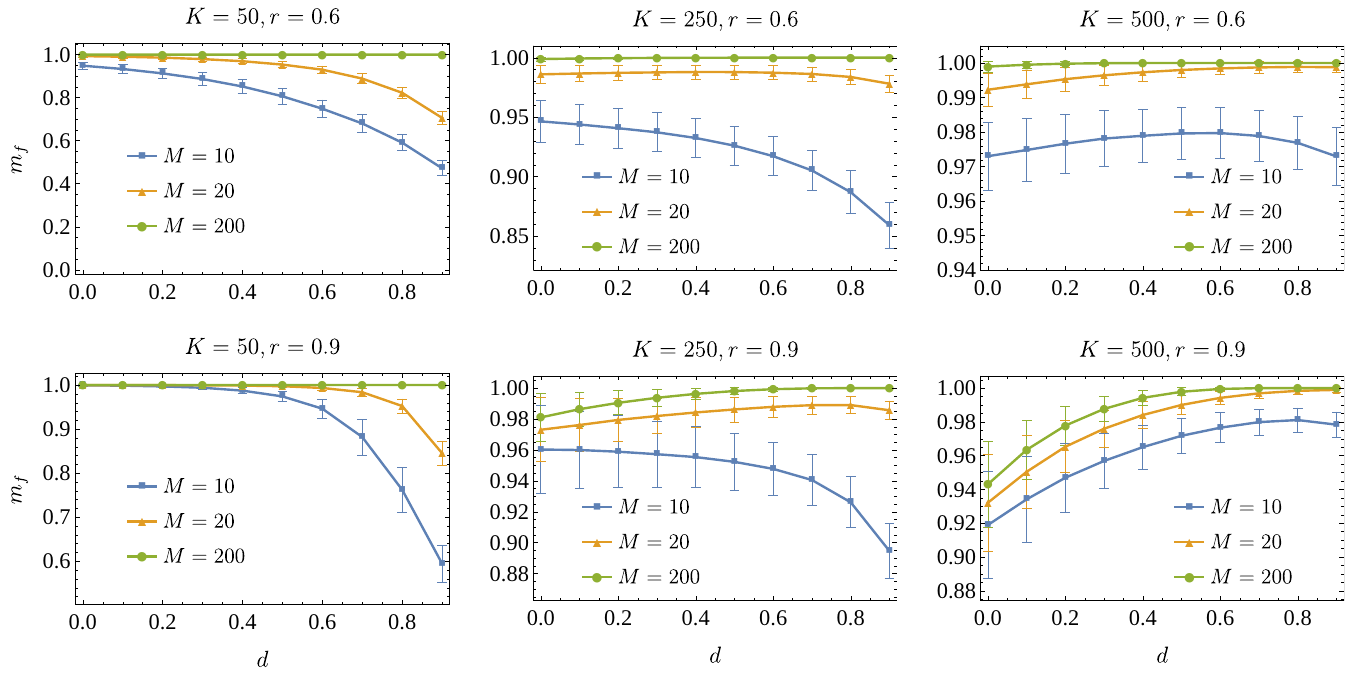}
    \caption{{\bfseries Pattern stability as a function of the dilution in the unsupervised setting.} The plots report the behavior of the final magnetization $m_f$ of the neural dynamics \eqref{eq:dynamics} with the initial condition being the one of the ground-truth ($\bb\sigma^{(0)}=\bb \xi^\mu$). We report the dependence of $m_f$ on the dilution parameter $d$ for various values of $K$ ($50$, left; $250$, center; $500$, right), the dataset quality $r$ ($0.6$, first row; $0.8$, second row; $0.9$, third row), and the number of examples per class $M$ ($10$, blue square dots; $20$, yellow triangles; $200$, green circles). The network size is fixed to $N=1000$. The results are averaged over 500 different realizations of the couplings matrix for each point.}
    \label{fig:num_stability}
\end{figure}
\begin{figure}[tb]
    \centering
    \includegraphics[width=0.95\linewidth]{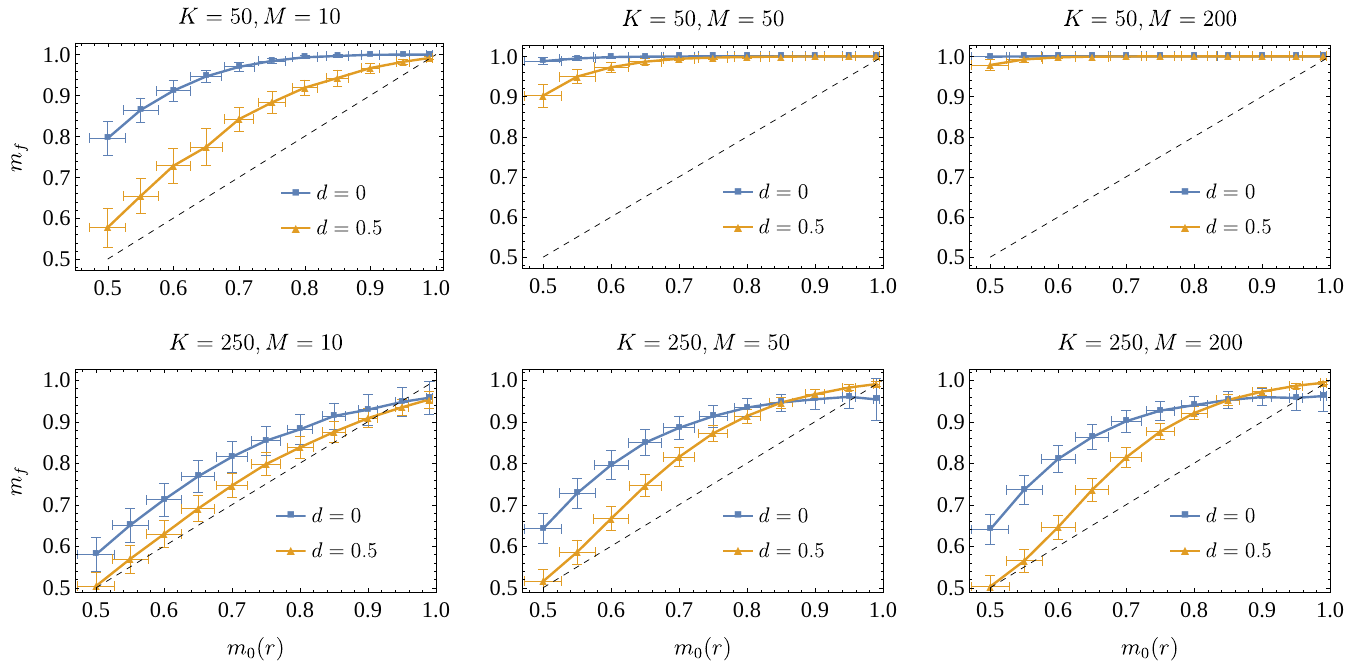}
    \caption{{\bfseries Generalization in the unsupervised setting}. The plots report the results of the pattern attractiveness, namely the final magnetization $m_f$ under the relaxation to fixed points of the neural dynamics \eqref{eq:dynamics} starting from a validation example (with initial magnetization $m_0 (r)$). We report the dependence of $m_f$ on the initial overlap $m_0 (r)$ for various values of $K$ ($50$, first row; $200$, second row; $250$, third row), the number of examples per class $M$ ($10$, left; $50$, center; $200$, right), and the dilution parameter $d$ ($0$, blue squares; $0.5$, yellow ones). The network size is fixed to $N=1000$. The results are averaged over 500 different realizations of the couplings matrix for each point.}
    \label{fig:num_generalization}
\end{figure}
\par\medskip
We now focus on the effects of dilution on the generalization capabilities of these networks. As we said above, the starting configuration is an example with the same quality of the training dataset, but {without} the dilution, as the latter is inserted by hand solely in the construction of the coupling matrix. We perform experiments analogous to the preceding ones, but now we consider the retrieval maps of the model by fixing the external parameters, that is we plot the final magnetization $m_f$ as a function of the starting correlation $m_0(r)$ with the pattern associated to the validation example. At low $K$, the situation is similar to the previous analysis: for low number of examples, the non-diluted network exhibits better generalization performances, and the two models have comparable pattern reconstruction capabilities in the large dataset case (in particular, the generalization performances are high, as $m_f\sim 1$ for $M=200$ both for $d=0$ and $d=0.5$). Turning to relatively large loads, a large $M$ and high dilution $d$ lead to the $m_f$ vs $m_0$ curves of the two models (with and without dilution) crossing. This indicates that there is a value of $r$ for which the generalization capabilities of the diluted model surpass those of the non-diluted one. 
The situation is evident in the case $K=250$ and $M=200$: there is a (relatively) wide region of the quality parameter where the final magnetization of the diluted model is close to 1, thus signaling that the system (almost) correctly reproduces the hidden archetype associated with that data point. For further details, we refer to App. \ref{app:mechanism}.
\par\medskip
As a final analysis, we fix the load $\alpha=0.1, 0.4$ and the number of examples per class $M=50,100,200$, span the values of $r \in [0.5,1]$ and $d \in [0,1]$ and check the effects of dilution on the generalization capabilities of the network after thermalization towards fixed points of the neural dynamics. To do this, we consider the ``network gain'' $\delta(d,r)$ of the final Mattis magnetization at dilution $d$ w.r.t. the case without dilution. In formula:
\begin{equation}
    \delta(d,r)= \frac{m_f (d,r)-m_f (0,r)}{m_f (0,r)}.
\end{equation}
The results are reported in Fig. \ref{fig:heat_maps}. In agreement with the analysis reported in Fig. \ref{fig:num_generalization}, we see that, at low load ($\alpha=0.1$) the effect of dilution on generalization capabilities is (in the best case) irrelevant, as the majority of points in the heat map exhibits zero gain in the final magnetization after the dilution is switched on. Moreover, at very poor quality dataset ($r\sim 0.5$), dilution results in a strong reduction of the reconstruction capabilities of the network, with a loss of at most $50\%$ w.r.t. non-diluted case. Thus, again, if the number of categories in the dataset is low, the introduction of blank entries in the training examples constitute an obstruction to pattern reconstruction, and this behavior is observed even increasing the number of examples per class. In contrast, the high-load case ($\alpha=0.4$) exhibits a qualitatively different situation. While in low quality dilution retains a harmful effect on generalization capabilities, for (very) high quality datasets (with $r \in [0.9,1.0]$) the presence of blank entries in the training examples leads to a gain of around $10\%$ in the final magnetization, and it improves with the number of points per class $M$. 

\begin{figure}[tb]
    \centering
    \includegraphics[width=0.95\linewidth]{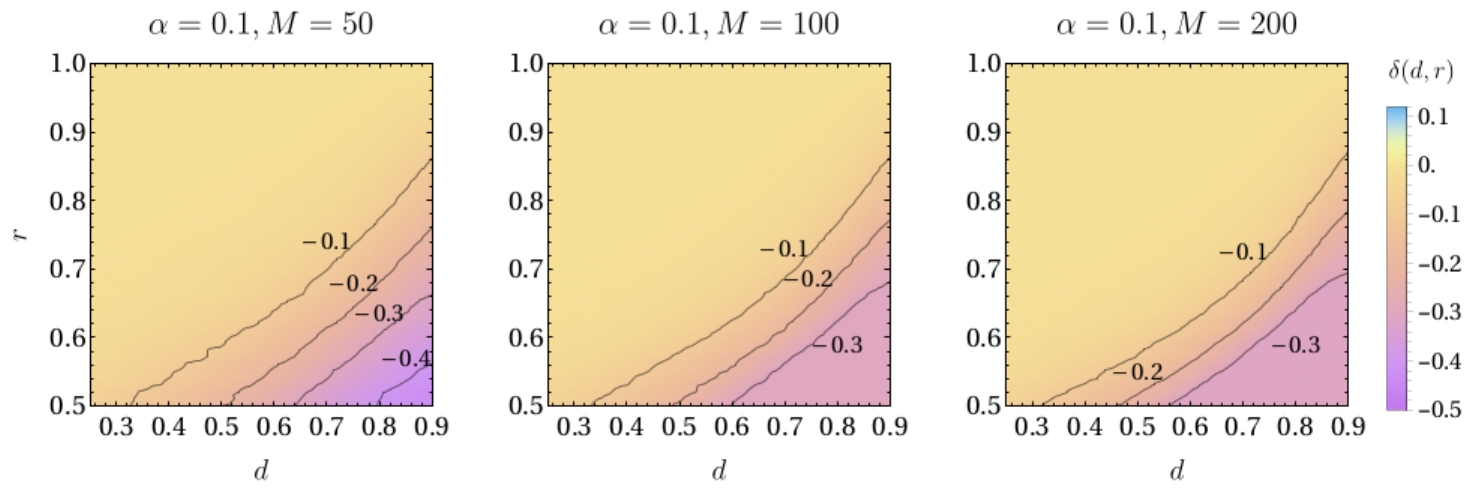}
    \includegraphics[width=0.95\linewidth]{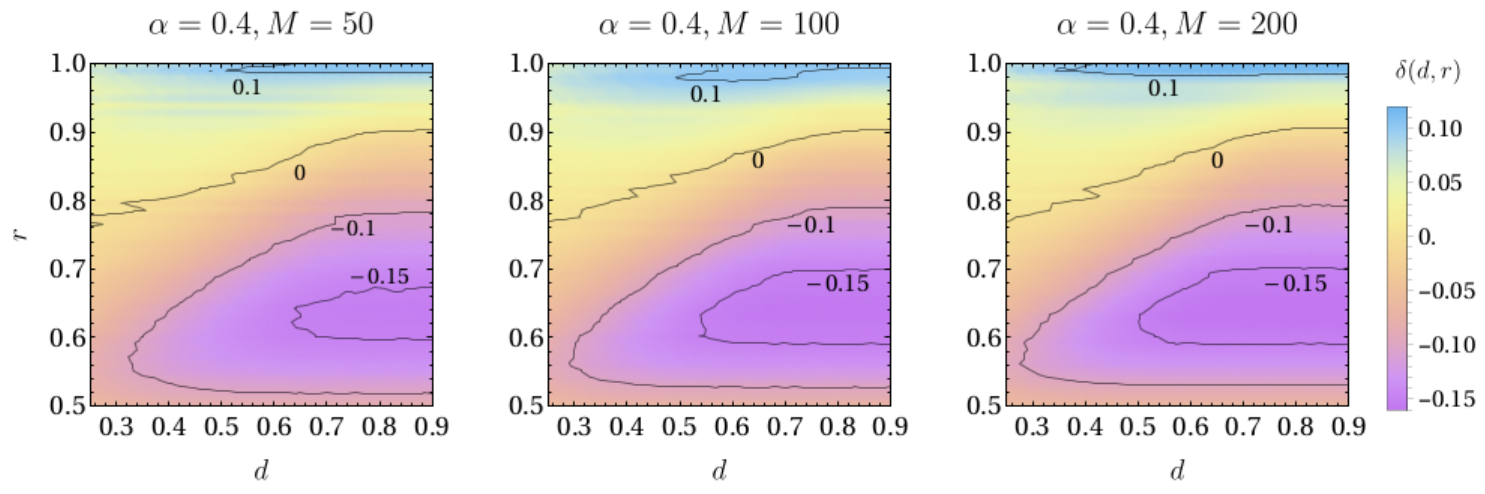}
    \caption{{\bfseries Generalization in the unsupervised setting in the $(r,d)$ plane}. The plots report the results of the pattern attractiveness, namely the final magnetization $m_f$ under the relaxation to fixed points of the neural dynamics \eqref{eq:dynamics} starting from a validation example. We report the dependence of $m_f$ on the dilution parameter $d$ and the dataset quality $r$, for $\alpha=0.1$ (first row) and $\alpha=0.4$ (second row), and $M=50$ (left column), $M=100$ (center column) and $M=200$ (right column). The black solid lines are the level curve in the heat maps. The network size is fixed to $N=1000$. The results are collected over 20 different realizations of the couplings matrix.}
    \label{fig:heat_maps}
\end{figure}

\par\medskip
\subsection{An experiment with a structured dataset}
As a last experiment, we consider the case of a structured dataset. Recalling that the benign effects of diluting the training examples is manifest at high load, we should take into account a dataset with a large number of classes, if compared to the native size of the patterns. In order to explore this regime with a structured dataset, we consider as ground-truths a mixed set of Chinese characters and Japanese ideograms (drawn from Hiragana, Katakana and Kanji characters). In this wide range of possibilities, we choose 250 characters and generate Boolean representations of the selected ideograms as $25\times25$ matrices, which will play the role of ground-truths. A sample of these patterns is reported in the first row of Fig. \ref{fig:ideogrammi_results}. In this way, $N=625$ and $K=250$, so that $\alpha=0.4$. 
\par\medskip

\begin{figure}[tb]
    \centering
     \includegraphics[width=0.9\textwidth]{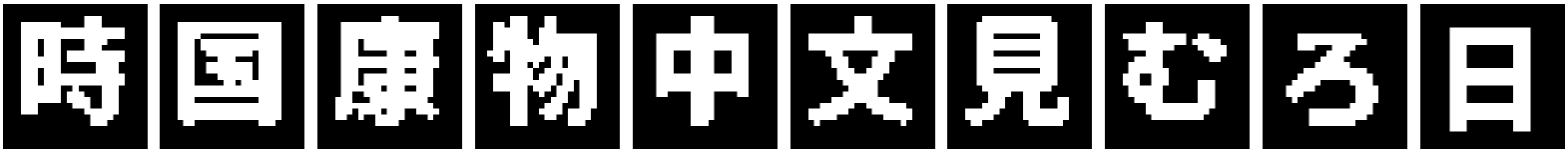}\vspace{0.2cm}\\
    \includegraphics[width=0.97\textwidth]{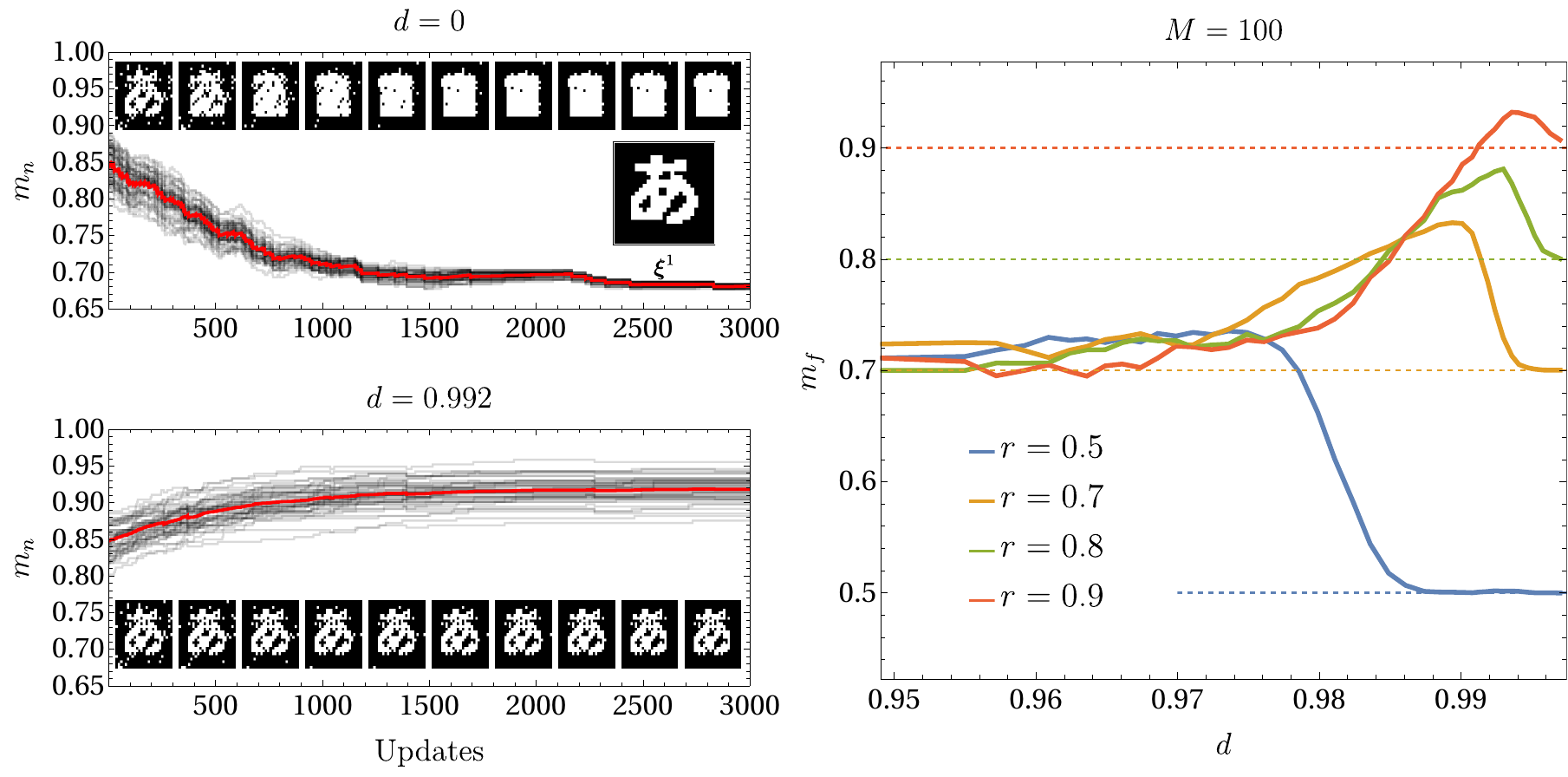}\\
        \includegraphics[width=\textwidth]{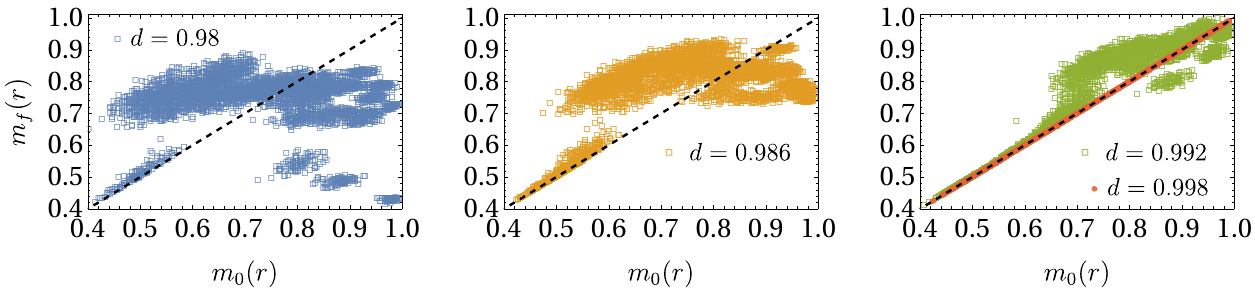}
    \caption{{\bfseries Summary of results for the unsupervised setting with structured patterns.} (First row) The figure shows a subset of the hidden patterns (Chinese character and Japanese ideograms) used to generate the training examples. (Second row, left plots) The plots show the relaxation of the system toward fixed points for $d=0$ (upper plot) and $d=0.992$ (lower plot). In both cases we input the {\it same} validation example (which is a corrupted version of the ground-truth $\bb \xi^1$ reported as inset in the upper plot). At each time step $n$, we compute the value of the Mattis magnetization $m_n= \frac1N \sum_i \sigma_i^{(n)} \xi^1_i$. We performed the same experiment for 50 different initial conditions, reporting the evolution of the magnetization in each run (gray solid lines) and the related average magnetization (red solid curve). We also collect snapshots of the network configurations at a distance of 300 updates (insets in both plots). The parameters are set as $N=625$, $K=250$, $r=0.85$, $M=100$. (Second row, right plot) The plot shows the dependence of the final magnetization $m_f$ as a function of $d$ for various values of the dataset quality $r$($=0.5,0.7,0.8,0.9$). The dashed lines are the threshold $m_f=r$, above which generalization is effective. The parameters are set as $N=625$, $K=250$, $M=100$. (Third row) Empirical retrieval maps for fixed values of $d$ varying the dataset quality $r$. Each point corresponds to a single retrieval experiment, where $m_0(r)$ is the initial overlap with the pattern and $m_f$ is the final magnetization. Respectively we have $d=0.98$ (left), $d=0.986$ (center), $d=0.992,0.998$ (right). The dashed black lines correspond to the trivial behavior, namely $m_f=m_0$. The parameters are set as $N=625$, $K=250$, $M=100$.}
    \label{fig:ideogrammi_results}
\end{figure}
Using these patterns, we generate a training dataset by corrupting them with a multiplicative Bernoulli noise with quality $r$ and dilution $d$. We begin by following neural dynamics of two networks, both with $M=100$ examples per class and quality $r=0.85$, and compare the cases with $d=0$ (no dilution) and $d=0.992$, see first column is second row in Fig. \ref{fig:ideogrammi_results}. Specifically, we provide the network with test examples and let the system evolve towards a fixed point with the dynamics \eqref{eq:dynamics}.\footnote{Unlike the synthetic random case, we here adopted a serial dynamics in order to better visualizing the evolution towards fixed points of the network in the structured scenario.} The Hebbian networks built with non-diluted and diluted training examples are fed with the same initial conditions, computing at each step the magnetization corresponding to the associated pattern. In the undiluted case, the magnetization of the network configuration with the corresponding ground-truth decreases over time, and starting from an initial configuration with $m_0=r$ (in average) at step 0, the system relaxes toward a final state with lower correlation with the ground-truth (in particular, it settles into a spurious combination of patterns with high attractive power). This behavior is confirmed by looking at the snapshots of the network configuration: the system loses the structure of the test example, and relaxes towards a final configuration characterized by a large ``blob'' of active pixels. We confirm that this is a strongly attractive configuration even by varying $M$ and $r$, observing qualitatively similar outcomes. Nevertheless, by inserting a strong dilution in the training dataset, the network becomes capable of producing a more faithful representation of the pattern, and the overlap with the latter and the network configuration increases during neural dynamics. As shown in the snapshot, the network filters out the noise in non-informative pixels (the dark background) while preserving the structure of the target, thus resulting in a better reconstruction of that pattern. This experiment highlights that a strong dilution is responsible for reducing the intrinsic noise in the update rule \eqref{eq:dynamics} and leading to better reconstruction performances of the ground-truths underlying the training dataset.\par\medskip
In the right column of the second row of Fig. \ref{fig:ideogrammi_results}, we fix the number of examples per class to $M=100$ and study the final magnetization $m_f$ as a function of $d$ when the network is fed with a test example for various values of the quality $r$. The plot reveals that, as along as the dilution is lower than a certain threshold ($\bar d\approx 0.975$), the neural dynamics tends to drive the system towards a spurious combination of the patterns (this is evidenced by the plateau below $d=\bar d$, which matches the final overlap of the spurious state shown in dynamical experiments without dilution). As dilution increases beyond this threshold, the behavior of the model becomes dependent on the dataset quality. For low $r$ (e.g. $r=0.5$), the final overlap settles on $m_f=r$, indicating a trivialization of the neural dynamics in which any configuration can become fixed points. Conversely, for higher quality, there exists an intermediate range of dilution where the model exhibits enhanced reconstruction capabilities: here, the system is more robust against relaxation toward spurious combinations and the final magnetization is higher than the quality of the training dataset ($m_f>r$), signaling non-trivial pattern retrieval. It is worth noticing that this convenient region however shrinks as $r$ increases. Within that range, we computed numerically the proximity to the diagonal structure $\pi(d)$ and the ratio between off-diagonal and diagonal entries, $R(d)$ -- see App. \ref{app:mechanism}; for all values of $r$, we find $\pi (d)\sim 0.1$ and $R(d) \sim 1$, indicating that -- even for extreme dilution in the training dataset -- the coupling matrix remains far from being trivial. Finally, for extreme value of dilution, regardless of $r$, neural dynamics again trivialize, as $m_f\approx r$. 

As a complementary analysis, we fix the dilution and numerically investigate the retrieval maps varying the quality. The results are reported in the third row of Fig. \ref{fig:ideogrammi_results} for $d=0.98,0.986,0.992,0.998$. Again, these plots reveal that there exists a range in the quality (or, equivalently, in $m_0(r)$ since the initial condition is a test example) where dilution is beneficial, and leads to a consistent improvement of the pattern reconstruction capabilities. Only for extreme dilution (namely $d=0.998$) neural dynamics trivialize, as $m_f= m_0(r)$, independently on the starting configuration.

\section{Discussion and outlooks} \label{sec:final}
We investigated the generalization capabilities of Hebbian neural networks built on noisy and sparse datasets, facing both supervised and unsupervised settings. Using random-matrix-theory tools, we derived the asymptotic spectral distribution of the coupling matrices constructed upon synthetic data and this allowed us to analytically investigate the 1-step Mattis magnetization with respect to ground-truth patterns.
Our findings reveal that, in supervised settings, data dilution generally impairs the generalization performance. In contrast, in unsupervised settings,
the presence of blank or missing entries in the data can, under certain conditions, improve the performance and this positive effect of dilution is detected especially when the data quality is high and the load is large. 
This outcome can be understood by directly looking at the impact of sparsification on the 1-step Mattis magnetization and at how sparsification downsizes the magnitude of noisy contributions hidden in neuron-neuron interactions without trivializing the relaxation towards fixed points. Overall, our findings show that dilution in training data reshape the energy landscape of the model, possibly making fixed points lie closer to the ground-truths compared to the undiluted case, improving generalization capabilities. 
Our theoretical results are supported by extensive numerical simulations which also demonstrate the robustness of the phenomenology when structured datasets are used. 
This evidence suggests sparsification strategies, where selective dilution may enhance generalization without sacrificing too much information.
%
Further, the close connection between Hebbian networks designed with diluted datasets and the dropout procedure within gradient-based training algorithm of associative memories highlights the relevance of diluting the available information in a more general context. 

\appendix

\section{Linking dropout and diluted datasets: a Machine Learning perspective} \label{app:dropout}
In this appendix, we aim to provide a relation between the operation of diluting the training dataset and the dropout technique \cite{srivastava2014dropout} designed to avoid overfitting when training neural networks. To this aim, we revisit Hebb's rule from a Machine Learning perspective. As is well-known \cite{Amit}, Hebbian prescription is designed to capture spin-spin correlation within a given set of observation. Thus, we consider a sample of vectors $\mathcal D= \{\hat {\bb \xi}^s\}_{s=1}^P$, where $s$ can label independent random vectors as in the storing case (thus $P=K$), or refer to training examples in an unsupervised setting (with $P=MK$). As we are interested mainly in the empirical setting, without loss of generality we will only focus on the unsupervised setting and, since we intend to insert dilution as an external operation, we will work with undiluted training examples, namely $\hat \xi^{\mu,A}_i = \hat \chi^{\mu,A}_i \xi^{\mu}_i$ with $P(\hat \chi^{\mu,A}_i =\pm1)=\frac{1\pm r}{2}$. The most natural loss-function to achieve the Hebb goal reads as
$$
\mathcal L (\bb J) = \frac1{2KM} \sum_{\mu,A=1}^{K,M} \lVert \bb J - \hat {\bb \xi}^{\mu,A} (\hat {\bb\xi}^{\mu,A})^T\lVert_{UF}^2=  \frac1{2KM} \sum_{\mu,A=1}^{K,M}\sum_{i,j=1}^N ( J_{i,j} -\hat \xi^{\mu,A}_i \hat \xi^{\mu,A}_j )^2,
$$
with $\lVert \cdot \lVert _{UF}$ being the (unnormalized) Frobenius norm. The optimization task can be solve by means of gradient flow:
$$
\dot J_{i,j}= - \partial_{J_{i,j}} \mathcal L (\bb J) =\frac1{KM} \sum_{\mu,A=1}^{K,M}\hat  \xi^{\mu,A}_i \hat \xi^{\mu,A}_j -J_{i,j}.
$$
The first contribution on the r.h.s. is the empirical expectation of the spin-spin correlation evaluated on the training set $\mc D$, while the second term plays the role of a regularization term (yielding to exponential decay of the network weights). Clearly, the fixed point of the gradient flow leads immediately to $\bb J = (KM)^{-1} \bb X^{\text{ex}} (\bb X^{\text{ex}})^T$,\footnote{Notice that the choice of the loss function as the empirical squared error leads to a different normalization factor in the definition of the Hebbian coupling matrix. This is however not a problem, since we only deals with zero-temperature dynamics. In the statistical-mechanical setting, this would only lead to a trivial rescaling of the temperature in the high-storage regime.} where (following the notation adopted in Subsec. \ref{sec:2.2}) $\bb X^{\text{ex}}$ is the $N\times KM$ matrix with the training examples on the columns. For our concerns, it will be helpful to consider the discretized version of the gradient flow, namely
$$
\Delta J_{ij}^{(k)}= - \epsilon \partial_{J_{i,j}}\mathcal L (\bb J) =  \frac\epsilon{KM} \sum_{\mu,A=1}^{K,M} \hat \xi^{\mu,A}_i \hat \xi^{\mu,A}_j -\epsilon J_{i,j}^{(k-1)},
$$
with $\Delta J_{ij}^{(k)}$ being the update of the network weights at training time $k$, and $\epsilon$ is the learning rate. Because of the simplicity of the training equations, we can recast everything in terms of the coupling matrix itself at time $k$. Specifically:
\begin{align*}
  J_{i,j}^{(k)}&= \frac{\epsilon}{KM} \sum_{\mu,A=1}^{K,M}\hat  \xi^{\mu,A}_i \hat \xi^{\mu,A}_j+(1-\epsilon) J_{i,j}^{(k-1)}= \frac{\epsilon}{KM} \sum_{s=0}^{k-1}(1-\epsilon)^s\cdot\sum_{\mu,A=1}^{K,M}\hat  \xi^{\mu,A}_i\hat  \xi^{\mu,A}_j+(1-\epsilon)^k J_{i,j}^{(0)}=\\
  &=\frac{1}{KM} [1-(1-\epsilon)^k]\sum_{\mu,A=1}^{K,M}\hat  \xi^{\mu,A}_i \hat \xi^{\mu,A}_j+(1-\epsilon)^k J_{i,j}^{(0)}.
\end{align*}
Since $0<\epsilon<1$, taking the $k\to\infty$ limit immediately leads to $J^{(\infty)}_{i,j}= \frac1{KM}\sum_{\mu,A}\hat  \xi^{\mu,A}_i\hat  \xi^{\mu,A}_j$, regardless of the initial condition $\bb J^{(0)}$. Thus, the Hebbian coupling matrix is the fixed point of the gradient descent algorithm with loss function $\mathcal L$.
\par\medskip
In order to link the dilution of training examples with the dropout technique, we slightly modify the previous algorithm implementing the following requirements: {\it i}) we present a single training example per update, and the whole training dataset is presented just once (thus, the total number of updates will be $k_{\text{max}}=KM$); {\it ii}) we use a cooling schedule $\epsilon = \epsilon_k= \frac{\epsilon_0}{1+\epsilon_0 k}$ (this is just for computational convenience, however has turned out to be effective in unlearning schemes, see for example \cite{fachechi2019dreaming}). With these ingredients, the training equations become
\begin{align*}
     J_{i,j}^{(k)}&=\big(1-\frac{\epsilon_0 }{1+\epsilon_0 k}\big)J_{i,j}^{(k-1)}+\frac{\epsilon_0}{1+\epsilon_0 k}\hat\xi^k_i\hat \xi^k_j,
\end{align*}
with $k=1,\dots,KM$ and using $\hat \xi^k_i$ instead of $\hat \xi^{\mu,A}_i$ for the sake of clarity.\footnote{We stress that presenting a single example per update is an extreme version of the mini-batch gradient descent, where spin-spin correlations are computed point-wisely: thus, we do not normalize the updates by $(KM)^{-1}$.} Expressing again the $k$-th value of the coupling matrix in terms of the initial condition and setting $k= k_{\text{max}}=KM$, we have
\begin{align*}
    J_{i,j}^{(\infty)}&=J_{i,j}^{(KM)}=\frac{\epsilon_0 }{1+\epsilon_0 KM} J_{i,j}^{(0)}+\frac{\epsilon_0}{1+\epsilon_0 KM}\sum_{\mu,A=1}^{K,M}\hat\xi^{\mu,A}_i\hat \xi^{\mu,A}_j.
\end{align*}
Now, if we set $\bb J^{(0)} =0$, $\epsilon_0 =\frac{\gamma}{KM}$ with $1\ll \gamma \ll KM$, we have $J_{i,j}^{(\infty)}\approx \frac1{KM}\sum_{\mu,A=1}^{K,M}\hat\xi^{\mu,A}_i\hat \xi^{\mu,A}_j $. The reason for choosing this version of the gradient descent is that dropout can be easily implemented in this setting. To do this, we apply the silencing mechanism {\it only} on the data-dependent part of the algorithm, as the contribution involving the $J_{i,j}^{(k)}$ will only cause a (slow) decay of the network weights. Furthermore, we assume that the diagonal entries of the coupling matrix remain set to zero throughout the entire training procedure. Thus, calling $\theta^{(k)}_{i,j}$ a set of i.i.d. $\text{Ber}(p)$, the dropout gradient descent reads as
\begin{align*}
     J_{i,j}^{(k)}&=\big(1-\frac{\epsilon_0 }{1+\epsilon_0 k}\big)J_{i,j}^{(k-1)}+\frac{\epsilon_0}{1+\epsilon_0 k}\theta^{(k)}_{i,j}\hat\xi^k_i\hat \xi^k_j.
\end{align*}
In order to maintain the symmetry of the coupling matrix,\footnote{This is crucial for ensuring the detailed balance principle to hold.} the most natural choice is that assume that the $\theta^{(k)}_{i,j}$ are factorized in the product of two $\text{Ber}(1-d)$, namely $\theta^{(k)}_{i,j}=\theta^{(k)}_{i}\theta^{(k)}_{j}$ with $P(\theta^{(k)}_i =0)=d=1-\sqrt{p}$. In this way, $\theta^{(k)}_{i,j}\hat\xi^k_i\hat \xi^k_j= (\theta^{(k)}_{i}\hat\xi^k_i)(\theta^{(k)}_{j}\hat \xi^k_j)= \xi^k_i \xi^k_j$, where we defined $\xi^k_i = \theta^{(k)}_i \hat \xi^k_i$. Along the same lines, the final value of the coupling matrix will be $J_{i,j}^{(\infty)}\approx \frac1{KM}\sum_{\mu,A=1}^{K,M}\xi^{\mu,A}_i \xi^{\mu,A}_j $, where the stored vectors are now the $\bb\xi^{\mu,A}$. Recalling that $\hat {\bb \xi}^{\mu,A}=\hat \chi^{\mu,A}_i \xi^\mu_i$ and that, in this setting $k\equiv (\mu,A)$, we have $\chi^{\mu,A}_i = \theta^{\mu,A}_i \hat \chi^{\mu,A}_i \xi^\mu_i=\chi^{\mu,A}_i \xi^\mu_i   $ upon identifying $\chi^{\mu,A}_i= \theta^{\mu,A}_i \hat \chi^{\mu,A}_i$. It is now straightforward to notice that the $\chi^{\mu,A}_i$ have the same statistics given in Eq. \eqref{eq:P_chi} used for defining the diluted training examples. Hence, the relation between the dropout and the framework defined in Sec. \ref{sec:2.2} is finally fulfilled.
\par
\medskip
\begin{figure}[h!]
    \centering  \includegraphics[width=0.8\textwidth]{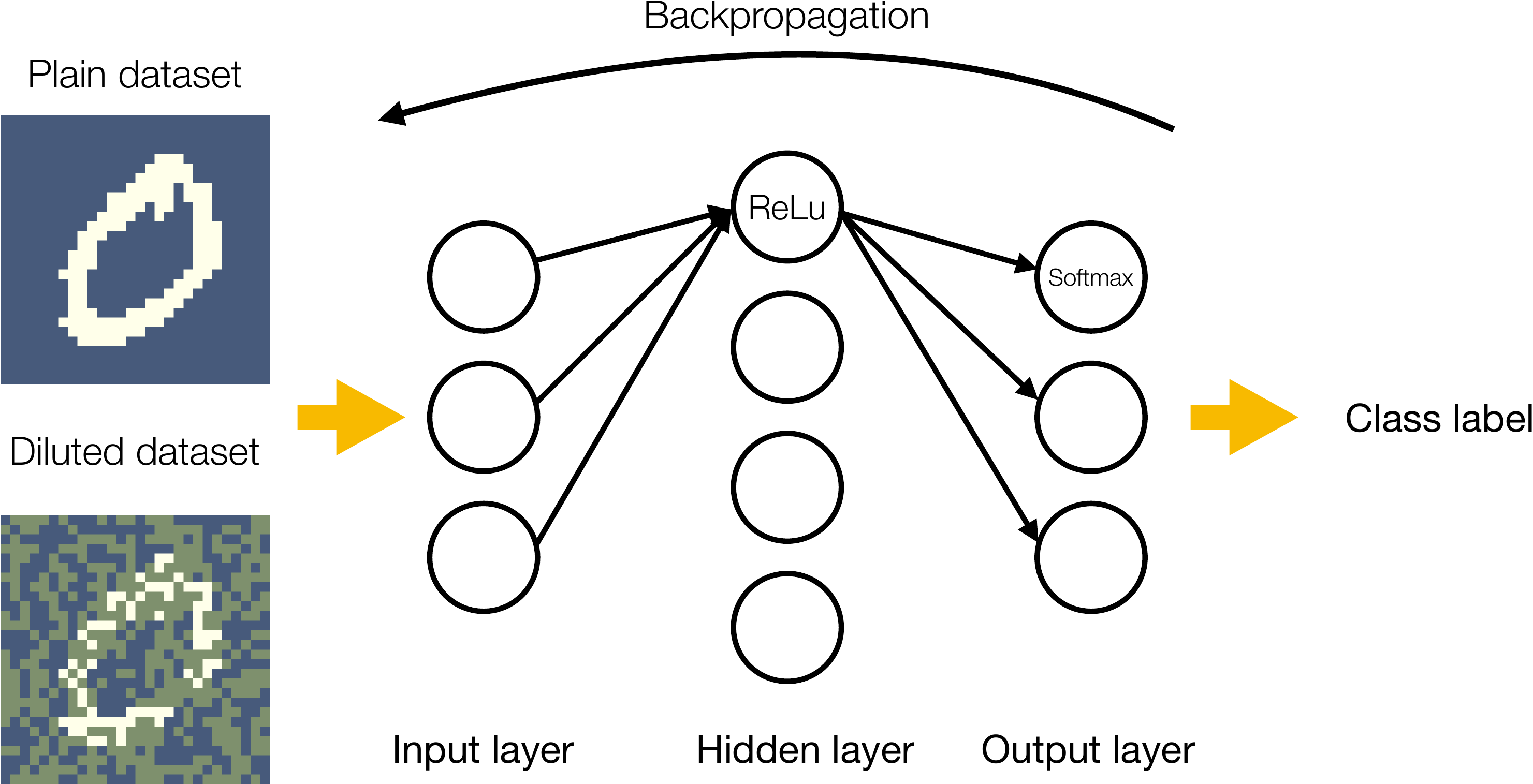}
    \caption{{\bfseries Schematic representation of the experiment with feed-forward neural network with (non-)diluted datasets.}}
    \label{fig:FFN_dilution_scheme}
\end{figure}
In order to stress the functioning of the dataset dilution as a dropout-like procedure, we train a feed-forward neural network with a single hidden layer with varying number of hidden neurons. We consider the MNIST dataset, binarized using local adaptive thresholding to emulate the Boolean nature of patterns. Dilution is introduced to the 60000 training images by applying a random (but fixed for each example) silencing mask setting a fraction $d$ of pixels to zero. We compare the non-diluted and $d=0.5$ case. The feed-forward neural networks consists in a input layer with 784 neurons, $h=10,20,\dots,50$ hidden neurons with ReLU activation function in the hidden layer, and an output one with Softmax activation function, see Fig. \ref{fig:FFN_dilution_scheme} for a schematic representation. Training is performed with backpropagation algorithm with categorical cross-entropy as loss function for 50 epochs with mini-batches of size 128. Importantly, evaluation is carried out on the non-diluted version of the training set, meaning that dilution is applied only during training—similarly to standard dropout. Test performance is computed on the original, unaltered test set. The resulting loss and accuracy values are shown in Fig.~\ref{fig:FFN_dilution}.

\begin{figure}[h!]
    \centering  \includegraphics[width=\textwidth]{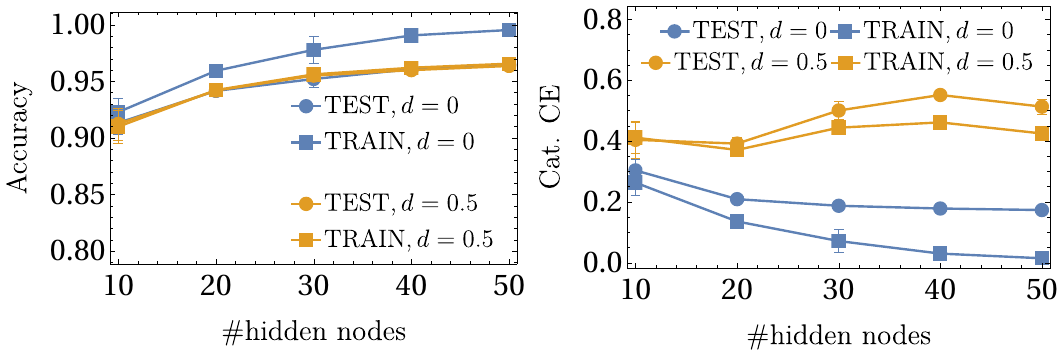}
    \caption{{\bfseries Numerical results for the training procedure of the feed-forward neural network.} The plots show the accuracies (left plot) and categorical cross-entropy (right plot) for training and validation datasets (resp. squares and circles) without dilution (blue curves) or with moderately diluted dataset (orange curves), as a function of the number of hidden neurons. The FFN is trained with backpropagation algorithm with categorical cross-entropy loss function for 50 epochs with minibatches of size 128. The training accuracy and loss are computed without dilution, as is done for the dropout technique.}
    \label{fig:FFN_dilution}
\end{figure}

 For the non-diluted case (blue curves) it is straightforward to notice that, as the number of hidden neurons (and, consequently, the trainable weights) the training accuracy (left plot) increases, and reaches values close to 1 for $h=50$, while the validation one attains its maximal value around 0.95. On the other hand, the training loss decreases to zero as the hidden layer gets wider and wider, while the validation one settles around a constant value of $\sim 0.2$. This signals an overspecialization of the system on the training dataset. Instead, introducing a moderate dilution ($d=0.5$) in the training dataset changes the qualitative behavior of the learning procedure: the training and validation accuracies overlaps for any width of the hidden layer, and the losses (despite being higher than the non-diluted case) are always comparable. Then, as with dropout, diluting the training dataset prevents overfitting. More precisely, during training the network still overspecialize on the training dataset, but -- in presence of dilution -- the latter carries less information w.r.t. the non-diluted case, so that when presenting complete data the network response is uniform, regardless if they are training or testing examples. Despite being operatively different, dropout and dataset dilution leads to similar mechanisms for preventing overfitting. In particular, the latter is more flexible, as targeted silencing strategies can be carried out in order to increase the performances of the model.
\section{Proofs} \label{app:proofs}
In this appendix, we collect the proofs of the propositions used in the main text. 
We start recalling the Marchenko-Pastur Theorem, see \cite{bai2010spectral} for a proof.
\begin{Theorem}[Marchenko-Pastur Theorem] \label{marchenkopasturtheorem}
Let $\bb X = \bb X^{(N)}$ be a sequence $N \times K$ real matrices, with $\lim_{N\to\infty}K/N = \alpha \in (0, \infty)$, and such that, for each $N$, its entries $ X^{(N)}_{i,\mu}$ are i.i.d. centered random variables with variance $\sigma^2<+\infty$, and $\exists \{\eta_N\}_N$ such that
\begin{equation} \label{boundconditions}
    \eta_N \to 0 \quad \text{and} \quad  X^{(N)}_{i,\mu} \leq \eta_N \sqrt{N} \text{, } \forall \text{ } N, i, \mu.
\end{equation}
Let $\bb S_N = \frac{1}{K} \bb X^{(N)} \bb X^{(N)^T}$ and $\mu_{\bb S}$ the empirical distribution of its eigenvalues:
\begin{equation} \label{empiricaldistribution}
    \mu_{\bb S}(x) = \frac{1}{N} \# \{j \leq N : \lambda^{\bb S}_j < x \}.
\end{equation}
Then, for $N \to \infty$,  $\mu_{\bb S}$ converges to a distribution $MP(\alpha, \sigma )$ with probability measure
\begin{align*}
        \mu_{MP}(A) =\begin{cases} (1-\alpha) \mathbf{1}_{0\in A} + \alpha \mu_{bulk}(A),& \text{if } 0\leq \alpha < 1\\
\alpha\mu_{bulk}(A),& \text{if } \alpha \geq 1 \end{cases},
    \end{align*}
    where
    \begin{align*}
        d\mu_{bulk}(x) = \frac{1}{2\pi \sigma^2 } \frac{\sqrt{(\lambda_{+} - x)(x - \lambda_{-})}}{x} \,\mathbf{1}_{x\in[\lambda_{-}, \lambda_{+}]}\, \mathrm d x,
    \end{align*}
    and $\lambda_\pm = \sigma^2 (1 \pm \sqrt{1/\alpha})^2$.
\end{Theorem}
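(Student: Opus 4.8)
The plan is to prove the statement by the \emph{resolvent} (Stieltjes transform) method, which accommodates the general finite-variance hypothesis together with the Lindeberg-type bound \eqref{boundconditions} in a uniform way. First I would reduce to the normalized case $\sigma=1$: replacing $\bb X^{(N)}$ by $\sigma^{-1}\bb X^{(N)}$ rescales every eigenvalue of $\bb S_N$ by $\sigma^{-2}$ and sends the claimed edges $\lambda_\pm=\sigma^2(1\pm\sqrt{1/\alpha})^2$ to $(1\pm\sqrt{1/\alpha})^2$, so it suffices to identify the limit for unit variance and restore $\sigma$ at the end by scaling. Next I would exploit \eqref{boundconditions} to truncate and recenter the entries at level $\eta_N\sqrt N$: since $\eta_N\to 0$, the truncated matrix differs from $\bb X^{(N)}$ by a correction whose effect on $\mu_{\bb S}$ is controlled through the rank inequality for empirical spectral distributions together with a vanishing variance bound, so the two empirical laws share the same weak limit. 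This is precisely where \eqref{boundconditions} does its work, ensuring that only the common variance of the entries survives in the limit.

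The core of the argument is to show that the random Stieltjes transform $s_N(z)=\tfrac1N\,\mathrm{Tr}(\bb S_N-z\bb I)^{-1}$, defined for $z\in\mathbb{C}^+$, converges to a deterministic limit obeying a self-consistent equation. I would establish concentration $s_N(z)-\mathbb E\,s_N(z)\to 0$ almost surely via the martingale-difference (Azuma--Hoeffding / McDiarmid) bound: replacing one column of $\bb X^{(N)}$ changes $s_N(z)$ by $O(1/N)$ by the rank-one resolvent perturbation identity, so the fluctuations have order $N^{-1/2}$ with Gaussian-type tails that are summable in $N$, whence almost sure convergence follows by Borel--Cantelli. To pin down $\mathbb E\,s_N(z)$ I would use the Schur complement for the diagonal resolvent entries, writing $[(\bb S_N-z)^{-1}]_{ii}$ in terms of a quadratic form in the $i$-th row against the resolvent of the minor $\bb S_N^{(i)}$; concentration of quadratic forms (of the truncated entries) replaces this quadratic form by $\sigma^2$ times a normalized trace, which in the limit is again $s(z)$. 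Collecting terms, $s(z):=\lim_N\mathbb E\,s_N(z)$ satisfies, up to errors that vanish as $N\to\infty$,
\[
s(z)=\Bigl(\sigma^2(1-\alpha^{-1})-z-\alpha^{-1}\sigma^2 z\,s(z)\Bigr)^{-1},
\]
a quadratic fixed-point equation with a unique solution in $\mathbb{C}^+$, selected by $\mathrm{Im}\,s(z)>0$.

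Finally I would identify this solution with the Stieltjes transform of $MP(\alpha,\sigma)$ and invert it. The branch points of the square root obtained by solving the quadratic occur exactly at $z=\lambda_\pm$, and Stieltjes inversion $\mathrm d\mu(x)=\tfrac1\pi\lim_{\varepsilon\downarrow 0}\mathrm{Im}\,s(x+i\varepsilon)\,\mathrm d x$ recovers the bulk density $\tfrac{1}{2\pi\sigma^2}\sqrt{(\lambda_+-x)(x-\lambda_-)}/x$ on $[\lambda_-,\lambda_+]$; the residue of the simple pole of $s$ at $z=0$, present precisely when $\alpha<1$ (where $\bb S_N$ carries at least $N-K$ null eigenvalues by rank), produces the atom $(1-\alpha)\delta_0$, matching the two cases of the statement.

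The main obstacle I anticipate is the self-consistency step: controlling the error terms in the Schur-complement expansion uniformly in $z$ on compact subsets of $\mathbb{C}^+$, i.e. proving the concentration of the quadratic forms with the only available input being finite variance plus \eqref{boundconditions}, rather than higher-moment bounds. As a robust alternative that sidesteps the resolvent estimates, I would fall back on the method of moments: compute $\tfrac1N\mathbb E\,\mathrm{Tr}(\bb S_N^k)$, show via \eqref{boundconditions} that only pair-contractions survive so that the limit equals the $k$-th moment of $MP(\alpha,\sigma)$ (enumerated by Narayana-type numbers), bound $\mathrm{Var}\bigl(\tfrac1N\mathrm{Tr}(\bb S_N^k)\bigr)$ to upgrade to almost sure convergence, and conclude by the Carleman criterion, since the compactly supported bulk makes $MP(\alpha,\sigma)$ moment-determinate.
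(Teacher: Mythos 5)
You should know at the outset that the paper does not prove this statement at all: Theorem \ref{marchenkopasturtheorem} is recalled as a classical result, with the proof explicitly delegated to the cited monograph \cite{bai2010spectral}. There is therefore no in-paper argument to compare yours against; what you have sketched is, in outline, precisely the standard proof found in that reference (and, in your fallback form, the original moment-method proof). Your resolvent route is sound: the rescaling to $\sigma=1$, the almost-sure concentration of $s_N(z)$ via McDiarmid over column replacements with the rank-two perturbation bound $\vert\mathrm{Tr}(A-z)^{-1}-\mathrm{Tr}(B-z)^{-1}\vert\le \mathrm{rank}(A-B)/\Im z$, the Schur-complement derivation of the fixed point, and the Stieltjes inversion all check out. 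In particular, your self-consistent equation is the correct one for this normalization ($\bb S_N=\frac1K\bb X\bb X^T$ with $N/K\to\alpha^{-1}$): its branch points sit exactly at $\lambda_\pm=\sigma^2(1\pm\sqrt{1/\alpha})^2$, and the simple pole of $s$ at $z=0$, present for $\alpha<1$, carries residue $-(1-\alpha)$, producing the atom of the statement (consistent with the deterministic rank bound giving at least $N-K$ null eigenvalues).

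Three small repairs. First, the normalized trace produced by the quadratic-form concentration in the Schur step is not $s(z)$ itself but the companion transform $\underline{s}(z)$ of the $K\times K$ matrix $\frac1K\bb X^T\bb X$; since the two matrices share their nonzero spectra, $\underline{s}(z)=\alpha^{-1}s(z)+(\alpha^{-1}-1)z^{-1}$ in the limit, and substituting this relation is what yields your displayed equation. As phrased (``is again $s(z)$'') the step is off, though the fact that your final equation is correct shows you implicitly used the right relation. Second, truncation is vacuous here: condition \eqref{boundconditions} already bounds the entries at level $\eta_N\sqrt N$; in \cite{bai2010spectral} truncation-and-recentering is the device that reduces the general finite-variance theorem to exactly this bounded setting, so that step belongs to the more general statement, not to this one. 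Third, the obstacle you flag --- concentration of quadratic forms with only finite variance plus \eqref{boundconditions} --- is resolved by the standard variance estimate for quadratic forms, because \eqref{boundconditions} supplies the fourth-moment bound $\mathbb{E}[(X^{(N)}_{i,\mu})^4]\le\sigma^2\eta_N^2 N$; combined with $\Vert M\Vert\le(\Im z)^{-1}$ and $\mathrm{Tr}(MM^*)\le K(\Im z)^{-2}$, the variance of $\frac1K x^T M x$ is $O\big(\eta_N^2\,(N/K)\,(\Im z)^{-2}\big)\to 0$. This gives convergence in probability at each fixed $z\in\mathbb{C}^+$, which your martingale bound plus a normal-families (Vitali) argument upgrades to almost-sure weak convergence. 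With these touch-ups, both your main route and the moment-method fallback (pair-contraction counting, Narayana moments, variance $O(N^{-2})$, Carleman or compact support for determinacy) are complete, standard proofs of the theorem the paper imports.
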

In the following we will always retain a setting $\alpha <1$ and we will say that a random variable $Z$ has a \textit{shifted} (or modified) Marchenko-Pastur law $MP(\alpha, \sigma, s)$ with shift $s$ iff $Z-s \sim MP(\alpha, \sigma)$.

\subsection{Proof of Prop. \ref{prop:allspectra}} \label{app:proof_1}
\begin{proof}[Proof of Prop. \ref{prop:allspectra}, first point]
    Recalling $\boldsymbol{\Gamma} = \frac{1}{N} \bb X \bb X^T$, where $\bb X$ is a $N \times K$ random matrix such that $\mathbf X_{i,\mu} = \xi^\mu_i$ are i.i.d. $Rad(0)$ random variables, and $K/N \to \alpha \in (0,1)$, we can write $\boldsymbol{\Gamma} = \frac{1}{K} \sqrt\alpha \bb X \sqrt\alpha \bb X^T$. Then, $\boldsymbol{\Gamma}$ satisfies the hypothesis of Marchenko-Pastur theorem, since $\mathbb{E}[\sqrt\alpha \xi^\mu_i] = 0$ and $ \mathbb{E}[(\sqrt\alpha \xi^\mu_i)^2] = \alpha,$ hence the empirical distribution of the eigenvalues of $\boldsymbol{\Gamma}$ converges to a $MP\left(\alpha, \alpha ,0\right)$.
\end{proof}
In Fig. \ref{hebbbasic_spectrum}, we reported the comparison between the empirical spectral distributions for various values of $\alpha$ and its limiting distribution. 
\begin{figure}[h!]
    \centering  \includegraphics[width=\textwidth]{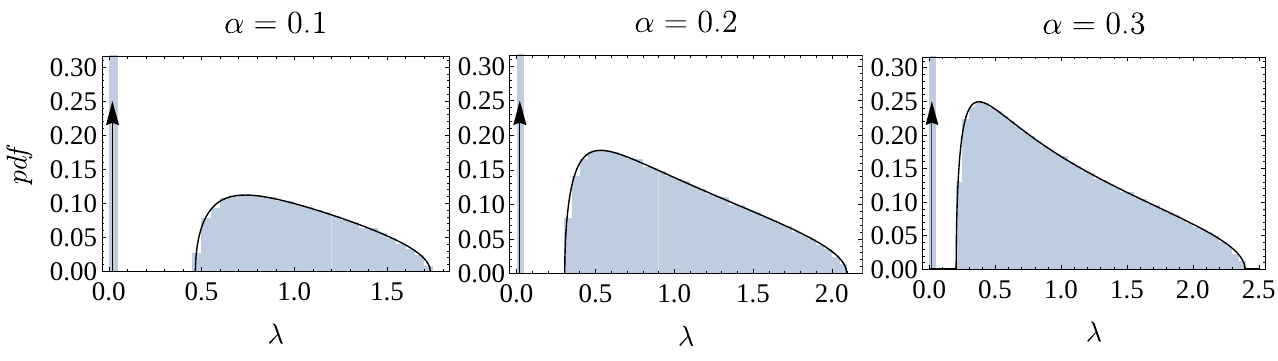}
    \caption{Histogram of the spectral distribution of $\boldsymbol \Gamma$ when $N=1000$, for different values of $\alpha$, against the analytical formula obtained by Prop. \ref{prop:allspectra}. The empirical distribution is realized by collecting the eigenvalues of a sample of 100 different realizations of the Hebbian matrices.}
    \label{hebbbasic_spectrum}
\end{figure}
As also remarked in the main text, also the supervised setting falls into the class of covariance matrices of Wishart form (with a suitable redefinition of the scale parameter in the Marchenko-Pastur distribution), thus the asymptotic spectral distribution of the supervised setting follows the same lines as the basic storing one:
\begin{proof}[Proof of Prop. \ref{prop:allspectra}, second point]
    Recalling $\boldsymbol{\Gamma}^s = \frac{1}{N} \Bar{\bb X} \Bar{\bb X}^T$, where $\Bar{\bb X}$ is a $N \times K$ random matrix such that $\bar {\mathbf X}_{i,\mu} =\bar \xi^\mu_i$ (see Eq. \ref{supervisedhebbmatrix}), we can write $\boldsymbol{\Gamma}^s = \frac{1}{K} \sqrt\alpha \Bar{\bb X} \sqrt\alpha \Bar {\bb X}^T$. Then, $\boldsymbol{\Gamma}^s$ satisfies the hypothesis of Marchenko-Pastur theorem, since $\mathbb{E}[\sqrt\alpha \xi^\mu_i] = 0$ and $ \mathbb{E}[(\sqrt\alpha \xi^\mu_i)^2] = \alpha\sigma^s,$ with
    $$
    \sigma^s = \sigma^s(r,d,M) =(1-d) \Big((1-d) r^2 + \frac{1-(1-d)r^2}{M}\Big),
    $$
    hence the empirical distribution of the eigenvalues of $\boldsymbol{\Gamma}^s$ converges to a $MP\left(\alpha, \alpha\sigma^s ,0\right)$.
\end{proof}
In Fig. \ref{hebbsupervised_spectrum}, we again provide a comparison between the empirical spectral distributions and their limiting ones for various values of the control parameters $r$ and $d$ at $\alpha=0.1$ and $M=50$. As is clear, the theoretical distribution is consistent with the numerical results. In both cases, the theoretical results are robust even for relatively small sizes of the networks ($N\sim 10^2$).
\begin{figure}[h!]
    \centering  \includegraphics[width=\textwidth]{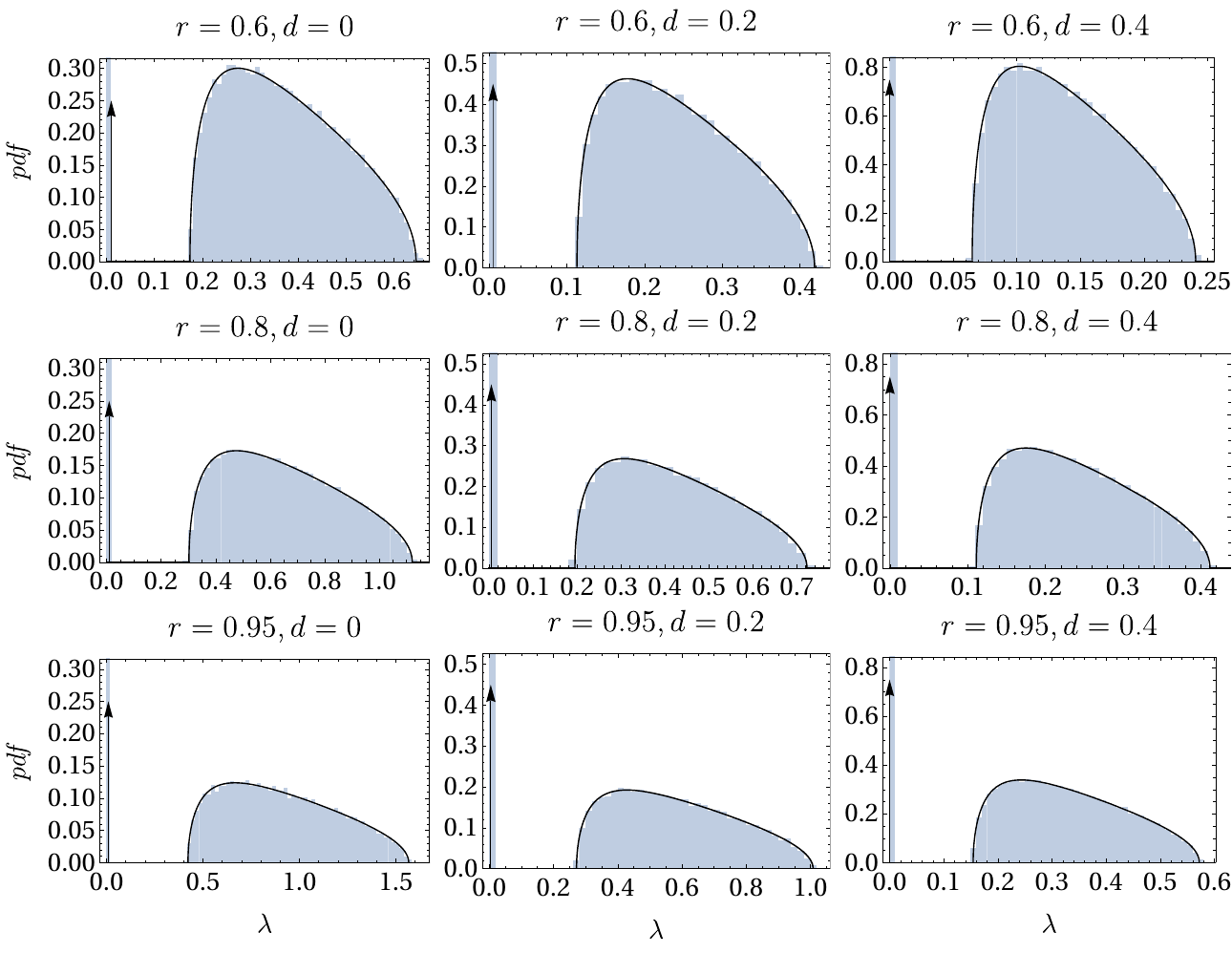}
    \caption{Histogram of the spectral distribution of $\boldsymbol \Gamma ^s$ when $N=1000$, $\alpha=0.1$ and $M=50$, for different values of $r$ and $d$. The empirical distribution is realized by collecting the eigenvalues of a sample of 100 different realizations of the Hebbian matrices.}
    \label{hebbsupervised_spectrum}
\end{figure}

\begin{proof}[Proof of Prop. \ref{prop:allspectra}, third point]
   Recall that the explicit expression of the Hebbian kernel in the unsupervised setting reads as
    \begin{equation} \label{eq:unsupunsup}
            \Gamma^u_{i,j} = \frac{1}{N} \sum_{\mu} \Big(\frac{1}{M}\sum_{A}\chi^{\mu,A}_i \chi^{\mu,A}_j\Big) \xi^\mu_i \xi^\mu_j.
    \end{equation}
    By virtue of the AFM (see App. \ref{afm}) and recalling that $(\xi^\mu_i )^2=1$, 
we can write (see also Eq. \eqref{eq:eqafm})
\begin{align*}
    \Gamma^u_{i,j} &= 
\frac{1}{N} \sum_{\mu} \phi^\mu_i \phi^\mu_j \xi^\mu_i \xi^\mu_j + \delta_{ij} \frac{1}{N}\sum_\mu \Big(\frac{1}{M}\sum_{A}(\chi^{\mu,A}_i)^2-(\phi^{\mu}_i)^2\Big),
\end{align*}
which is again the sum of a Wishart Matrix and a diagonal one. Using the strong law of large numbers, the diagonal can be evaluated straightforwardly in the thermodynamic limit, and it converges to 
\begin{equation*}
    \frac{1}{N}\sum_\mu \Big(\frac{1}{M}\sum_{A}(\chi^{\mu,A}_i)^2-(\phi^{\mu}_i)^2\Big) \to \alpha (1-d-\sigma^u(r,d,M)),
\end{equation*}
with $\sigma^u(r,d,M) = \mathbb{E}[(\phi^{\mu}_i)^2]$. Proceedings as in App. \ref{afm}, we can straightforwardly evaluate
\begin{align*}
    \sigma^u(r,d,M) &= \mathbb{E}[(\phi^{\mu}_i)^2] = \sqrt{\mathbb{E}[(\phi^{\mu}_i \phi^\mu_j)^2]}
    = \sqrt{\mathbb{E}\Big[\Big(\frac{1}{M}\sum^M_{A=1} \chi^{\mu, A}_i \chi^{\mu, A}_j\Big)^2\Big]} =\\
    &= \sqrt{\frac{1}{M^2} \Big(M \mathbb{E}[(\chi^{\mu, A}_i)^2] + M(M-1)(\mathbb{E}\chi^{\mu, A}_i)^2\Big)}=\\
    &= \sqrt{(1-d)^4r^4+(1-d)^2\frac{1-(1-d)^2r^4}{M}}.
\end{align*}
As for the contribution involving the Wishart matrix, we notice again that its factors are centered and with finite variance, hence its easy to show that the limiting spectral distribution is a $MP\left(\alpha, \alpha \sigma^u(r,d,M)\right)$. Putting all pieces together, the limiting spectral distribution of the Hebbian kernel in the unsupervised setting (within the AFM) is finally $MP(\alpha,\alpha\sigma^u (r,d,M), \alpha(1-d-\sigma^u(r,d,M)),$ which concludes the proof.
\end{proof}
A comparison between the empirical spectral distributions of the Hebbian kernel in the unsupervised setting and the limiting one derived by virtue of the AFM is shown in Fig.~ \ref{hebbunsupervised_spectrum}. Some comments are in order here: first, even at relatively low number of examples per class $M$, there is a substantial agreement of the primary bulk of the limiting spectral distribution with the numerical results; clearly, again the secondary bulk -- also present in the Gaussian scenario -- is still approximated with a $\delta$-peak, although the theoretical results accounts for its location. Further, increasing the dilution parameter $d$ at fixed $M$ (left to right in each plot), the AFM provides a less accurate predictions for the empirical spectral distribution (this is however not surprising, as high values of the dilution results in a lower effective number of examples contributing in the random sums $\frac1M \sum_{A=1}^M \chi^{\mu,A}_i \chi ^{\mu,A}_j$). Despite this, in general the theoretical results given by AFM provides a good starting point to investigate the algebraic properties of the Hebbian kernel in the unsupervised setting, and hence the retrieval capabilities of the related Hopfield network. 
\begin{figure}[h!]
    \centering  \includegraphics[width=\textwidth]{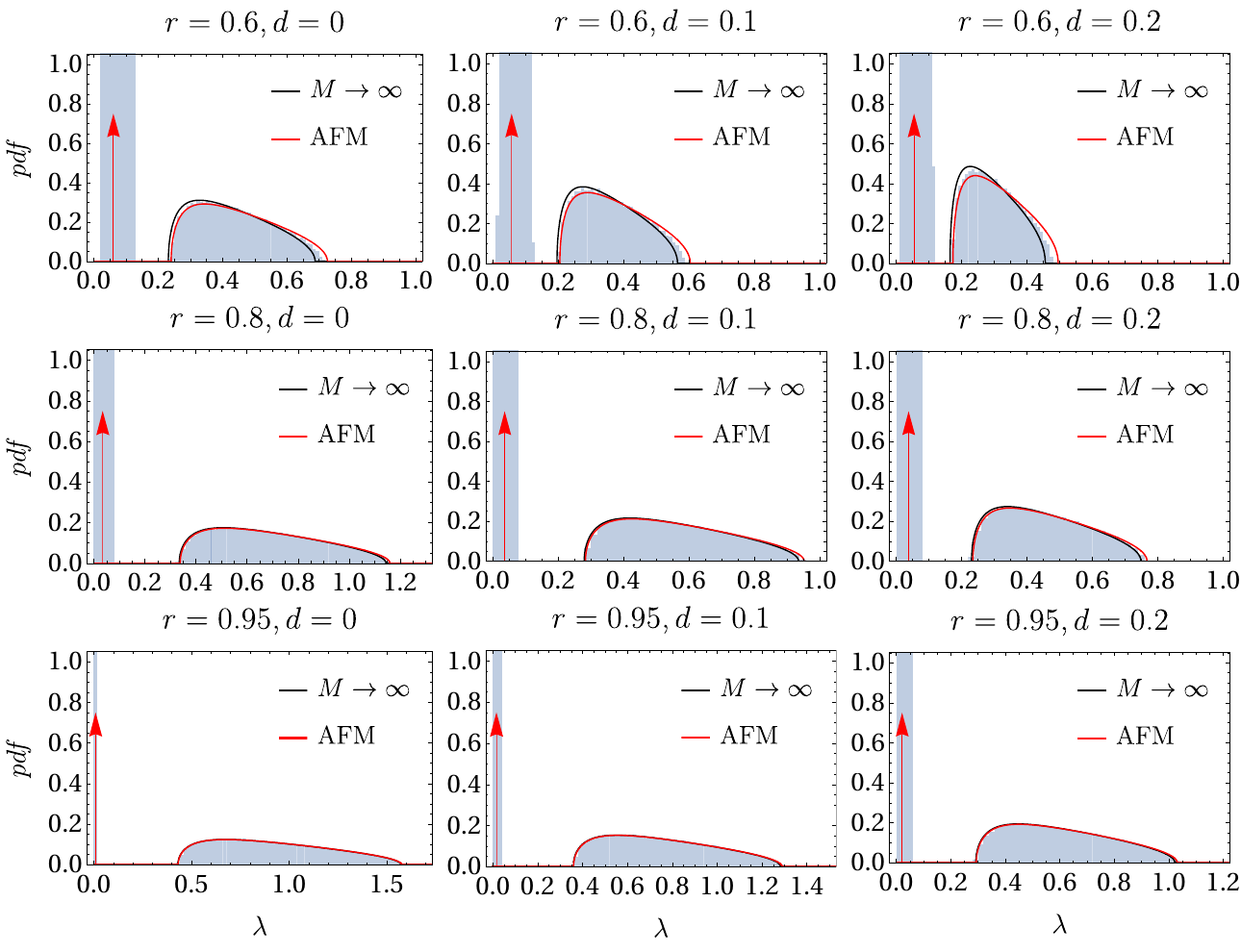}
    \caption{Histogram of the spectral distribution of $\boldsymbol \Gamma ^u$ when $N=1000$, $\alpha=0.1$ and $M=50$, for different values of $r$ and $d$. The empirical distribution is realized by collecting the eigenvalues of a sample of 200 different realizations of the Hebbian matrices. For the theoretical distributions, we only plotted the location of the $\delta$ peak for the AFM approximation.}
    \label{hebbunsupervised_spectrum}
\end{figure}
\subsection{Proof of Prop. \ref{attractiveness_prop}} \label{app: proof_4} 
Before detailing the proof of Prop. \ref{attractiveness_prop}, we give a definition for the first three moments of the shifted Marchenko-Pastur distribution $MP(\alpha, \alpha\sigma, s)$, allowing us to lighten the notations.
\begin{Definition} \label{moments}
    Let $\lambda \sim MP(\alpha, \alpha \sigma, s)$, then we define:
    \begin{itemize}
        \item $\kappa^1(\alpha, \sigma^2, s) = \mathbb{E}[\lambda] = \alpha \sigma^2 + s$;
        \item $\kappa^2(\alpha, \sigma^2, s) = \mathbb{E}[\lambda^2] = \alpha (1+\alpha) (\sigma^2)^2 + 2 \alpha \sigma^2 s + s^2$;
        \item $\kappa^3(\alpha, \sigma^2, s) = \mathbb{E}[\lambda^3] = \alpha (1+3\alpha+\alpha^2) (\sigma^2)^3 + 3 \alpha (1+\alpha) (\sigma^2)^2 s + 3 \alpha \sigma^2 s^2 + s^3$.
    \end{itemize}
\end{Definition}
We are now ready to prove Prop. \ref{attractiveness_prop}.
\begin{proof}
    \begin{align*}
        \mu^{s/u}_1 &= \frac{1}{NK} \mathbb{E}_{\boldsymbol{\eta}} \sum_{i\mu j} \Gamma^{s/u}_{ij} \eta_j \xi^\mu_i \xi^\mu_j = \frac{p}{\alpha N} \textrm{Tr}\boldsymbol\Gamma^{s/u} \boldsymbol\Gamma = \frac{p}{2\alpha N} (\textrm{Tr}(\boldsymbol\Gamma^{s/u})^2 + \textrm{Tr}\boldsymbol\Gamma^2 - \textrm{Tr}(\boldsymbol\Gamma^{s/u} - \boldsymbol{\Gamma})^2)\\
        &\xrightarrow{TDL} \frac{p}{2\alpha} (\mathbb{E}[\lambda^2_{\boldsymbol\Gamma^{s/u}}] + \mathbb{E}[\lambda^2_{\boldsymbol\Gamma}] - \mathbb{E}[\lambda^2_{\boldsymbol\Gamma - \boldsymbol\Gamma^{s/u}}]),
    \end{align*}
    where $TDL$ stands for the thermodynamic limit ($N\to\infty$ and $K/N\to \alpha$) and where we used the ciclicity of the trace together with $\bb A \bb B +\bb B \bb A =  (\bb A^2 +\bb B^2 - (\bb A-\bb B)^2)$. Similarly
    \begin{align*}
        \mu^{s/u}_2 &= \frac{1}{NK} \mathbb{E}_{\boldsymbol{\eta}} \sum_{i\mu jk} \Gamma^{s/u}_{ij} \Gamma^{s/u}_{ik} \eta_j \eta_k \xi^\mu_i \xi^\mu_k = \frac{1-p^2}{N} \textrm{Tr} \boldsymbol\Gamma^2 + \frac{p^2}{\alpha N} \textrm{Tr} (\bb \Gamma^{s/u})^2\boldsymbol\Gamma\\
        &= \frac{1-p^2}{N} \textrm{Tr} (\bb \Gamma^{s/u})^2 + \frac{p^2}{6\alpha N} (\textrm{Tr}(\boldsymbol\Gamma^{s/u}+\boldsymbol{\Gamma})^3 + \textrm{Tr}(\boldsymbol\Gamma^{s/u}-\boldsymbol{\Gamma})^3 - 2 \textrm{Tr} \boldsymbol\Gamma^2) \\
        & \xrightarrow{N \to \infty} (1-p^2) \mathbb{E}[\lambda^2_{\boldsymbol\Gamma^{s/u}}] + \frac{p^2}{6\alpha} (\mathbb{E}[\lambda^3_{\boldsymbol\Gamma - \boldsymbol\Gamma^{s/u}}] + \mathbb{E}[\lambda^3_{\boldsymbol\Gamma + \boldsymbol\Gamma^{s/u}}] - 2 \mathbb{E}[\lambda^3_{\boldsymbol\Gamma}]),
    \end{align*}
    where we used that $\bb A^2 \bb B +\bb A \bb B \bb A + \bb B\bb  A^2= \frac12 [(\bb A+ \bb B)^3+(\bb B - \bb A)^3 -2 \bb B ^3]$ and, due to the ciclicity of the trace, $\text{Tr}(\bb A^2 \bb B +\bb A \bb B \bb A + \bb B\bb A^2) = 3\text{Tr} \bb A ^2 \bb B.$
    The problem now reduces to calculating the asymptotic distribution for each contributions in the previous equations. Now, from the propositions in Sec. \ref{sec: spectra} we already know that 
    \begin{equation*}
        \lambda_{\boldsymbol{\Gamma}} \sim MP\left(\alpha, \alpha, 0\right), \quad \lambda_{\boldsymbol{\Gamma}^s} \sim MP\left(\alpha, \alpha \sigma^s, 0\right), \quad \lambda_{\boldsymbol{\Gamma}^u} \sim MP\left(\alpha, \alpha \sigma^u, s^u\right).
    \end{equation*}
    Now, both $\mu_1 ^{s/u}$ and $\mu_2 ^{s/u}$ depends on the matrix $\frac1N \text{Tr}(\bb \Gamma ^{s/u}-\bb \Gamma)^2$, namely the normalized Frobenius norm of the Hebbian kernel in the (un)supervised setting w.r.t. the storing case. This quantity will be thoroughly analyzed in App.~\ref{sec:Frobenius} and here, we simply recall the necessary results. In both cases, we used the AFM to see that
    \begin{equation*}
        \lambda_{\boldsymbol\Gamma - \boldsymbol\Gamma^{s/u}} \sim MP(\alpha, \alpha\sigma^{s/u}_{-}, s^{s/u}_{-}),
    \end{equation*}
    with
    \begin{itemize}
        \item $\sigma^s_- =\sqrt{1 - 2(1-d)^2r^2 + \big((1-d)^2r^2 + (1-d)\frac{1-(1-d)r^2}{M}\big)^2}$;
        \item $s^s_- =\alpha (1 -  \sigma^s - \sigma^s_-)$;
        \item $\sigma^u_- = \sqrt{1-2(1-d)^2r^2+(\sigma^u)^2} $;
        \item $s^u_- =  \alpha \left(d - \sigma^u_-\right)$,
    \end{itemize}
    and we recall that
    \begin{eqnarray*}
     \sigma^s(r,d,M)& =& (1-d) \Big((1-d) r^2 + \frac{1-(1-d)r^2}{M}\Big),\\
     \sigma^u(r,d,M) &= &\sqrt{(1-d)^4r^4+(1-d)^2\frac{1-(1-d)^2r^4}{M}}.
    \end{eqnarray*}
 Analogously, it is easy to show that
    \begin{equation*}
        \lambda_{\boldsymbol\Gamma + \boldsymbol\Gamma^{s/u}} \sim MP(\alpha, \alpha\sigma^{s/u}_{+}, s^{s/u}_{+}),
    \end{equation*}
    where
    \begin{itemize}
        \item $\sigma^s_{+} = \sqrt{1 + 2(1-d)^2r^2 + \left((1-d)^2r^2 + (1-d)\frac{1-(1-d)r^2}{M}\right)^2}$;
        \item $s^s_{+} = \alpha \left(1 + \sigma^s - \sigma^s_{+}\right)$;
        \item $\sigma^u_{+} = \sqrt{1 + 2(1-d)^2r^2+(\sigma^u)^2}$;
        \item $s^u_{+} = \alpha \left( 1 + (1-d) - \sigma^u_{+}\right)$,
    \end{itemize}
   Using these results, we can express $\mu_{1,2}^{s/u}$ in terms of the moments of Def. \ref{moments} and derive Eqs. (\ref{eq:mu1su}-\ref{eq:mu2su}). This concludes the proof.
\end{proof}

\section{The Approximate Factorization Method} \label{AFMapp}
In the canonical Marchenko-Pastur Theorem, the entries of the factor matrix $\bb{X}$ are assumed to be independent. Conversely, in the one-step analysis we dealt with cases (namely, the unsupervised setting) where this assumption does not hold anymore. In this appendix, we deepen the Approximate Factorization Method (AFM) -- already introduced in Def. \ref{def:afm} -- allowing to bypass mutual dependence of entries involved in the family of covariance matrices (covering the unsupervised Hebbian setting), and thus to compute the asymptotic spectral distribution. Further, we will compare the results to the benchmark approximation $M \to \infty$ extensively employed in \cite{regularizationdreaming, agliari2024spectral}.
\par\medskip
Let $M,N,K \in \mathbb N$, $P=KM$ and $\bb X$ be the matrix with entries ${X}_{i,(\mu, A)} = \chi^{\mu, A}_i \xi^\mu_i $ for $i=1,\dots,N$, $ \mu=1,\dots,K$, and $A=1,\dots,M$, where $\{\chi^{\mu, A}_i\}_{i,\mu,A}$ and $\{\xi^\mu_i\}_{i,\mu}$ are, respectively, two independent classes of i.i.d. random variables with finite moments. In particular, let us assume that $\mathbb{E}\xi^\mu_i=0$, and call $\mathbb{E}(\xi^\mu_i)^2=\theta^2$, $\mathbb{E}\chi^\mu_i=r$ and $\mathbb{E}(\chi^\mu_i)^2=\rho^2$. With these definitions, $\bb X$ is a $N\times P$ matrix, and the index $\mu$ labels a group of $M$ columns (each with length $N$), whose individual element is further specified by the index $A$. We consider the Wishart-like ensemble of random matrices\footnote{Here, we adopted the normalization by $P$ instead of $NM$ as in the definition of the unsupervised \eqref{unsupervisedhebbmatrix}, which allows us to avoid unessential rescalings of the eigenvalues.}
\begin{equation}
    \bb S = \frac{1}{P} \bb{X} \bb{X}^T.
\end{equation}
We will consider the spectral properties of this ensemble when $N\to\infty$ with $\lim_{N\to\infty}K / N =\alpha >0$ fixed (the thermodynamic limit). Straightforward computations shows that $
    \mathbb{E}\bb{X}_{i,(\mu, A)} = 0,$ while
\begin{align*}
    \mathrm{Cov}\left(\bb{X}_{i,(\mu, A)}, \bb{X}_{i,(\mu, B \neq A)}\right) =\theta^2 \rho^2 > 0,
\end{align*}
which breaks statistical independence of entries of the $\bb X$ matrix: thus, Marchenko-Pastur theorem cannot be applied.

\subsection{The \texorpdfstring{$M \to \infty$}{M to infinity} limit} \label{Mtoinftysubsec}
A first simplification in the study of Wishart-like correlation matrices occurs in the limit $M\to\infty$ (at fixed $N$, also referred as to the {\it big data} limit \cite{regularizationdreaming, agliari2024spectral}). Indeed, rewriting
\begin{align}\label{eq:unsup_kernel}
     S_{i,j} &=  \frac{1}{KM} \sum^K_{\mu=1} \sum^M_{A=1} {X}_{i,(\mu, A)} {X}_{j,(\mu, A)} = \frac{1}{K} \sum^K_{\mu=1} \Big( \frac{1}{M}\sum^M_{A=1} \chi^{\mu, A}_i \chi^{\mu, A}_j\Big) \xi^\mu_i \xi^\mu_j,
\end{align}
and using the strong law of large numbers, which provides
\begin{align*}
    \frac{1}{M}\sum^M_{A=1} \chi^{\mu, A}_i \chi^{\mu, A}_j \xrightarrow{M\to\infty} \mathbb{E}(\chi^{\mu, A}_i)^2 \delta_{i,j}  +  \mathbb{E}\chi^{\mu, A}_i \cdot \mathbb{E}\chi^{\mu, A}_j (1-\delta_{i,j})= \delta_{i,j} \rho^2 + (1-\delta_{i,j}) r^2,
\end{align*}
we obtain
\begin{align*}
     S_{i,j} 
    &= \frac{r^2}{K} \sum^K_{\mu=1}  \xi^\mu_i   \xi^\mu_j  + \delta_{i,j} (\rho^2 - r^2) \frac{1}{K} \sum^K_{\mu=1} \left(\xi^\mu_i\right)^2.
\end{align*}
In other words, in the big data limit, the Wishart-like matrix $\bb S$ splits as the sum of an actual Wishart matrix and a diagonal one, which can be analyzed separately in the thermodynamic limit. For the diagonal contribution, again using the strong law of large numbers, we get
\begin{equation*}
    \frac{1}{K} \sum^K_{\mu=1} (\xi^\mu_i)^2 \xrightarrow{K \to \infty} \mathbb{E}(\xi^\mu_i)^2 = \theta^2.
\end{equation*}
Hence, the diagonal matrix converges to $\theta^2 (\rho^2 - r^2) \bb{1}$. On the other hand, the Wishart matrix can be written as $\frac{1}{K}\bb Y \bb Y^T$ where $Y_{i,\mu} = r \xi^\mu_i$, whose entries are independent random variables with zero mean and second moment
\begin{equation*}
    \mathbb{E}(r \xi^\mu_i)^2 = r^2 \theta^2.
\end{equation*}
Thus, Marchenko-Pastur Theorem applies, and the empirical spectral distribution of $\frac1K \bb Y \bb Y^T$ converges to a $MP\left(\alpha, r^2 \theta^2\right)$, as $N \to \infty$. Going back to the matrix $\bb S$ in the big data limit, the diagonal contributions only gives a shift in the eigenvalues, and the asymptotic spectral distribution is modified accordingly, leading to the distribution (recall that, in this notation, the third argument of the Marchenko-Pastur law stands for the shift)
\begin{equation} \label{Mtoinftyeq}
    MP\left(\alpha, r^2 \theta^2, \theta^2 (\rho^2 - r^2)\right).
\end{equation}
This asymptotic is expected to fit the real distribution for large (but finite) $M$, yet it fails for (even relatively) small number of examples per class $M$. Further, adopting this limit in the unsupervised setting, it is not possible to capture essential scaling between $M$ and the relevant parameters characterizing the training dataset. This motivates the AFM exploited in the next paragraph.

\subsection{Finite (but large) $M$: the Approximate Factorization Method} \label{afm} 
As already stated before, the peculiarity of the unsupervised setting is due to the presence of the random sums $\frac{1}{M}\sum^M_{A=1} \chi^{\mu, A}_i \chi^{\mu, A}_j$, whose presence breaks the possibility to recover the usual Wishart setting. However, as exploited in the previous Subsection, when $M$ is large enough, simplicitations are expected to hold and, in the big data limit, the factorization in a Wishart-type matrix (expect for a trivial shift of the eigenvalues) is reproduced. Based on this observation, and thus
recover the Marchenko-Pastur Theorem \ref{marchenkopasturtheorem}, we assume that there is a class of i.i.d. random variables, namely $\{\phi^\mu_i\}_{i,\mu}$, such that
\begin{equation}  \label{factorization}
    \phi^\mu_i \phi^\mu_j \approx \frac{1}{M}\sum^M_{A=1} \chi^{\mu, A}_i \chi^{\mu, A}_j, \quad \forall i\neq j \text{ and } \forall \mu.
\end{equation}
from which the name to the Approximate Factorization Method. An immediate observation is that the factors $\phi^\mu_i$ are a function of the errors $\chi^{\mu, A}_i$, but clearly they are independent on patterns $\xi^\mu_i$. Within the AFM, for $i\neq j$, the expression of $ S_{i,j}$ reads as
\begin{equation*}
     S_{i,j} = \frac{1}{K} \sum^K_{\mu=1} \phi^\mu_i \phi^\mu_j \xi^\mu_i \xi^\mu_j.
\end{equation*}
Hence, taking in consideration also the diagonal contributions, we can rewrite the Wishart-like matrix
in the form
\begin{equation}\label{eq:eqafm}
    \bb S = \frac{1}{K} \bb \Phi \bb \Phi^T + \text{diag}\Big[\frac{1}{K} \sum^K_{\mu=1} \Big( \frac{1}{M}\sum^M_{A=1} (\chi^{\mu, A}_i)^2 - (\phi^\mu_i)^2\Big) (\xi^\mu_i)^2\Big],
\end{equation}
where $\bb \Phi$ is a $N \times K$ matrix with entries $\Phi^\mu_i = \phi^\mu_i \xi^\mu_i$. Remarkably, the matrix $\bb S$ is again the sum of a Wishart random matrix and a diagonal matrix. Let us focus on the first contribution. Concerning the $\Phi^\mu_i$ random variables, it is straightforward to notice that:
\begin{itemize}
    \item they are i.i.d.;
    \item they have zero mean;
    \item the variance is
    $\text{Var}(\Phi^\mu_i)=\mathbb{E}[(\phi^\mu_i\xi^\mu_i)^2] = \mathbb{E}[(\phi^\mu_i)^2] \mathbb{E}[(\xi^\mu_i)^2] = \theta^2 \mathbb{E}[(\phi^\mu_i)^2]$.
\end{itemize}
At this level, we only need to compute $\mathbb{E}[(\phi^\mu_i)^2]$. To this aim, we observe that
\begin{align*}
    \mathbb{E}[(\phi^\mu_i)^2]^2 \equiv \mathbb{E}[(\phi^\mu_i\phi^\mu_j)^2] = \mathbb{E}\Big[\Big(\frac{1}{M}\sum^M_{A=1}\chi^{\mu, A}_i \chi^{\mu, A}_j\Big)^2\Big] = r^4 + \frac{\rho^4-r^4}{M},
\end{align*}
which leads to
\begin{equation} \label{eq:sm_phi}
    \mathbb{E}[(\phi^\mu_i)^2] = \sqrt{r^4 + \frac{\rho^4-r^4}{M}}.
\end{equation}
As for the diagonal part, by virtue of the strong law of large numbers and by the independence of $\xi^\mu_i$ and $\chi^{\mu,A}_i$, in the thermodynamic limit we have
\begin{align*}
    \frac{1}{K} \sum^K_{\mu=1} \Big( \frac{1}{M}\sum^M_{A=1} (\chi^{\mu, A}_i)^2 - \left(\phi^\mu_i\right)^2\Big) (\xi^\mu_i)^2 \xrightarrow{K \to \infty}  \theta^2 (\rho^2 - \mathbb{E}[(\phi^\mu_i)^2]),
\end{align*}
where the expectation in the last line is given by Eq. \eqref{eq:sm_phi}. Hence, we can conclude that, in the thermodynamic limit and by virtue of the AFM, the spectral distribution of $\bb S$ converges asymptotically to the Marchenko-Pastur distribution 
\begin{equation} 
    MP\Big(\alpha, \theta^2 \sqrt{r^4 + \frac{\rho^4-r^4}{M}}, \theta^2 \big(\rho^2 - \sqrt{r^4 + \frac{\rho^4-r^4}{M}}\big)\Big).
\end{equation}
Clearly, the $M \to \infty$ exactly recovers the result \eqref{Mtoinftyeq}. Although the approximated nature of the pursued method, it capture the scaling between $M$, $r$ and $\rho$.

\subsection{A counterexample at finite (but small) $M$}

In this appendix, we give a counterexample showing that the AFM does not work at small $M$. Suppose, for instance, that the $\chi^{\mu, A}_{i}$ are i.i.d. Bernoulli random variables, and fix $M=3$. If we could always split the sum in the expression of $ S_{i,j}$ in two i.i.d. random variables we should have that, given $X \sim \text{Bin}(M=3, p)$, there are two i.i.d. random variables $Y$ and $Z$, such that $X \sim YZ$. Then, adopting the condensed notation $\mc P_X(k) = \mc P(X=k)$ we should have:
\begin{itemize}
    \item $(1-p)^3 = \mc P_X(0) = \mc P_Y(0) + \mc P_Z(0) - \mc P_Y(0) \mc P_Z(0) = 2\mc P_Y(0) - \mc P_Y(0)^2$;
    \item $3p(1-p)^2 = \mc P_X(1) = \mc P_Y(1) \mc P_Z(1) = \mc P_Y(1)^2$;
    \item $3p^2(1-p) = \mc P_X(2) = \mc P_Y(1) \mc P_Z(2) + \mc P_Y(2) \mc P_Z(1) = 2\mc P_Y(1)\mc P_Y(2)$;
    \item $p^3 = \mc P_X(3) =\mc P_Y(1) \mc P_Z(3) + \mc P_Y(3) \mc P_Z(1) = 2\mc P_Y(1)\mc P_Y(3)$;
\end{itemize}
where we used the fact that $Z$ and $Y$ must have the same support of $X$ and, of course, that they are independent and they have the same probability distribution. The calculations above lead us to:
\begin{itemize}
    \item $\mc P_Y(0) = 1 - \sqrt{1-(1-p)^3}$;
    \item $\mc P_Y(1) = \sqrt{3p}(1-p)$
    \item $\mc P_Y(2) = \frac{\sqrt{3}}{2}p^{\frac{3}{2}}$
    \item $\mc P_Y(3) = \frac{p^{\frac{5}{2}}}{2\sqrt{3}(1-p)}$
\end{itemize}
But, on the other hand, being $\mc P_Y$ a probability distribution, we are constrained by the condition:
\begin{equation*}
    1 = \sum^4_{i=0} \mc P_Y(i) = 1 - \sqrt{1-(1-p)^3} + \sqrt{3p}(1-p) + \frac{\sqrt{3}}{2}p^{\frac{3}{2}} + \frac{p^{\frac{5}{2}}}{2\sqrt{3}(1-p)},
\end{equation*}
which is true if and only if $p=0$: in this case, the factorization is not only inexact, but fails also as an approximation. The regime of applicability is clearly $M \gg 1$, in which case heuristic arguments in its favor can be carried out. For instance, when $M \gg 1$, by virtue of the Central Limit Theorem the sum $\frac{1}{M}\sum^M_{A=1} \chi^{\mu, A}_i \chi^{\mu, A}_j$ has a Gaussian behavior. Moreover, using the Delta-Method \cite{pinelis2016optimal}, one can prove that the centered and normalized product of two independent $Bin(L, p)$ converge in distribution to a Gaussian random variable when $L \to \infty$. This means that in the limit $M \to \infty$ there are two i.i.d. random variables that, for $L \to \infty$, converge to the sum above, and AFM holds.

\subsection{An algebraic comment on AFM}
As a side note, we make a further comment about the validity of the AFM. Let us assume generate $\chi^{\mu,A}_i$ and collect the entries $i=1,\dots,N$ for fixed $\mu$ and $A$ in vectors $\bb \chi^{\mu,A}$, lying in $\mathbb R^N$ (for instance, in the Gaussian scenario  $\chi^{\mu,A}_i \sim\mathcal N(r,\sqrt{\rho^2-r^2})$, we would have $\bb \chi^{\mu,A}= \bb{r} +\sqrt{\rho^2-r^2}\bb z^{\mu,A}$, with $\bb{r} = r(1,1,\dots,1)$ the mean vector and $ z_i^{\mu,A} \sim_{i.i.d.} \mathcal N(0,1)$ isotropic perturbations around $\bb{r}$). Let us now fix $\mu=1,\dots,K$ and generate $M<N$\footnote{Possibly, $1\ll M\ll N$.} vectors as before, then with large probability they will be linearly independent (since they are expressed as the sum of a constant vector $\bb{r}$ and, for large $N$, mutually orthogonal vectors $\bb z^{\mu,A}$). Now, the random sums (with fixed $\mu$)
$$
\frac1M \sum_{A=1}^M \chi^{\mu,A}_i \chi^{\mu,A}_j,
$$
display a projector-like structure, namely $\frac1M \sum_{A=1}\bb \chi ^{\mu,A} (\bb \chi ^{\mu,A})^T$, where the attribute ``-like'' is due to the fact that the vectors are not mutually orthogonal. Indeed, even in the $N\to\infty$ limit, the angle between any pair of vectors $\bb \chi^{\mu,A}$ and $\bb \chi^{\mu,B}$ is non-zero:
$$
\cos \theta (\bb \chi^{\mu,A},\bb \chi^{\mu,B})= \frac{\bb \chi^{\mu,A}\cdot \bb \chi^{\mu,B}}{\lVert \bb \chi^{\mu,A}\lVert \cdot \lVert \bb \chi^{\mu,B}\lVert}\underset{N\gg 1}\approx\frac{r^2}{\rho^2}+N^{-1/2} \sqrt{1-\frac{r^4}{\rho^4}}z^\mu_{A,B},
$$
with $z^{\mu}_{AB}\sim \mathcal N(0,1)$, since $\lVert \bb \chi^{\mu,A}\lVert \approx \rho \sqrt N$. Thus, the kernel $M^{-1} \sum_{A=1}^M \bb \chi^{\mu,A}(\bb \chi^{\mu,A})^T$ is associated, for fixed $\mu$, with a $M$-dimensional subspace of $\mathbb R^N$, which we call $V$. Now, when introducing the AFM, we assumed that there exists a single (random) vector $\bb \phi^\mu$ exactly reproducing the statistics of $M^{-1} \sum_{A=1}^M \bb \chi^{\mu,A}(\bb \chi^{\mu,A})^T$, so that we can replace (for $i\neq j$) the kernel $\frac1 {NM}\sum_{\mu}  \sum_A \chi^{\mu,A}_i \chi^{\mu,A}_j\xi^\mu_i \xi^\mu_j$ with the simplest one $\frac1N \sum_{\mu}  \phi^{\mu}_i \phi^{\mu}_j\xi^\mu_i \xi^\mu_j $. This is clearly not true in general, as $\bb \phi^\mu$ (for $\mu$ fixed) is associated with a 1-dimensional subspace of $\mathbb R^N$. However, things get simpler for high enough $M$, and the signature of the subspace $V$ in the structure of the Hebbian kernel $\bb \Gamma^u$ can be safely neglected in this regime. To see this, let us take any vector $\bb x\in \mathbb R^N$ and consider $(\bb \Gamma ^u \cdot \bb x)_i= \frac1N \sum_{\mu}\xi^\mu_i\big(\frac1{M} \sum_{jA} \chi^{\mu,A}_i \chi^{\mu,A}_j \xi^\mu_j x_j\big)$.\footnote{The constraint $j\neq i$ only gives trivial corrections, thus can be safely omitted.} For a given set $M$ of vectors $\{\bb \chi^{\mu,1},\dots \bb \chi^{\mu,M}\}$ and calling $v^\mu_j = \xi^\mu_j x_j$ (which is independent on the $\bb \chi^{\mu,A}$ vectors), we can always decompose $\bb v^\mu = \bb v^{\mu}_{\paral}+\bb v^{\mu}_\perp$, with $\bb v^{\mu}_{\paral}=\frac1{\sqrt M} \sum_{A=1}^M c_{\mu,A} \bb \chi^{\mu,A}$ and $\bb v^{\mu}_\perp\in V^\perp$, that is $\bb \chi^{\mu,A}\cdot \bb v^{\mu}_\perp = 0$ for all $A$. Now, different realizations of the vectors $\bb \chi^{\mu,A}$ translates in different decompositions, thus the coefficients $c_{\mu,A}$ are random variables depending on the $\bb \chi^{\mu,A}$ vectors; {\it ii)} since the vectors $\bb \chi^{\mu,A}$ are mutually independent random vectors, then so will be the coefficients $c_{\mu,A}$ (indeed, replacing for some $\bar A$ the $\bb \chi^{\mu,\bar A}$ with a new vector $\tilde{\bb \chi}^{\mu}$ will only affect the coefficient $c_{\bar A}$). But now:
\begin{align*}
   & \frac1{NM} \sum_{jA} \chi^{\mu,A}_i \chi^{\mu,A}_j \xi^\mu_j x_j=
    \frac1{NM} \sum_{jA} \chi^{\mu,A}_i \chi^{\mu,A}_j v^\mu_j =\frac1{NM} \sum_{jA} \chi^{\mu,A}_i \chi^{\mu,A}_j v^{\mu}_{\paral,j}=\\=&\,
    \frac1{NM} \Big(\frac1{\sqrt M}\sum_{A} c_{\mu,A} \chi^{\mu,A}_i \lVert\bb \chi^{\mu,A}\lVert^2+\frac1{\sqrt M}\sum_{A\neq B} c_{\mu,B} \chi^{\mu,A}_i \lVert\bb \chi^{\mu,A}\lVert \lVert \bb\chi^{\mu,B}\lVert \cos\theta _{AB}\Big)\approx\\
    \approx& \,\frac{\rho^2-r^2}M \frac1{\sqrt M}\sum_{A} c_{\mu,A} \chi^{\mu,A}_i +\frac{r^2}M\sum_{B} \frac{c_{\mu,B}}{\sqrt M}  \cdot\sum_A\chi^{\mu,A}_i = \frac{\rho^2-r^2}{M} v^{\mu}_{\paral,i}+{r^2}\sum_{B} \frac{c_{\mu,B}}{\sqrt M} \cdot \frac1M\sum_A\chi^{\mu,A}_i.
\end{align*}
It is straightforward to notice that the only contributions carrying information about $V$ is the first one, as we can approximate both $\sum_B c_{\mu,B}/\sqrt M$ and $\sum_A \chi^{\mu,A}_i$ by virtue of the CLT to get an example-free approximation, in particular $\frac1M \sum_{A=1}^M \chi^{\mu,A}_i\sim \bar \chi^\mu_i$ with $\bar \chi^\mu_i$ Gaussian-distributed. Assuming that $\sum_B c_{\mu,B}/\sqrt M$ has some finite limit $c_\mu$, then we would have $\frac1{NM} \sum_{jA} \chi^{\mu,A}_i \chi^{\mu,A}_j \xi^\mu_j x_j\approx r^2 c_\mu \bar \chi^\mu_i+\mc O(M^{-1})$. 
This finally implies
$$
(\bb \Gamma ^u\cdot \bb x)_i\approx r^2 \sum_{\mu} \xi^\mu_i c_\mu \bar \chi^\mu_i+\mc O\Big(\frac{\rho^2-r^2}M\Big).
$$
Then, for $M$ large enough, the first contribution can be neglected. On the other hand, taking a set of i.i.d. random variables $\{\phi^\mu_i\}$, and decomposing it again in the $\phi$-collinear and orthogonal parts, namely $v^\mu_j= \xi^\mu_i x_i = C_\mu \phi^\mu_j + w_{j,\perp}$, we would get $\frac1{N} \sum_{j} \phi^\mu_i \phi^\mu_j \xi^\mu_j x_j=C_\mu\sigma ^2 \phi^\mu_i$, where $\sigma^2=\mathbb E(\phi^\mu_i )^2$. This in turn leads to
$$
  (\bb \Gamma ^u\cdot\bb x)_i \approx  \sigma ^2\sum_{\mu}\xi^\mu_i C_\mu \phi^\mu_i.
$$
The two approaches exhibits compatibles structure, and the latter is expected to match the whole unsupervised Hebbian kernel when $(\rho^2-r^2)/M\ll1$. In this sense, AFM can be interpreted as a Central Limit Theorem (although it is not proven) for projector(-like) operators, and in particular will be very useful for our concerns since both spectral properties and retrieval capabilities are directly related to quantities of the form $(\bb \Gamma ^u\cdot\bb x)_i$.

\subsection{Time to face the reality: simulations} \label{approximations discussions}
To conclude this Appendix, we numerically explore the validity of the AFM. Here, we assume the Gaussian scenario, namely $\xi^\mu_i \sim \mathcal{N}\left(0, \theta^2\right)$ and $\chi^{\mu, A}_i \sim \mathcal{N}\left(r, \rho^2 - r^2\right)$. This will allow us to conduct the analysis for general values $\theta$ and $\rho$, which in the Rademacher setting exploited in the main text are instead fixed. 
In Figure \ref{AFM_M}, we report the comparison between the empirical spectral distribution of a sample of Wishart-like matrices and the theoretical predictions given by the $M\to\infty$ limit (App. \ref{Mtoinftysubsec}) and the AFM (App. \ref{afm}). As is clear, we observe that the AFM approximation starts to well-describe the empirical distribution at relatively low values of $M$. {Notice that, in the unsupervised setting, the empirical spectral distribution is characterized by two separate bulks, rather than a $\delta$-peak and a continuous distribution: both the approximation derived in the big data limit and within AFM fails in describing this peculiarity.} 
In Figs. \ref{AFM_variances} and \ref{AFM_thetas} we reported the results for different values of $\rho^2$ and $\theta^2$. In the former, we can see that, increasing $\rho$ with the other parameters fixed (that is, increasing the variance of the $\chi^{\mu,A}_i$ variables), the bulk approximated by a $\delta$ becomes wider, and both the approximations do not work anymore. Conversely, increasing $\theta$ has no such effect as expected. In our opinion, it is worthy to inspect the origin of the secondary bulk, however this falls outside the scope of the present paper, hence we cease to give further details on this.
\begin{figure}[h!]
    \centering  \includegraphics[width=\textwidth]{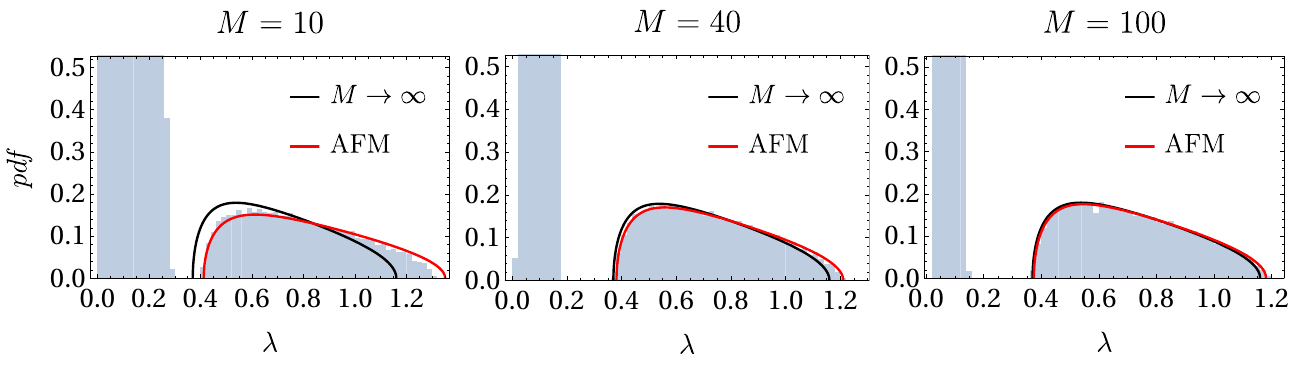}
    \caption{Histogram of the spectral distribution of $\bb S$ with Gaussian entries, when $N=1000$, $\alpha=0.1$ $\theta=0.5$, $r=0.5$, $\rho=0.75$, for different values of $M$, against the analytical formula obtained through the $M \to \infty$ approximation (black) and the AFM (red). The empirical distribution is realized by collecting the eigenvalues of a sample of 100 different realizations of the Hebbian matrices.}
    \label{AFM_M}
\end{figure}
\begin{figure}[h!]
    \centering  \includegraphics[width=\textwidth]{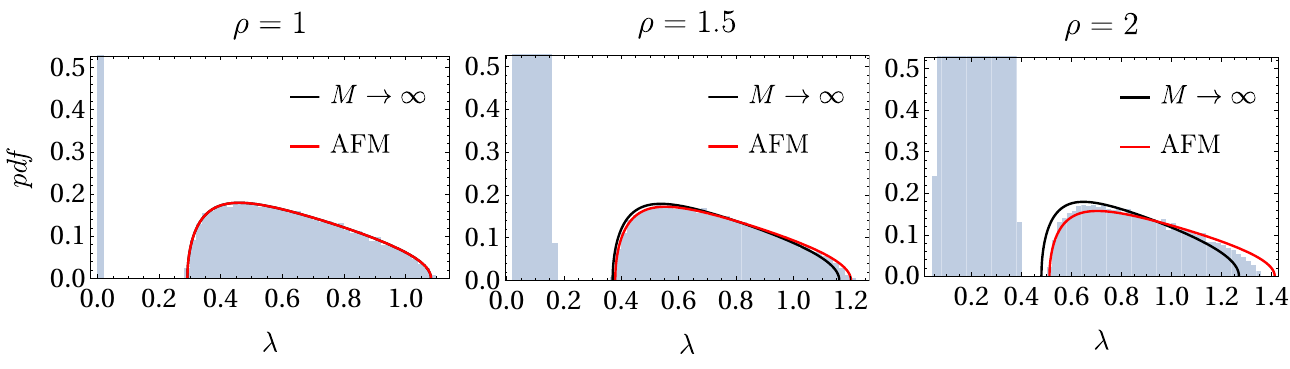}
    \caption{Histogram of the spectral distribution of $\bb S$ with Gaussian entries, when $N=1000$, $\alpha=0.1$, $\theta=0.25$, $r=1$, $M=50$, for different values of $\rho$, against the analytical formula obtained through the $M \to \infty$ approximation (black) and the AFM (red). The empirical distribution is realized by collecting the eigenvalues of a sample of 100 different realizations of the Hebbian matrices.}
    \label{AFM_variances}
\end{figure}
\begin{figure}[h!]
    \centering  \includegraphics[width=\textwidth]{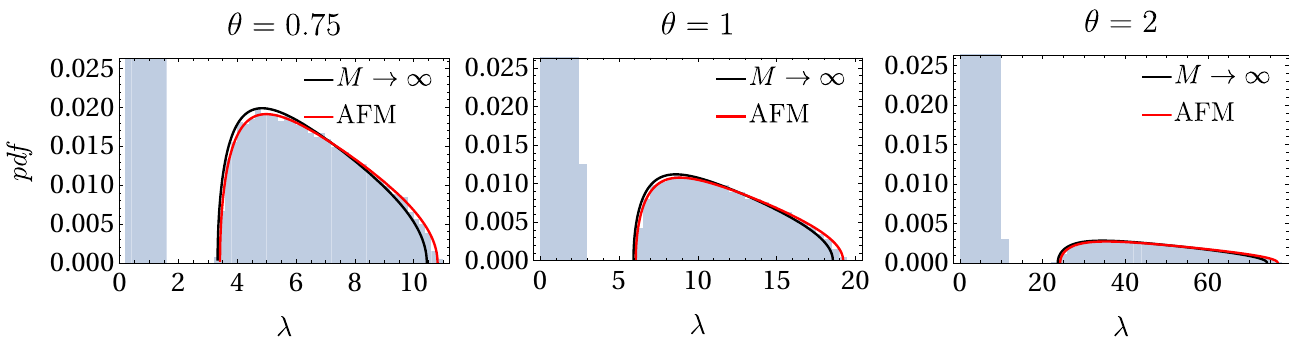}
    \caption{Histogram of the spectral distribution of $\bb S$ with Gaussian entries, when $N=1000$, $\alpha=0.1$, $\rho=1.25$, $r=1$, $M=50$, for different values of $\theta$, against the analytical formula obtained through the $M \to \infty$ approximation (black) and the AFM (red). The empirical distribution is realized by collecting the eigenvalues of a sample of 100 different realizations of the Hebbian matrices.}
    \label{AFM_thetas}
\end{figure}

\section{Frobenius norm and squared error} \label{sec:Frobenius}
We start recalling the definition of the Frobenius Norm $||\cdot||_F$:
\begin{Definition} \label{def:1}
    Let $\mathbf A$ be a $N \times N$ real matrix. The (normalized) \textit{Frobenius norm} of $\mathbf A$, is defined as:
    \begin{equation}
        ||\mathbf A||_F := \sqrt{\frac{1}{N}\sum^N_{i=1} \sum^N_{j=1} A^2_{i,j}}.
    \end{equation}
\end{Definition}
We recall some useful properties of the Frobenius norm (see e.g. \cite{horn2012matrix}):
    \begin{enumerate}
        \item $||\cdot||_F$ is a matrix norm;
        \item $||\cdot||_F$ is sub-multiplicative norm, thus, given two $N \times N$ real matrices $\mathbf A$ and $\mathbf B$, we have
        \begin{equation}
            ||\mathbf A \mathbf B||_F \leq ||\mathbf A||_F ||\mathbf B||_F.
        \end{equation}
        \item Given a $N \times N$ real matrix $\mathbf A$, we have that
        \begin{equation}
            ||\mathbf A||_F = \sqrt{\frac{1}{N}\textrm{Tr}\left(\mathbf A \mathbf A^T\right)} = \sqrt{\frac{1}{N}\sum^N_{i=1} \lambda_i^2},
        \end{equation}
        where $\lambda_i$, $i=1,\dots,N$, are the $N$ eigenvalues of $\mathbf A$.
    \end{enumerate}
The first two properties guarantee that $||\cdot||_F$ is a useful norm and then can be used to induce a matrix distance. The third one, explains why we have chosen exactly the Frobenius norm, i.e. because it exploits the spectral properties of the matrix which it measures. More precisely, if we take a symmetric random matrix, 
in the $N\to\infty$ limit we get
\begin{equation*}
    ||\mathbf A||_F = \sqrt{\frac{1}{N}\sum^N_{i=1} \lambda_i^2} \xrightarrow{N \to \infty} \sqrt{\mathbb{E}\left[\lambda_{\mathbf A}^2\right]},
\end{equation*}
where with $\lambda_{\mathbf A}$ we denote a stochastic variable distributed according to the limit of the empirical distribution of the eigenvalues of $\mathbf A$.
\par\medskip
When dealing with examples of unavailable patterns, either in supervised or unsupervised settings, it is natural to question whether our empirical models account for a good representation of the reality, namely whether $\boldsymbol \Gamma^s$ and $\boldsymbol \Gamma^u$ are close to $\boldsymbol \Gamma$, where we directly store the patterns. To this goal, we use the Frobenius norm to estimate the squared distance between the empirical and the ideal coupling matrices, which is interpreted as \emph{squared error} \cite{bernstein2020distance}. 
\begin{Definition}
    Let $\mathbf A$ and $\mathbf B$ be two $N \times N$ symmetric real random matrices. The Squared Error (SE) of $\mathbf B$ w.r.t. $\mathbf A$ is defined as:
    \begin{equation}
        \mathcal{E}\left(\mathbf A, \mathbf B\right) = ||\mathbf A - \mathbf B||^2_F.
    \end{equation}
\end{Definition}
In addition, we discuss how the parameters $\alpha$, $r$, $d$ and $M$ affect this error.
In order to compute $\mathcal{E}\left(\mathbf \Gamma^{s,u},\mathbf \Gamma \right)$ in the thermodynamic limit $N \to \infty$, we will have to compute the limit of the second non-centered moment of the empirical spectral distribution of $\mathbf \Gamma^{s,u} - \mathbf \Gamma$.


\subsection{\texorpdfstring{$\boldsymbol\Gamma^s$ vs $\boldsymbol\Gamma$}{Supervised vs Basic Storing}:  the validity of supervised Hebb's rule} 
In order to evaluate $\mathcal{E}\left(\boldsymbol\Gamma^s,\boldsymbol\Gamma\right)$ it is convenient to write explicitly the difference between $\boldsymbol\Gamma$ and $\boldsymbol\Gamma^s$
\begin{equation}
    \Gamma_{i,j} - \Gamma^s_{i,j} = \frac{1}{N} \sum^K_{\mu=1} ( 1 - \Bar\chi^\mu_i \Bar\chi^\mu_j ) \xi^\mu_i \xi^\mu_j,
\end{equation}
with $\bar \chi^\mu_i = \frac1M \sum_{A=1}^M \chi^{\mu,A}_i$. 
In the same spirit as the AFM, we now consider a class of i.i.d. random variables $\{\phi^\mu_i\}_{\mu,i}$, such that
\begin{equation}
    \phi^\mu_i \phi^\mu_j = 1 - \Bar\chi^\mu_i \Bar\chi^\mu_j \quad \forall i \neq j,
\end{equation}
and with the factors $\phi^\mu_i$ depending on the errors $\chi^{\mu,A}_i$, but
not on the concepts $\xi^\mu_i$. Then, we rewrite the difference between the two matrices in the usual form
\begin{align}
    \Gamma_{i,j} - \Gamma^s_{i,j} 
    &= \frac{1}{N} \sum^K_{\mu=1} (\phi^\mu_i \xi^\mu_i) (\phi^\mu_j \xi^\mu_j) + \frac{1}{N} \sum^K_{\mu=1} ( 1 - (\Bar\chi^\mu_i)^2 - (\Bar\phi^\mu_i)^2)\delta_{i,j},
\end{align}
namely, the sum of a Wishart Matrix and diagonal one. The latter in the thermodynamic limit simplifies by virtue of the strong law of large numbers, and the diagonal entries converge to a constant. Specifically
\begin{equation}
    \alpha \frac{1}{K}\sum^K_{\mu=1} (1 - (\Bar\chi^\mu_i)^2 -  (\phi^\mu_i)^2) \to \alpha \left(1-\mathbb{E}\left[(\Bar\chi^\mu_i)^2\right]-\mathbb{E}\left[(\phi^\mu_i)^2\right]\right):= s^s_-.
\end{equation}
In order to compute $s_-^s$, we have at first to compute the second moments of the $\phi^{\mu}_i$. Recalling that $\alpha \mathbb{E}[(\bar \xi^{\mu}_i )^2] = \sigma^s$ (given by Eq. \eqref{sigmasup}), we proceed again as follows:
\begin{align*}
    \mathbb{E}[(\phi^\mu_i)^2]^2 &= \mathbb{E}[(\phi^\mu_i \phi^\mu_j)^2]= \mathbb{E}[(1 - \Bar\chi^\mu_i \Bar\chi^\mu_j )^2]= 1 - 2(1-d)^2r^2 + \mathbb{E}[(\Bar\chi^\mu_i)^2]^2=\\
    &= 1 - 2(1-d)^2r^2 + \left(\sigma^s\right)^2 \\
    &= 1 - 2(1-d)^2r^2 + \Big((1-d)^2r^2 + (1-d)\frac{1-(1-d)r^2}{M}\Big)^2,
\end{align*}
thus
\begin{align*}
    \sigma^s_-:=\mathbb{E}[(\phi^\mu_i)^2] &= \sqrt{1 - 2(1-d)^2r^2 + \left(\sigma^s\right)^2} \\
    &= \sqrt{1 - 2(1-d)^2r^2 + \Big((1-d)^2r^2 + (1-d)\frac{1-(1-d)r^2}{M}\Big)^2}.
\end{align*}
We immediately have that
\begin{equation}
    s^s_- = \alpha (1 -  \sigma^s - \sigma^s_-).
\end{equation}
Concerning the contribution involving the Wishart matrix, we have $
    \mathbb{E}[\phi^\mu_i\xi^\mu_i] = \mathbb{E}[\phi^\mu_i] \mathbb{E}[\xi^\mu_i] = 0,
$ while $\mathbb E [(\phi^\mu_i\xi^\mu_i)^2]=\mathbb E[(\phi^\mu_i)^2]= \sigma^s_-$, since $\xi^\mu_i = \pm1$. Consequently, also in this approximation the spectrum of $\boldsymbol\Gamma-\boldsymbol\Gamma^s$ is described asymptotically by a shifted Marchenko-Pastur distribution, namely $MP\left(\alpha, \alpha\sigma^s_-, s^s_-\right)$, i.e. they have the same probability distribution as $\alpha\sigma^s_- \lambda + s^s_-$, where $\lambda$ follows a $MP\left(\alpha, 1\right)$ distribution. In the end, we can conclude that
\begin{align}\label{eq:SE_sup}
    \mathcal{E}\left(\boldsymbol\Gamma^s, \boldsymbol\Gamma\right) &
    = \kappa^2(\alpha, \sigma^s_-, s^s_-)= \notag \\ &=\alpha[1-2(1-d)^2r^2+(\sigma^s)^2+2 \alpha(\sigma^s-1)^2]= \notag
    \\&= \alpha [1-2(1-d)^2r^2+(1-d)^4r^4(1+\rho)^2 +2\alpha((1-d)^2r^2(1+\rho)-1)^2 ],
\end{align}
where $\rho = \frac{1 - (1-d)r^2}{M(1-d)r^2}$,\footnote{The parameter $\rho$ is used to quantify the information content in the synthetic dataset. In fact, proceeding similarly to \cite{AABD-NN2022, alemanno2023supervised}, we find $\mc P\left(\operatorname{sgn}\left(\sum_a \chi_i^{\mu a}\right) = -1\right) \approx 1 - \operatorname{erf}(1/\sqrt{2\rho})$, for $M \gg 1$. Thus, the conditional entropy $H(\xi_i^{\mu}|\{\xi_i^{\mu, A}\}_{A})$, which quantifies the amount of information needed to describe the original message $\xi_i^{\mu}$, given the examples related to it, increases monotonically with $\rho$. For example, in the supervised setting we can write $\sigma^s = (1-d)^2r^2 \left(1 + \rho\right)$.} and $\kappa^2(\alpha, \sigma^2, s)$ represents the second moment of a $MP(\alpha, \alpha\sigma^2, s)$, see Def. \ref{moments}.

\subsection{\texorpdfstring{$\boldsymbol\Gamma^{u}$ vs $\boldsymbol\Gamma$}{Unsupervised vs Basic Storing}:  the validity of unsupervised Hebb's rule} 
We now focus on the squared error $\mathcal E(\boldsymbol \Gamma^u, \boldsymbol \Gamma)$ between the basic storing coupling matrix $\boldsymbol\Gamma$ and its unsupervised counterpart $\boldsymbol\Gamma^u$. Let us rewrite the difference as
\begin{equation} \label{eq:45}
    \Gamma^u_{i,j}-\Gamma_{i,j} = \frac{1}{N} \sum^K_{\mu=1} \Big(1 - \frac{1}{M}\sum^M_{A=1}\chi^{\mu, A}_i \chi^{\mu, A}_j\Big) \xi^\mu_i \xi^\mu_j.
\end{equation}
Adopting the AFM, we then introduce a class of i.i.d. random variables $\{\phi^\mu_i\}_{\mu,i}$ such that, for $i \neq j$, we make the following replacement:
\begin{equation}
    1 - \frac{1}{M}\sum^M_{A=1}\chi^{\mu, A}_i \chi^{\mu, A}_j =\phi^\mu_i \phi^\mu_j .
\end{equation}
Again, these random variables are independent on $\xi^\mu_i$ as they only depends on the $\chi^\mu_i$. Then, we can recast \eqref{eq:45} as
\begin{align}
    \Gamma_{i,j}-\Gamma^u_{i,j} 
    &= \frac{1}{N} \sum^K_{\mu=1} (\phi^\mu_i\xi^\mu_i) (\phi^\mu_j  \xi^\mu_j) + \delta_{i,j} \frac{\alpha}{K}\sum_\mu (1-(\chi^{\mu, A}_i)^2 - (\phi^\mu_i)^2).
\end{align}
Needless to say, this is again the sum of a Wishart Matrix and a diagonal one. 
Moreover, focusing on the covariance matrix and thanks to the statistical independence, we have that $\mathbb{E}[\phi^\mu_i\xi^\mu_i] = \mathbb{E}[\phi^\mu_i]\mathbb{E}[\xi^\mu_i] = 0$ and $\mathbb{E}[(\phi^\mu_i\xi^\mu_i)^2]=\mathbb{E}[(\phi^\mu_i)^2]$.
As a consequence of the AFM, the spectrum of $\boldsymbol\Gamma - \boldsymbol\Gamma^u$ will follow again a Marchenko-Pastur distribution shifted by some parameter to be determined. In the thermodynamic limit by virtue of strong law of large numbers we have
\begin{equation}\label{eq:55}
     \frac{\alpha}{K}\sum_\mu (1-(\chi^{\mu, A}_i)^2 - (\phi^\mu_i)^2) \to \alpha ( d - \mathbb{E}[(\phi^\mu_i)^2]):=s^u_-.
\end{equation}
Thus, again the scale parameter and the shift of the distribution depends on the second moment of the $\phi^\mu_i$. This reads as
\begin{align} \label{eq:51}
    \sigma^u_- := \mathbb{E}[(\phi^\mu_i)^2] &= \sqrt{\mathbb{E}[(\phi^\mu_i\phi^\mu_j)^2]} = \sqrt{\mathbb{E}\Big[\Big(1 - \frac{1}{M}\sum^M_{A=1}\chi^{\mu, A}_i \chi^{\mu, A}_j\Big)^2\Big]}= \notag \\
    &= \sqrt{1-2\mathbb{E}[(\chi^{\mu,A}_i)^2]+\mathbb{E}\Big[\Big(\frac{1}{M}\sum^M_{A=1}\chi^{\mu, A}_i \chi^{\mu, A}_j\Big)^2\Big]} =\notag \\
    & = \sqrt{1-2(1-d)^2r^2+(\sigma^u)^2},
\end{align}
where $\sigma^u$ is defined as in Eq. \eqref{sigmau}. Combining Eqs. \eqref{eq:51} and \eqref{eq:51}, we get
\begin{equation}
    s^u_- = \alpha \left(d - \sigma^u_-\right).
\end{equation}
This implies that the limiting spectral distribution of the ensemble of matrices $\boldsymbol\Gamma^u-\boldsymbol\Gamma$ is a shifted Marchenko-Pastur distribution $MP\left( \alpha, \alpha\sigma^u_-, s^u_- \right)$.
After these considerations, we arrive to the conclusion that
\begin{align}\label{eq:SE_unsup}
\mathcal{E}\left(\boldsymbol\Gamma^u,\boldsymbol\Gamma\right) &
= \kappa^2(\alpha, \sigma^u_-, s^u_-) \notag \\ 
&= \alpha [1 - 2(1-d)^2r^2+(\sigma^u)^2 + \alpha d^2] \notag \\
&= \alpha \Big( 1 - 2 (1-d)^2r^2 + (1-d)^4 r^4 + (1-d)^2 \frac{1-(1-d)^2r^4}{M} + \alpha d^2\Big).
\end{align}
This analytical result highlights that $\mathcal{E}\left(\boldsymbol\Gamma^u,\boldsymbol\Gamma\right)$ is decreasing in $r,M$, increasing in $d$. The dependence on $M$ and $r$ is intuitive, because as $M$ or $r$ grow the dataset $\{\boldsymbol \eta^{\mu,A}\}_{\mu=1,...,K}^{A=1,...,M}$ gets more and more informative on the archetypes $\{\boldsymbol \xi^{\mu}\}_{\mu=1,...,K}$, in such a way that $\boldsymbol\Gamma^u$ provides a better approximation of $\boldsymbol\Gamma$. The role of the dilution parameter $d$ is discussed in Sec.~\ref{sec:1step}. Moreover, by deriving in $\alpha$ we get
\begin{eqnarray}
    \frac{\mathrm d}{\mathrm d \alpha} \mathcal{E}\left(\boldsymbol\Gamma^u,\boldsymbol\Gamma\right) = (2d) \cdot \alpha + k(r,d,M),
\end{eqnarray}
having posed $k(r,d,M) = 1 - 2 (1-d)^2r^2 + (1-d)^4 r^4 + (1-d)^2 \frac{1-(1-d)^2r^4}{M}$.Therefore, we can state that for $d>0$ there is a quadratic dependence on $\alpha$, while the dependence becomes linear when $d=0$.

\subsection{Validity of the models with examples: theory {\it versus} numerics} \label{sec:validity}

In this section, we give a comparison between the theoretical estimates and numerical results for the squared error between the empirical (e.g. supervised and unsupervised) settings {\it versus} the basic storing picture. We directly report the theoretical predictions reported in Eqs. \eqref{eq:SE_sup} and \eqref{eq:SE_unsup}, making explicit the dependence on the parameters $\alpha$, $d$ and $r$.
For the supervised case, we find
\begin{align}  \label{eq:35}
    \mathcal{E}\left(\boldsymbol\Gamma^s, \boldsymbol\Gamma\right) = \alpha [1-2(1-d)^2r^2+(1-d)^4r^4(1+\rho)^2 +2\alpha((1-d)^2r^2(1+\rho)-1)^2 ].
\end{align}
Likewise, in the unsupervised case we find
\begin{align}  \label{eq:38}
\mathcal{E}\left(\boldsymbol\Gamma^u,\boldsymbol\Gamma\right) = \alpha \big[ (1 - (1-d)^2r^2)^2 + (1-d)^2 \frac{1-(1-d)^2r^4}{M} + \alpha d^2\big].
\end{align}
\begin{figure}
    \centering  \includegraphics[width=0.9\textwidth]{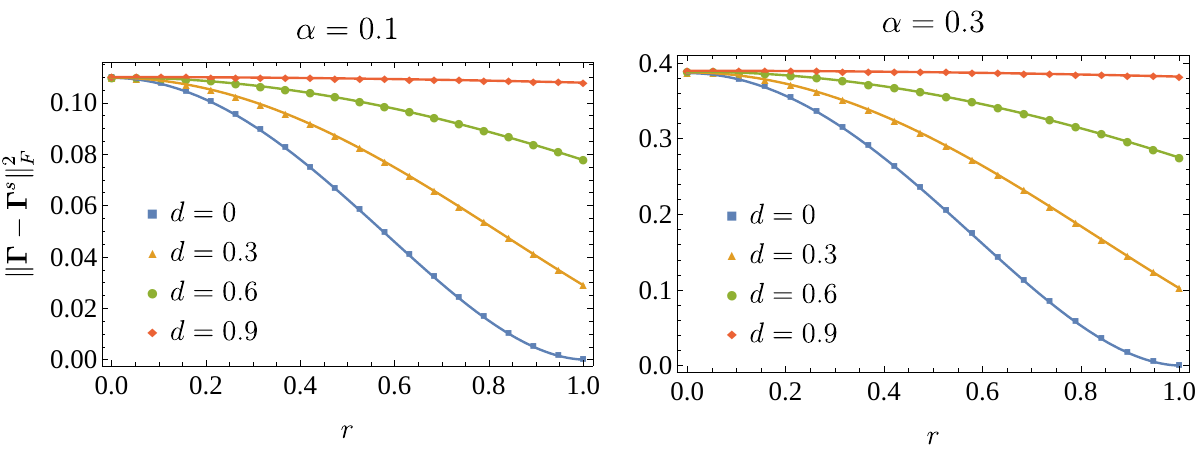}
    \centering  \includegraphics[width=0.9\textwidth]{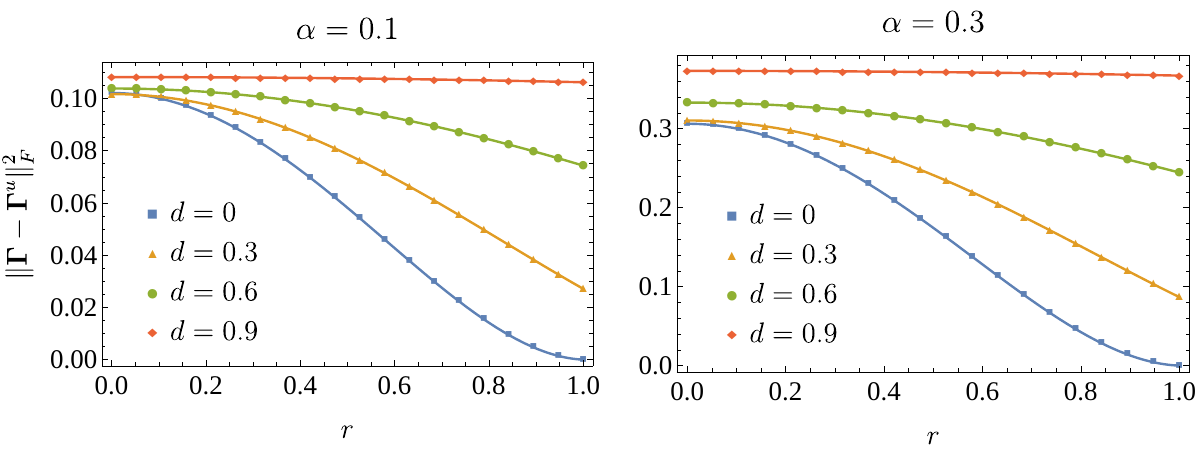}
    \caption{{\bfseries Comparison of the squared error between basic storing and supervised (upper panels) or unsupervised (lower panels) settings.} The plots compare numerical results (dots) and theoretical predictions \eqref{eq:35} and \eqref{eq:38} (solid curves) for the (normalized) Frobenius norm between $\bb \Gamma$ and, respectively, $\bb\Gamma^s$ and $\bb\Gamma^u$ for $\alpha=0.1$ (left) and $\alpha=0.3$ (right) for fixed $M=50$ and different choices for $d$, as a function of the dataset quality $r$. Numerical results are averaged over 30 different realizations of the coupling matrices. The network size is $N=1000$. Error bars are not reported, as the relative error is very low. 
    }
    \label{hebb_vs_BS}
\end{figure}

\par
These analytical results highlight that $\mathcal{E}\left(\boldsymbol\Gamma^s,\boldsymbol\Gamma\right)$ and $\mathcal{E}\left(\boldsymbol\Gamma^u,\boldsymbol\Gamma\right)$ decrease in $r,M$ and increase in $d$. The dependence on $M$ and $r$ is intuitive, because as $M$ or $r$ grow, the data set $\{\boldsymbol \xi^{\mu,A}\}_{\mu=1,...,K}^{A=1,...,M}$ becomes more and more informative on the archetypes $\{\boldsymbol \xi^{\mu}\}_{\mu=1,...,K}$, so that $\boldsymbol\Gamma^{s,u}$ provides a better approximation of $\boldsymbol\Gamma$. However, the larger $d$, the more insensitive the SE becomes to the quality of the examples $r$.
Moreover, we can state that both SEs grow with $\alpha$, the dependence being quadratic (but in the unsupervised case the dependence becomes linear when $d=0$). 
The numerical simulations shown in Fig.~\ref{hebb_vs_BS} (upper and lower panels, respectively) confirm these remarks and corroborate the AFM in the evaluation of the SE, as the analytical estimates accurately fit the empirical data (even for relatively small sizes, $N \sim 10^2$ and $M \sim 10$).

\section{A comparison with signal-to-noise analysis}\label{app:snr}
In this appendix, we relate and compare the approach developed in the present paper (and the work \cite{agliari2024spectral}) with a standard, non-rigorous technique used to anticipate the stability of a certain configuration and referred to as {\it signal-to-noise} \cite{Amit} (SNA). For simplicity and without loss of generality, we will focus in this case on the stability of patterns (also referred to as {\it invariance}) in the unsupervised setting; specifically, the configuration $\bb\sigma^{(0)} = \bb\xi^1$ is provided as the initial condition for the neural dynamics \eqref{eq:dynamics}, corresponding to $p=1$ in the attractiveness analysis as defined in Sec. \ref{sec:1step}. 
Denoting with $h_i(\boldsymbol{\sigma})= \sum_{j\neq i} \Gamma_{i,j} \sigma_j$ the local field acting on the $i$-th neuron, we say that the pattern $\boldsymbol \xi^1 $ is stable under the dynamics \eqref{eq:dynamics} iff $\Delta_i(\bb\xi^1) = h_i(\boldsymbol{\xi}^1)\xi_i^1>0$ for $i=1,\hdots, N$. The basic idea in signal-to-noise analysis is to split $\Delta_i$ into a sum of signal and noise, namely positive non-random terms from stochastic contributions responsible for driving the neural dynamics away from the desired target. For instance, in the usual Hopfield model the stability of a given pattern $\bb\xi^1$ can be written as $\Delta_i =\xi^1_i\sum_{j\neq i} \Gamma_{i,j} \xi^1_j= \sum_{j\neq i} \xi^1_i\big(\frac1N \xi^1_i \xi^1_j\big)\xi^1_j + \sum_{j \neq i }\xi^1_i\big(\frac1N \sum_{\mu >1} \xi^\mu_i \xi^\mu_j\big)\xi^1_j$, with the first contribution being of order $1$, while the second one being the sum of uncorrelated random variables, namely a random walk of $(K-1)(N-1)$ steps with equally probable unitary jumps, see \cite{Amit}. Hence, for large networks the pattern stability can be effectively approximated as $\Delta_i = S+R z_i$, with $z_i \sim \mathcal N(0,1)$. It is worth noticing that the randomness of the variables $\bb z$ encodes the variability in the pattern stability due to all possible realizations of the patterns $\bb \xi^\mu$ themselves, so that $\mathbb E_{\bb z} \equiv \mathbb E_{\bb\xi}$ for large $N$. This implies that the signal and the noise precisely capture the first and second moments of the stability $\Delta_i$, specifically $S = \mathbb E_{\bb \xi} \Delta_i$ and $R= \mathbb E_{\bb \xi} (\Delta_i ^2)- S^2$. Further, whether the signal-to-noise ratio (SNR) is low ($S/R\ll1$), the stability criterion $h_i (\bb\xi^1) \xi^1_i >0$ holds for all $i=1,\dots,N$ with large probability. Conversely, if $S/R = \mathcal O (1)$, the probability to have a misaligned field $h_i$ w.r.t. the pattern is non-negligible, and neural dynamics can drive the system away from the target $\bb\xi^1$. In probabilistic terms, $\mathcal P (\Delta_i \ge0)\equiv \mc P (S+R z_i \ge 0)=\frac12\big[1+\text{erf}\big(\frac{S}{\sqrt{2R^2}}\big)\big]$, which matches the expression from our spectral approach within the Gaussian approximation (Def. \ref{def:GA}) upon identifying $S= \mathbb E_{\bb \xi}\Delta_i=\mu_1 $ and $R^2=\mathbb E_{\bb \xi} (\Delta_i ^2)- S^2= \mu_2 -\mu_1 ^2$, see also \cite{agliari2024spectral}. In the unsupervised diluted setting, the pattern stability reads as $\Delta_i (\bb\xi^1) =  \sum_j \Gamma^u_{i,j} \xi^1_i \xi^1_j$, 
and in Fig.~\ref{fig:SNR_stability}, we reported the related histograms for various values of the dilution parameter, also reporting in the insets the frequency of events with $\Delta_i \ge 0$.
\begin{figure}[tb]
    \centering
   \includegraphics[width=\textwidth]{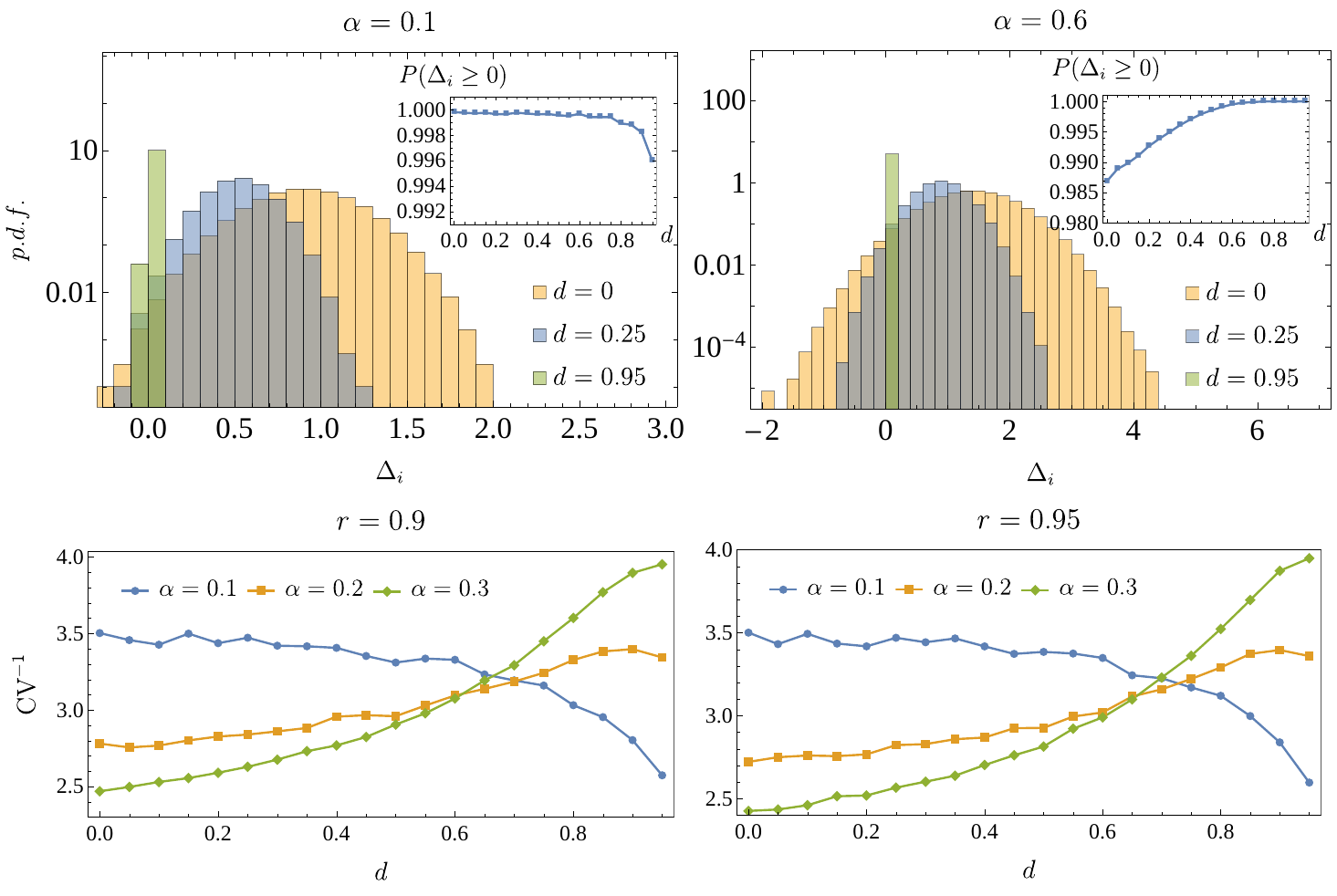}
    \caption{{\bfseries Details of empirical distributions of the stability in the unsupervised setting.} In the first row, we reported the empirical distribution of the stability $\Delta_i$ for $\alpha=0.1$ (left) and $\alpha=0.6$ (right) for various values of the dilution parameters ($d=0,0.25,0.95$). In the inset plots, we also reported the fraction $P(\Delta_i \ge 0)$ of events leading to positive pattern stability as a function of the dilution parameter for the same values of $\alpha$. Here, $M=50$ and $r=0.9$. In the second row, we reported the inverse coefficient of variation ($\text{CV}^{-1}$) as a function of $d$ for the empirical distributions of the pattern stability $\Delta_i$. The plots refer to $r=0.9$ (left) and $r=0.95$ (right), with a storage capacity of $\alpha=0.1$ (blue circles), $0.2$ (yellow squares), $0.3$ (green diamonds). The number of examples per class is $M=50$. The results are collect by setting $\bb\sigma^{(0)}$ to all of the ground-truths $\bb\xi^\mu$ and computing the pattern stability, for 50 different realizations of the training dataset (and, consequently, of the coupling matrix). In both cases, the network size is $N=1000$.}
    \label{fig:SNR_stability}
\end{figure}
\par\medskip
To analyze these results on a theoretical level, we proceed as in the usual Hopfield model splitting the pattern stability as
\begin{equation}
    \label{eq:delta_split}
    \Delta_i (\bb\xi^1) =  \sum_{j}\left(\frac1{NM} \sum_{A}\chi^{1,A}_i \chi^{1,A}_j \xi^1_i \xi^1_j\right) \xi^1_i \xi^1_j+\sum_{j}\left(\frac1{NM} \sum_{A,\mu >1}\chi^{\mu,A}_i \chi^{\mu,A}_j \xi^\mu_i \xi^\mu_j\right) \xi^1_i \xi^1_j, 
\end{equation}
where we separated the contribution of the pattern $\bb\xi^1$ to the unsupervised coupling matrix $\bb \Gamma^u$ from those of the remaining $K-1$ ground-truths. Since in our approach we included self-interactions (namely, $\Gamma^u_{i,i}\neq 0$), finite contributions to the signal can come also from the second term (this is clear by setting $i=j$ in the second quantity on the r.h.s.).\footnote{This actually holds also in the usual Hopfield model, since allowing for diagonal entries of the coupling matrix $\bb \Gamma$ would result in a higher stability of patterns, leading to a 1-step magnetization $m_1 = \text{erf}\big(\frac{1+\alpha}{\sqrt{2\alpha}}\big)$, rather than $m_1 = \text{erf}\big(\frac{1}{\sqrt{2\alpha}}\big)$ without self-interactions.}
Indeed, the terms making up the signal involve essentially the contributions of the coupling matrix due to examples at different spin sites $(i,j)$ but belonging to the same pattern (of the form $\chi^{1,A}_i\chi ^{1,A}_j$, giving contribution of order $(1-d)^2$ due to statistical independence) {\it and} the diagonal entries of $\bb\Gamma^{u}$ coming from all of the ground-truths (thus involving terms of the form $(\chi^{\mu,A}_i)^2$, whose magnitude is tuned by $1-d$). Conversely, the noise term will only receive contributions from uncorrelated pixels of the non-retrieved patterns (namely, $\chi^{\mu,A}_i \chi^{\mu,A}_j$ for $\mu >1$ and $i\neq j$), thus the factor $R$ will be of the order $1-d$ (i.e. the variance is of the order $(1-d)^2$). From this naive analysis, it is clear that first and second moments of the empirical spectral distribution will behave differently in terms of the dilution parameter. A cursory look at the histograms in Fig. \ref{fig:SNR_stability} confirms this expectations, as the general effect of dilution is to reduce both the signal (the mean) and the noise (the variance) of the pattern stability. Indeed, at low $\alpha$ and without dilution, the majority of the $\Delta_i$ are strictly positive, and increasing $d$ results in a simultaneous shift of the empirical distribution towards zero values and a drastic suppression of the variance; in this case, a higher $d$ results in a higher spin-flip chance (corresponding to $\Delta_i <0$), as also remarked by the inset in left upper plot. At higher $\alpha$ and zero dilution, the empirical distribution is instead very broad, resulting in a significant tail below zero. Increasing the dilution in this case results in a downsizing of the probability of observing $\Delta_i <0$, as highlighted by the inset plot in the right histogram. This effect is more clear by looking at the plots in the second row of Fig. \ref{fig:SNR_stability}, where we reported the inverse of coefficient  of variation (CV, which is essentially the inverse of the SNR) as a function of $d$ for various values of $\alpha$. At low $\alpha$, $\text{CV}^{-1}$ is a monotonously decreasing function of $d$, thus patterns are made less stable. For higher storage (above the threshold $\alpha_c=0.14$), diluting the training set is beneficial, and the argument of the error function will increase, thus signaling better stability performances (in terms of the 1-step magnetization $m_1$).
\begin{figure}[tb]
    \centering
   \includegraphics[width=\textwidth]{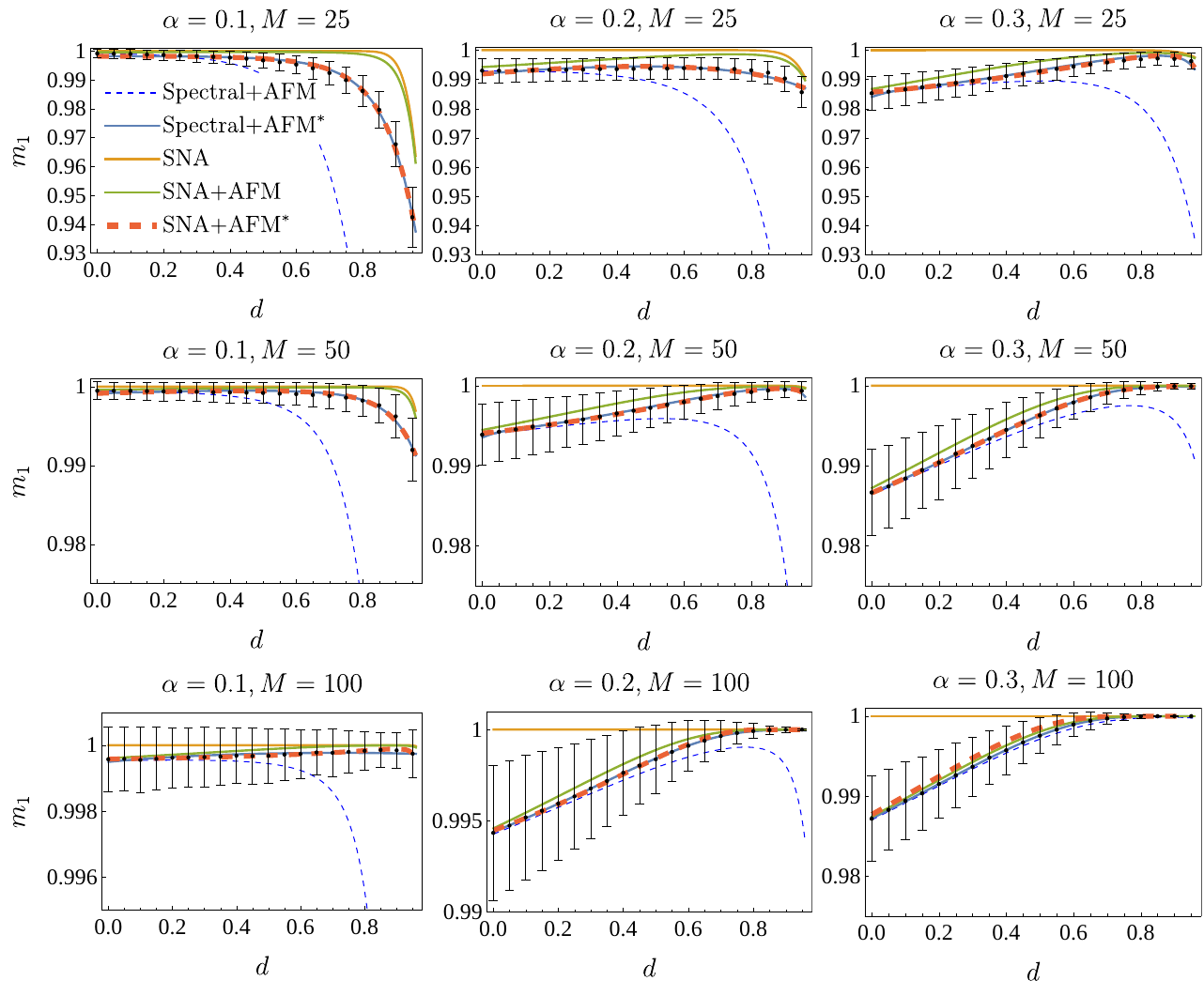}
    \caption{{\bfseries Comparison between signal-to-noise and spectral approach.} The plots report the empirical results for the 1-step magnetization $m_1$ as a function of $d$ (black markers) compared to five different theoretical predictions: spectral approach with AFM ($\text{Spectral+AFM}$, dashed blue line), the best-fitting version of the theoretical prediction of the spectral approach with AFM ($\text{Spectral+AFM}^*$, blue solide line), the naive SNR ($\text{SNR}$, yellow curve), SNR+AFM ($\text{SNR+AFM}$, green curve), and the best-fitting version of SNR+AFM ($\text{SNR+AFM}^*$, dashed thick red line). The nine plots refer to $\alpha=0.1,0.2,0.3$ (from left to right), and $M=25,50,100$ (from top to bottom). The empirical results at $r=0.9$ averaging over 50 different realizations of the training dataset (and, consequently, of the coupling matrix). The network size is $N=1000$.}
    \label{fig:SNR_stability_final}
\end{figure}
To conclude, we can perform a standard signal-to-noise analysis for the pattern stability $\Delta_i (\bb\xi^1)$ in the large $N$ limit, with the working hypothesis that isolated contributions in Eq. \eqref{eq:delta_split} are mutually uncorrelated. In particular, it is easy to verify that
$$
\sum_{j}\left(\frac1{NM} \sum_{A}\chi^{1,A}_i \chi^{1,A}_j \xi^1_i \xi^1_j\right) \xi^1_i \xi^1_j\sim r^2 (1-d)^2 +\mc O (N^{-1/2}),
$$
while
$$
\sum_{j}\big(\frac1{NM} \sum_{A,\mu >1}\chi^{\mu,A}_i \chi^{\mu,A}_j \xi^\mu_i \xi^\mu_j\big) \xi^1_i \xi^1_j\sim \alpha(1-d) +\sqrt{\frac{\alpha(1-d)^2}{M}}z_i,
$$
where again $z_i \sim \mathcal N(0,1)$. Thus, clearly $S\equiv \mathbb E_{\bb z} \Delta_i= r^2(1-d)^2+\alpha (1-d)$ and $R^2\equiv \mathbb E_{\bb z} \Delta_i-S^2 = \alpha M^{-1}(1-d)^2$. If the $\Delta_i$ at different spin site are independent, we can still approximate $m_1 = 2 \mathcal P(\Delta_i \ge 0 )-1=\text{erf}(S/\sqrt{2 R^2})$. This naive signal-to-noise analysis results in the same expression of the signal if compared to the spectral approach with the AFM, but fails in evaluating the second moment, as it gives a poor estimation of the noise (in particular, it is independent on $r$). We can improve this analysis by means of AFM
$$
\frac1M \sum_{A=1}^M \chi^{\mu,A}_i \chi^{\mu,A}_j \approx \phi^\mu_i \phi^\mu_j,
$$
for $i\neq j$. This procedure leads to the same signal $S=\alpha(1-d)+r^2(1-d)^2$, and a better evaluation of the noise, which reads as
$$
R\approx \sqrt{\alpha \Big((1-d)^4 r^4+(1-d)^2\frac{1-(1-d)^2 r^4}{M}\Big)}.
$$
This expression reproduces the one provided by naive SNA only for $r=0$, thus signaling that the latter only works for totally uncorrelated examples. The comparison between the numerical experiments and the theoretical predictions are reported in Fig. \ref{fig:SNR_stability_final}, where we reported the 1-step magnetization in five different settings: naive signal-to-noise (SNA), signal-to-noise with AFM (SNA+AFM and SNA+AFM$^*$, {\it vide infra}), spectral approach with AFM (Spectral+AFM and Spectral+AFM$^*$, again {\it vide infra}). SNA (yellow curve) and spectral-AFM (blue dashed line) fail in reproducing empirical data (black markers), while SNA+AFM properly working for sufficiently high $M$ and $\alpha$. However, as we also did in the main text, whenever AFM is used we can take the functional form of the theoretical 1-step magnetization seriously and fit the empirical data (these are SNA+AFM$^*$, dashed red curve, and Spectral+AFM$^*$, blue solid line). In this case, we consider $m_1$ as a function of $d$ and treat $\alpha$, $r$ and $M$ as tunable parameters.\footnote{Clearly, $\alpha$, $M$ and $r$ are parameters characterizing the training dataset, the best-fitting values have not physical meaning. The purpose here is to show that the theoretical predictions provided by SNR+AFM and the spectral approach have enough complexity to capture the qualitative behavior of the empirical results. For this reason, we do not report the best-fitting values for the best-fitting parameters.} With this procedure, both the SNR+AFM and the spectral approach matches perfectly the empirical results (with a coefficient of determination close to 1). These results hence suggests that AFM (besides making computations easier and providing an effective picture of the theory) allows to derive meaningful results capturing the (at least qualitatively correct) description of Hopfield-like networks in presence of non-trivial correlations between stored vectors.

\section{A more in-depth investigation on the mechanism of dilution}\label{app:mechanism}

In this Appendix, we further clarify the role of the dilution within the unsupervised setting, and aim to shed light on the mechanism underlying the positive effects of removing entries in the examples employed to build up the Hebbian coupling matrix. To this aim, we rely again on the deterministic evolution of the update rule \eqref{eq:dynamics} at zero temperature towards fixed point starting from the ground-truths themselves and from the test examples: the results are reported respectively in first and second column of Fig. \ref{fig:detail_stabgen} for various values of $M$ and $K$ as a function of the dilution parameter $d$. As already discussed in the main text, at low $K$ dilution is unnecessary or even harmful for the generalization purposes. Conversely, at high $K$ and sufficiently high $M$, dataset dilution becomes a beneficial strategy, and moderate $d$ ($\sim0.4\div 0.5$) exhibits a good trade-off between ground-truth stability and generalization performances of the model.
\begin{figure}[tb]
    \centering
    \includegraphics[width=0.8\linewidth]{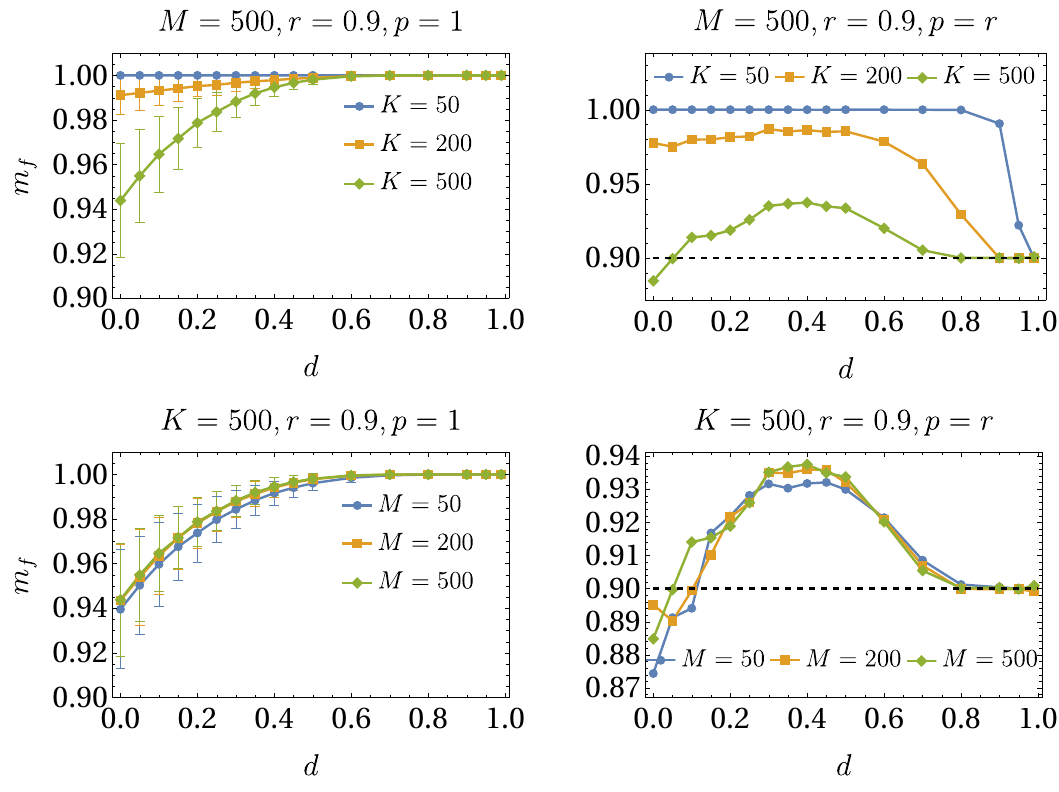}
    \caption{{\bfseries A detail of pattern stability and generalization capabilities of Hopfield model with diluted dataset.} The plots show a detail of numerical results about the pattern stability (left column) and generalization capabilities, namely retrieval capabilities of the model starting from a testing example (right column). The numerical results are reported as a function of $d$ for fixed $M=500$ and $r=0.9$ and various values of $K$ (first row), and fixed $K=500$ and $r=0.9$ for various values of $M$ (second row). The network size is fixed to $N=1000$, the numerical results are averaged over 20 different realizations of the training dataset for each value of $d$.}
    \label{fig:detail_stabgen}
\end{figure}\par\medskip

Let us focus on the attraction power of the ground-truths, select a pattern $\bb\xi^\mu$ and randomly flip a fraction $q$ of the bits. The perturbed configuration is then provided to the network as initial condition, which is allowed to evolve towards a fixed point. We then measure the overlap $m_f(q)$ between the final configuration and the pattern $\bb\xi^\mu$. The results are reported in Fig. \ref{fig:num_attract} for various values of $K$, $M$ and $d$ as a function of the initial overlap of the initial condition $m_0 (q)$. As expected, at low $K$ the undiluted case has best retrieval capabilities (as well as a large attraction basins), with the diluted counterpart matching its performances only for high number of stored training examples. This behavior aligns with standard theory of associative neural networks. However, as the number $K$ of patterns increases, we observe a qualitative shift: for $K/N$ slightly larger than the Hopfield critical storage capacity (e.g. $K=200$, second row), dilution begins to perform better at large $M$. When $K$ is far above the critical storage capacity of the Hopfield model (e.g. $K=500$, third row), diluted Hopfield model consistently outperforms the usual Hebbian prescription, and the performance gap increases with $M$. More remarkably, in this oversaturated regime the fraction of flipped entries leading to stable configurations (namely $m_f(q) = m_0 (q)$, corresponding to the intersection between the retrieval maps and the dashed black line) is always greater in the diluted case. The corresponding value $q^*$ identifies a sphere of radius $R^* = R(q^*)$ around the ground-truth that is stable under neural dynamics: configurations lying on this ball evolve into others on the sphere itself (not necessarily the initial condition), thus forming an invariant set under the dynamics. Also, the plots highlight that configurations inside the ball are typically pushed on its boundary -- resulting in a lower final overlap -- while those outside approach it. In other words, initial condition with sufficiently high correlation with a given ground-truth are pushed by neural dynamics towards invariant spheres by neural dynamics. This feature is typical of associative neural networks, see for instance \cite{mceliece1987capacity}, the crucial point here being that -- at high load -- the radius of the invariant sets decreases as dilution is introduced.

\begin{figure}[tb]
    \centering
    \includegraphics[width=0.95\linewidth]{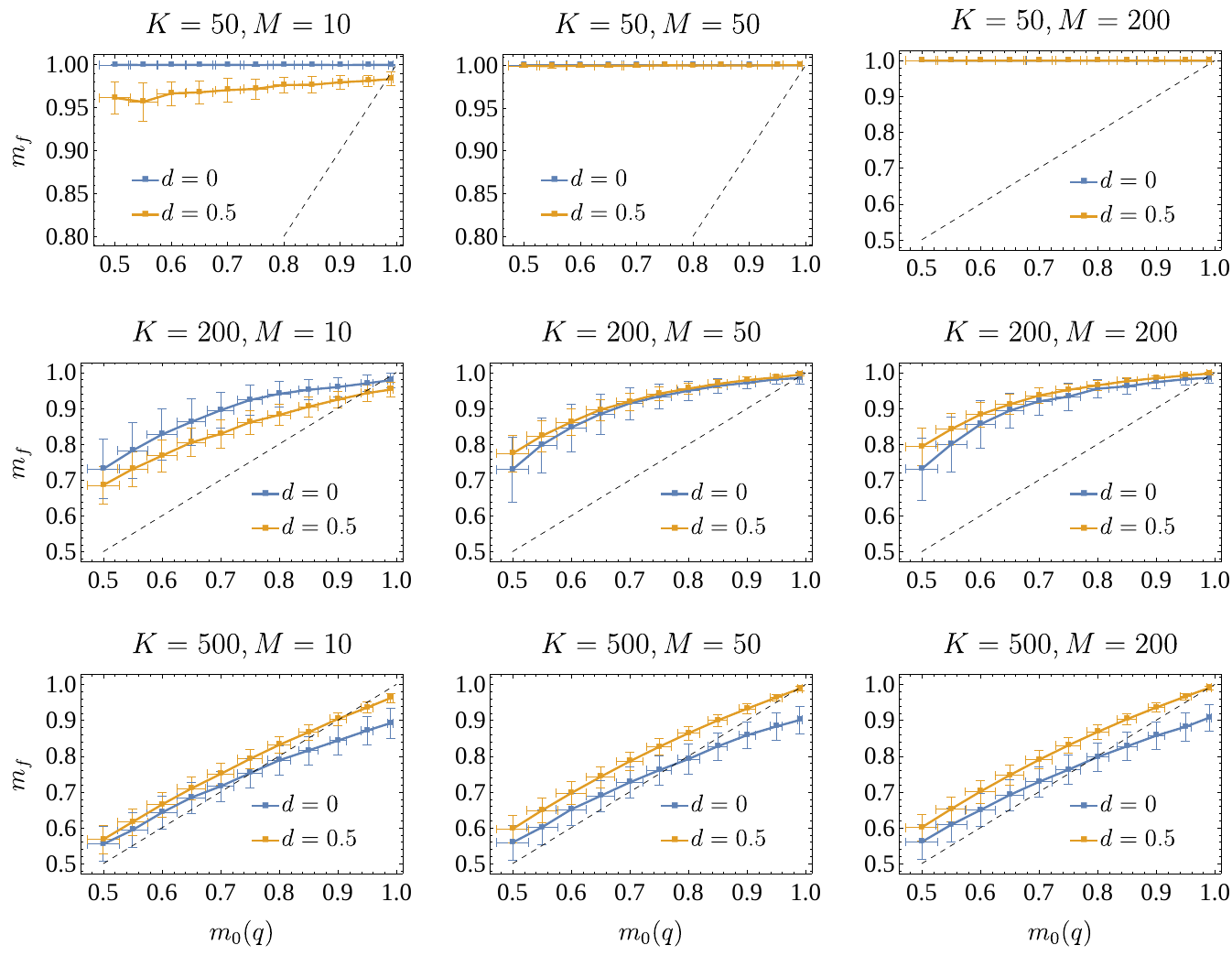}
  \caption{{\bfseries Retrieval maps in the unsupervised setting}. The plots report the results of the pattern attractiveness, namely the final magnetization $m_f$ under the relaxation to fixed points of the neural dynamics \eqref{eq:dynamics} starting from a perturbed version of the ground-truth with a fraction $q$ of random spin flips (with initial magnetization $m_0 (q)$). We report the dependence of $m_f$ on the initial overlap $m_0 (q)$ for various values of $K$ ($50$, first row; $200$, second row; $500$, third row), the number of examples per class $M$ ($10$, left; $50$, center; $200$, right), and the dilution parameter $d$ ($0$, blue squares; $0.5$, yellow ones). The network size is fixed to $N=1000$. The results are averaged over 500 different realizations of the couplings matrix for each point.
}
    \label{fig:num_attract}
\end{figure}

To better highlight this peculiar behavior, we focus on the extreme case $K=M=500$ and quality $r=0.9$, and analyze the behavior of $m_f-m_0$ as a function of the spin-flip fraction $q$ for various values of $d$. The results are reported in Fig. \ref{fig:attract_vs_q} (left plot). In this setting, invariant sets correspond to values of $q$ in the range $(0,0.5)$ where $m_f-m_0$ develops a zero.\footnote{The upper bound 0.5 for the spin flip fraction arise since, for $q>0.5$, the perturbed configuration becomes positively correlated to $-\bb \xi^\mu$ rather than $-\bb\xi^\mu$, and neural dynamics will drive the system towards the opposite of the pattern.} As shown in the figure, increasing the dilution not only improves the pattern reconstruction performances (as shown by the hierarchy of the curves $d=0,0.1,0.2$), but also shifts the zero of $m_f-m_0$ towards lower values of $q$. This signals again that the invariant sets in presence of dilution shrink the radius $R(q^*)$: the corresponding radius $R(q^*)$ is smaller for diluted datasets than for undiluted ones. Moreover, configurations with $q>q^*$ -- outside the ball delimited by the invariant set -- have higher magnetization than the initial overlap with the ground truth, with the situation being reversed starting inside, namely $q<q^*$: thus, configurations around the sphere of radius $R(q^*)$ are pushed towards it by neural dynamics. The phenomenon is schematized in the inset of the plots in Fig. \ref{fig:attract_vs_q} (left plot), and turns out to be qualitatively similar to the one observed in \cite{agliari2024spectral}.
\begin{figure}[h!]
    \centering
    \includegraphics[width=0.45\linewidth]{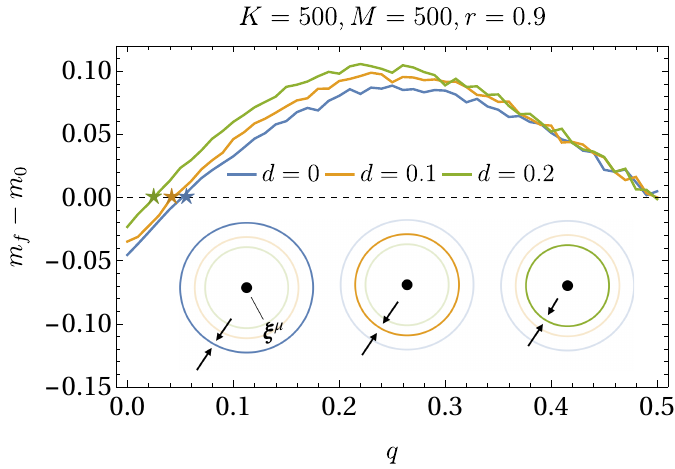}
    \includegraphics[width=0.45\linewidth]{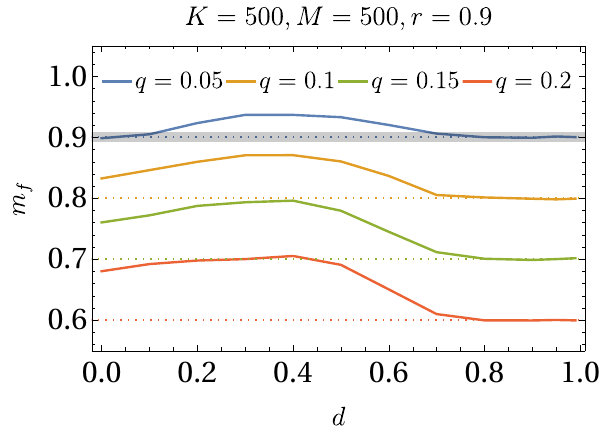}
    \caption{{\bfseries Effects of dilution on pattern reconstruction capabilities of the model.} The plots show the effect of dilution on the reconstruction capabilities of the Hopfield model starting from an initial condition with a fraction $q$ of spin-flips w.r.t. the corresponding ground-truth as $d$ is tuned. (Left) Numerical results for the difference $m_f-m_0$ between the final (i.e. after convergence to fixed point of the neural dynamics) and the initial overlap of the network configuration with the pattern are obtained for $K=M=500$ and $r=0.9$ and different values of the dilution, with $q\in[0,0.5]$ (where the initial condition exhibits a non-zero positive correlation with the pattern). The stars refer to the value of $q$ for each value of $d$ at which $m_f=m_0$, that is invariant sets of configurations. The inset shows a schematic representation of these invariant sets, with the arrow meaning that configurations inside or outside these balls are pushed towards these invariant sets (despite not relaxing in them). (Right) Numerical results for the final magnetization $m_f$ as a function of the dilution starting from initial conditions consisting in flipping a fraction $q$ of the bits in a given pattern, for various values of $q$. The horizontal dotted lines correspond to $1-2q$, namely the overlap of the initial condition with the associated pattern, while the gray thick line represents the radius of the validation sphere, namely at $q=0.05$, $1-2q =r$. At moderate dilution ($d\sim 0.4\div 0.5$) the model achieves the best reconstruction capabilities. In the experiments, the network size is fixed to $N=1000$, $K=M=500$ and $r=0.9$; numerical results are averaged over 20 different realizations of the training dataset.}
    \label{fig:attract_vs_q}
\end{figure}

As a side analysis, we also study the dependence of the model retrieval capabilities as a function of $d$ for different values of the spin-flip fraction $q$. The results -- which focus on extreme values of number of classes and examples, namely $K=M=500$, and (moderately) high quality $r=0.9$ -- are reported in Fig. \ref{fig:attract_vs_q} (right plot). For $q=0.05$, corresponding to an initial condition lying on the validation sphere\footnote{
For a fixed pattern $\bb\xi^\mu$, we define the {\it validation sphere} associated to that pattern as the set of all possible associated test examples for given quality $r$. For large $N$, the Hamming distance between the pattern and the validation example of the form $\bb x= \bb\eta \odot \bb\xi^\mu$ with $\eta_i \sim_{i.i.d.} Rad(r)$ concentrates at $R=(1-r)/2$. Hence, test examples is located on the Hamming sphere centered in $\bb\xi^\mu$ and radius $R$.} (the correlation between the initial condition with the ground-truth is $1-2q=r=0.9$), the undiluted model is not able to generalize, as the final condition falls on the validation sphere itself. Introducing dilution results in better reconstruction capabilities of the ground-truth, up to a critical value of $d\approx0.4$, and further diluting training points leads to a loss of the retrieval capabilities: for extreme dilution (namely $d\gtrsim0.8$), the final overlap again equals the initial correlation of the neural configurations. A similar behavior is met for higher value of $q$, i.e. initial condition farther from the hidden pattern: after a beneficial effect of dilution, above $d\gtrsim 0.8$ the model is no longer able to accomplish pattern reconstruction. Since this features is independent on $q$, it signals a flaw hidden in the coupling matrix, consistent with the hypothesis that -- for extreme dilution -- the latter trivializes to a diagonal (actually, a multiple of the identity) matrix. As a result, every initial configuration becomes fixed point for the neural dynamics, and the network loses its pattern reconstruction capabilities.


The above analysis highlights the importance to check the consequences of dilution on the algebraic properties of the coupling matrix for better understanding the mechanism through which an extreme dilution leads to trivialization of the neural dynamics. The diagonal entries in the coupling matrix are $
\Gamma^u_{i,i}= \frac1{NM}\sum_{\mu A} (\xi^\mu_i )^2 (\chi^{\mu,A}_i )^2=\frac1{NM}\sum_{\mu A}(\chi^{\mu,A}_i )^2$ since $(\xi^\mu_i )^2=1$, while the off-diagonal terms are
$\Gamma^u_{i,j}= \frac1{NM}\sum_{\mu} \xi^\mu_i \xi^\mu_j \sum_{A}\chi^{\mu,A}_i \chi^{\mu,A}_j$. In the first case, $\mathcal P ((\chi^{\mu,A}_i)^2\neq 0) = 1-d$, thus, for $KM$ large enough, the total contribution of the sum is $\approx MK(1-d)$, therefore $\Gamma^u_{i,i} \approx \alpha (1-d)$ (with fluctuations of the order $(NM)^{-1/2}$). As for the off-diagonal entries, due to spatial independence of the training points, we have $P(\chi^{\mu,A}_i \chi^{\mu,A}_j \neq 0)= (1-d)^2$, thus the contribution of the sum over the examples is $\approx MK(1-d)^2$. Hence, dilution implies an uneven reduction of autointeractions and synaptic strength at different sites. For very high $d$, the leading behavior of the coupling matrix will then be $\bb \Gamma^u \approx \alpha (1-d) \bb 1+\mathcal O (1-d)^2$. To quantify the impact of the dilution mechanism, we introduce two metrics. First, we recall the definition of local pattern stability: $\Delta_i = \sum_{j=1}^N \Gamma^u_{i,j}\xi^\mu_i= \Gamma^u_{i,i}\xi^\mu_i + \sum_{j\neq i} \Gamma^u_{i,j}\xi^\mu_j$. In order for this quantity to be non-trivial, we should ensure that contributions coming from the diagonal and off-diagonal entries are of the same order for all $\mu$, namely $\sum_{j\neq i }\Gamma^u_{ij} /\Gamma^u_{ii} = \mc O(1)$. If instead $\sum_{j\neq i }\Gamma^u_{ij} /\Gamma^u_{ii} \ll1$, the contribution of the diagonal is large as a result of dilution, making neural dynamics trivial. Then, we study the quantity
\begin{equation}
    \label{eq:R}
    R(d) = \Big\vert\frac{\sum_{j\neq i }\Gamma^u_{ij}(d)}{\Gamma^u_{ii}(d)}\Big\vert,
\end{equation}
as a function of $d$. This ratio provides a measure of the relative strength of off-diagonal to diagonal contributions in the coupling matrix and is thus indicative of the degree of interaction among different spins. An alternative way to quantify the deviation from the diagonal matrix is the following: the eigenvalues of a diagonal matrix are the entries on its diagonal, with associated eigenvectors $\bb v ^{(a)}$ with zeros everywhere but the $a$-th components, which is equal to 1. However, since entries in the coupling matrix goes at least as $c( 1-d)$, we define the renormalized coupling matrix $\tilde {\bb \Gamma}$ given by $\bb \Gamma^u = (1-d) \tilde {\bb \Gamma}$, with diagonal entries being of order 1 for all $d\in [0,1]$.\footnote{Clearly, this rescaling has no effect on the quantity $R(d)$.} Then, deviations from the diagonal structure can be decoded from the quantity 
\begin{equation}
\label{eq:pi}
   \pi(d) = \max_{a}\lVert\tilde {\bb \Gamma}\cdot \bb v^a -\tilde \Gamma_{a,a} \bb v^a\lVert , 
\end{equation}
which we refer to as the proximality to the diagonal structure. Clearly, if $\pi(d)\ll1$, the coupling matrix is nearly diagonal, otherwise it consistently deviates from this structure. The numerical results for $M=500$, $r=0.9$ and various values of $K$ as a function of $d$ are reported in Fig. \ref{fig:proximality}. As is expected, increasing dilution leads to a decrease in both the proximality and the relative ratio of off-diagonal and diagonal entries in the coupling matrix. Values close to zero for both the metrics are reached for extreme dilution, specifically $d\gtrsim0.8$, where trivialization occurs. This behavior corroborates the findings on reconstruction power reported in Fig. \ref{fig:attr_vs_d}, thus confirming that every configurations there becomes fixed point for neural dynamics. It is also worth noticing that, in the regime where dilution improves reconstruction capabilities, the coupling matrix remains far from the diagonal structure. Then, at intermediate values of dilution, the noise is effectively mitigated, the neural dynamics is attracted by invariant sets close to the patterns (as discussed in Fig. \ref{fig:attract_vs_q}) without compromising the non-trivial structure of the network interactions. In this regime, the neural dynamics continues to operate meaningfully, avoiding the degeneracy associated with overly sparse (i.e., near-diagonal) couplings.

\begin{figure}
    \centering
    \includegraphics[width=\linewidth]{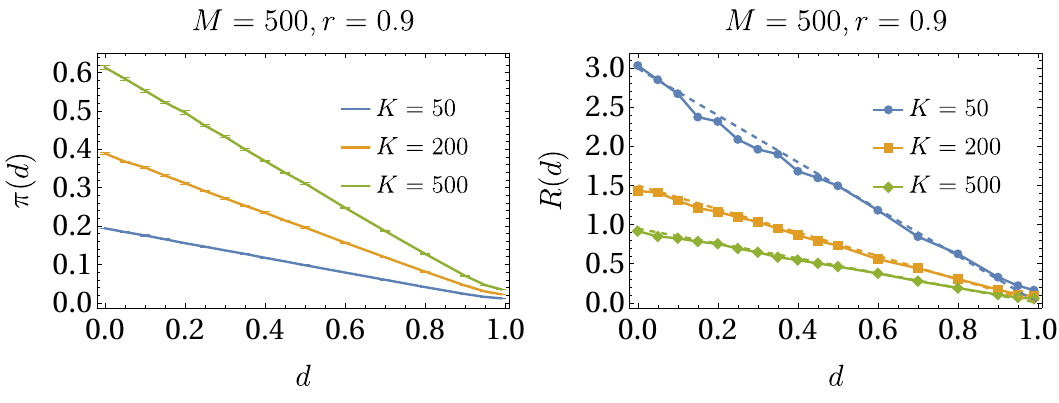}
    \caption{{\bfseries Analysis of the algebraic structure of the coupling matrix as a function of the dilution.} The two plots reports the behavior of the proximality $\pi(d)$ (left, see Eq. \eqref{eq:pi}) and the off-diagonal/diagonal ratio $R(d)$ (right, see Eq. \eqref{eq:R}) for $M=500$, $r=0.9$ and various values of $K=50,200,500$. In the right plot, the dashed lines corresponds to linear fit of the form $R_{fit} = A(1-d)$, in agreement with the scalings $\Gamma^u_{i,i}\propto (1-d)$ and $\Gamma^u_{i,j} \propto (1-d)^2$ for $i\neq j$. The network size is fixed to $N=1000$, the numerical results are averaged over 20 different realization of the training dataset for each value of $d$. Both the metrics behaves as $\sim 1-d$.}
    \label{fig:proximality}
\end{figure}

\bibliographystyle{ieeetr}
\bibliography{NeuralNetworks}

\begin{thebibliography}{10}

\bibitem{little2019statistical}
R.~J. Little and D.~B. Rubin, {\em Statistical analysis with missing data}.
\newblock John Wiley \& Sons, 2019.

\bibitem{shorten2019survey}
C.~Shorten and T.~M. Khoshgoftaar, ``A survey on image data augmentation for
  deep learning,'' {\em Journal of big data}, vol.~6, no.~1, pp.~1--48, 2019.

\bibitem{feng2021survey}
S.~Y. Feng, V.~Gangal, J.~Wei, S.~Chandar, S.~Vosoughi, T.~Mitamura, and
  E.~Hovy, ``A survey of data augmentation approaches for nlp,'' {\em arXiv
  preprint arXiv:2105.03075}, 2021.

\bibitem{huber1992robust}
P.~J. Huber, ``Robust estimation of a location parameter,'' in {\em
  Breakthroughs in statistics: Methodology and distribution}, pp.~492--518,
  Springer, 1992.

\bibitem{barron2019general}
J.~T. Barron, ``A general and adaptive robust loss function,'' in {\em
  Proceedings of the IEEE/CVF conference on computer vision and pattern
  recognition}, pp.~4331--4339, 2019.

\bibitem{breiman2001random}
L.~Breiman, ``Random forests,'' {\em Machine learning}, vol.~45, pp.~5--32,
  2001.

\bibitem{prechelt2002early}
L.~Prechelt, ``Early stopping-but when?,'' in {\em Neural Networks: Tricks of
  the trade}, pp.~55--69, Springer, 2002.

\bibitem{srivastava2014dropout}
N.~Srivastava, G.~Hinton, A.~Krizhevsky, I.~Sutskever, and R.~Salakhutdinov,
  ``Dropout: a simple way to prevent neural networks from overfitting,'' {\em
  The journal of machine learning research}, vol.~15, no.~1, pp.~1929--1958,
  2014.

\bibitem{bishop1995training}
C.~M. Bishop, ``Training with noise is equivalent to tikhonov regularization,''
  {\em Neural computation}, vol.~7, no.~1, pp.~108--116, 1995.

\bibitem{goodfellow2014explaining}
I.~J. Goodfellow, J.~Shlens, and C.~Szegedy, ``Explaining and harnessing
  adversarial examples,'' {\em arXiv preprint arXiv:1412.6572}, 2014.

\bibitem{chen2020simple}
T.~Chen, S.~Kornblith, M.~Norouzi, and G.~Hinton, ``A simple framework for
  contrastive learning of visual representations,'' in {\em International
  conference on machine learning}, pp.~1597--1607, PmLR, 2020.

\bibitem{devlin2019bert}
J.~Devlin, M.-W. Chang, K.~Lee, and K.~Toutanova, ``Bert: Pre-training of deep
  bidirectional transformers for language understanding,'' in {\em Proceedings
  of the 2019 conference of the North American chapter of the association for
  computational linguistics: human language technologies, volume 1 (long and
  short papers)}, pp.~4171--4186, 2019.

\bibitem{EmergencySN}
E.~Agliari, F.~Alemanno, A.~Barra, and G.~De~Marzo, ``The emergence of a
  concept in shallow neural networks,'' {\em Neural Networks}, vol.~148,
  pp.~232--253, 2022.

\bibitem{Leuzzi_2022}
L.~Leuzzi, A.~Patti, and F.~Ricci-Tersenghi, ``A quantitative analysis of a
  generalized hopfield model that stores and retrieves mismatched memory
  patterns,'' {\em Journal of Statistical Mechanics: Theory and Experiment},
  vol.~2022, p.~073301, aug 2022.

\bibitem{alemanno2023supervised}
F.~Alemanno, M.~Aquaro, I.~Kanter, A.~Barra, and E.~Agliari, ``Supervised
  {H}ebbian learning,'' {\em Europhysics Letters}, vol.~141, no.~1, p.~11001,
  2023.

\bibitem{negri2024memorization}
B.~Pham, G.~Raya, M.~Negri, M.~J. Zaki, L.~Ambrogioni, and D.~Krotov,
  ``Memorization to generalization: The emergence of diffusion models from
  associative memory,'' 2024.

\bibitem{negri2024random}
M.~Negri, C.~Lauditi, G.~Perugini, C.~Lucibello, and E.~M. Malatesta, ``Random
  {F}eatures {H}opfield {N}etworks generalize retrieval to previously unseen
  examples,'' in {\em Associative Memory {\&} Hopfield Networks in 2023}, 2023.

\bibitem{catania2025theoretical}
G.~Catania, A.~Decelle, C.~Furtlehner, and B.~Seoane, ``A theoretical framework
  for overfitting in energy-based modeling,'' {\em arXiv preprint
  arXiv:2501.19158}, 2025.

\bibitem{Fontanari-1990}
J.~Fontanari, ``Generalization in a {H}opfield network,'' {\em Journal of
  Physics France}, vol.~51, pp.~2421--2430, 1990.

\bibitem{Litinskii}
L.~Litinskii, ``Generalization in the {H}opfield model,'' in {\em IJCNN'01.
  International Joint Conference on Neural Networks. Proceedings (Cat.
  No.01CH37222)}, vol.~1, pp.~65--70 vol.1, 2001.

\bibitem{Theumann}
P.~R. Krebs and W.~K. Theumann, ``Generalization in a hopfield network with
  noise,'' {\em Journal of Physics A: Mathematical and General}, vol.~26,
  p.~3983, aug 1993.

\bibitem{regularizationdreaming}
E.~Agliari, M.~Aquaro, F.~Alemanno, and A.~Fachechi, ``Regularization,
  early-stopping and dreaming: a hopfield-like setup to address generalization
  and overfitting,'' {\em Neural Networks}, vol.~177, p.~106389, 2024.

\bibitem{fachechi2019dreaming}
A.~Fachechi, E.~Agliari, and A.~Barra, ``Dreaming neural networks: forgetting
  spurious memories and reinforcing pure ones,'' {\em Neural Networks},
  vol.~112, pp.~24--40, 2019.

\bibitem{leonelli2021effective}
F.~E. Leonelli, E.~Agliari, L.~Albanese, and A.~Barra, ``On the effective
  initialisation for restricted {B}oltzmann machines via duality with
  {H}opfield model,'' {\em Neural Networks}, vol.~143, pp.~314--326, 2021.

\bibitem{AABD-NN2022}
E.~Agliari, F.~Alemanno, A.~Barra, and G.~De~Marzo, ``The emergence of a
  concept in shallow neural networks,'' {\em Neural Networks}, vol.~148,
  pp.~232--253, 2022.

\bibitem{AAKBA-EPL2023}
M.~Aquaro, F.~Alemanno, I.~Kanter, A.~Barra, and E.~Agliari, ``Supervised
  {H}ebbian learning,'' {\em Europhysics Letters - Perspective}, vol.~141,
  p.~11001, 2023.

\bibitem{Picco}
A.~Bovier, V.~Gayrard, and P.~Picco, ``Artiﬁcial neural networks. a review
  from physical and mathematical points of view,'' {\em Annales de l’I. H.
  P., section A}, vol.~64, no.~3, pp.~289--307, 1996.

\bibitem{Coolen}
A.~Coolen, R.~K\"{u}hn, and P.~Sollich, {\em Theory of neural information
  processing systems}.
\newblock Oxford University Press, 2005.

\bibitem{Kohonen-1972}
T.~Kohonen and M.~Ruohonen, ``Representation of {A}ssociated {D}ata by {M}atrix
  {O}perators,'' {\em IEEE Transaztions on Computers}, 1973.

\bibitem{Gardner}
E.~Gardner, ``Multiconnected neural network models,'' {\em Journal of Physics
  A: General Physics}, vol.~20, 1987.

\bibitem{FAB-NN2019}
A.~Fachechi, E.~Agliari, and A.~Barra, ``Dreaming neural networks: Forgetting
  spurious memories and reinforcing pure ones,'' {\em Neural Networks},
  vol.~112, pp.~24--40, 2019.

\bibitem{AABF-JStat2019}
E.~Agliari, F.~Alemanno, A.~Barra, and A.~Fachechi, ``Dreaming neural networks:
  rigorous results,'' {\em Journal of Statistical Mechanics}, p.~083503, 2019.

\bibitem{LAD}
E.~Agliari, F.~Alemanno, M.~Aquaro, A.~Barra, F.~Durante, and I.~Kanter,
  ``Hebbian dreaming for small datasets,'' {\em Neural Networks}, vol.~173,
  p.~106174, 2024.

\bibitem{Hebb-1949}
D.~Hebb, {\em The Organization of Behavior: A Neuropsychological Theory.}
\newblock New York, NY: John Wiley \& Sons, 1949.

\bibitem{hopfield1982neural}
J.~J. Hopfield, ``Neural networks and physical systems with emergent collective
  computational abilities.,'' {\em Proceedings of the national academy of
  sciences}, vol.~79, no.~8, pp.~2554--2558, 1982.

\bibitem{AGS}
D.~J. Amit, H.~Gutfreund, and H.~Sompolinsky, ``Storing infinite numbers of
  patterns in a spin-glass model of neural networks,'' {\em Physical Review
  Letters}, vol.~55, pp.~1530--1533, 1985.

\bibitem{agliari2012multitasking}
E.~Agliari, A.~Barra, A.~Galluzzi, F.~Guerra, and F.~Moauro, ``Multitasking
  associative networks,'' {\em Physical review letters}, vol.~109, no.~26,
  p.~268101, 2012.

\bibitem{agliari2013parallel}
E.~Agliari, A.~Barra, A.~De~Antoni, and A.~Galluzzi, ``Parallel retrieval of
  correlated patterns: {F}rom hopfield networks to {B}oltzmann machines,'' {\em
  Neural Networks}, vol.~38, pp.~52--63, 2013.

\bibitem{AABCT-JPA2013a}
E.~Agliari, A.~Annibale, A.~Barra, A.~Coolen, and D.~Tantari, ``Immune
  networks: multitasking capabilities at medium load,'' {\em Journal of Physics
  A}, vol.~46, p.~335101, 2013.

\bibitem{AABCT-JPA2013b}
E.~Agliari, A.~Annibale, A.~Barra, A.~Coolen, and D.~Tantari, ``Immune
  networks: multitasking capabilities near saturation,'' {\em Journal of
  Physics A}, vol.~46, p.~415003, 2013.

\bibitem{AABR-JStat2023}
E.~Agliari, A.~Alessandrelli, A.~Barra, and F.~Ricci-Tersenghi, ``Parallel
  learning by multitasking neural networks,'' {\em J. Stat. Mech.}, p.~113401,
  2023.

\bibitem{agliari2024spectral}
E.~Agliari, A.~Fachechi, and D.~Luongo, ``A spectral approach to {H}ebbian-like
  neural networks,'' {\em Applied Mathematics and Computation}, vol.~474,
  p.~128689, 2024.

\bibitem{Amit}
D.~Amit, {\em Modeling brain function: The world of attractor neural networks}.
\newblock Cambridge university press, 1989.

\bibitem{Vinci-2025}
G.~V. Vinci, A.~Galluzzi, and M.~Mattia, ``Beyond catastrophic forgetting in
  associative networks with self-interactions,'' {\em arXiv preprint
  arXiv:2504.04560}, 2025.

\bibitem{bai2010spectral}
Z.~Bai and J.~W. Silverstein, {\em Spectral analysis of large dimensional
  random matrices}, vol.~20.
\newblock Springer, 2010.

\bibitem{pinelis2016optimal}
I.~Pinelis and R.~Molzon, ``Optimal-order bounds on the rate of convergence to
  normality in the multivariate delta method,'' 2016.

\bibitem{horn2012matrix}
R.~A. Horn and C.~R. Johnson, {\em Matrix analysis}.
\newblock Cambridge university press, 2012.

\bibitem{bernstein2020distance}
J.~Bernstein, A.~Vahdat, Y.~Yue, and M.-Y. Liu, ``On the distance between two
  neural networks and the stability of learning,'' {\em Advances in Neural
  Information Processing Systems}, vol.~33, pp.~21370--21381, 2020.

\bibitem{mceliece1987capacity}
R.~McEliece, E.~Posner, E.~Rodemich, and S.~Venkatesh, ``The capacity of the
  hopfield associative memory,'' {\em IEEE transactions on Information Theory},
  vol.~33, no.~4, pp.~461--482, 1987.

\end{thebibliography}
\end{document}